\documentclass[11pt, onecolumn]{article}
\usepackage[top=1in, bottom=1in, left=1.25in, right=1.25in]{geometry}

\usepackage{amsfonts}
\usepackage{amsmath,amssymb}
\usepackage{graphicx}
\usepackage{color, soul}
\usepackage{algorithm,algorithmic}
\usepackage{bm}
\usepackage{booktabs}
\usepackage{flushend}
\usepackage{tikz}
\usetikzlibrary{arrows}
\usepackage{subfigure}

\usepackage[amsmath,thmmarks]{ntheorem}
\usepackage{theorem}

\newtheorem{cor}{Corollary}
\newtheorem{lem}{Lemma}
\newtheorem{defi}{Definition}
\newtheorem{rem}{Remark}
\newtheorem{thm}{Theorem}

\theoremheaderfont{\sc}\theorembodyfont{\upshape}
\theoremstyle{nonumberplain}
\theoremseparator{}
\theoremsymbol{\rule{1ex}{1ex}}
\newtheorem{proof}{Proof}

\newcommand{\aff}{{\rm aff}}
\newcommand{\oaff}{\overline{{\rm aff}}}
\newcommand{\set}{\mathcal}
\newcommand{\od}{\overline{{D}}}
\newcommand{\affX}{\aff_{\set X}}
\newcommand{\affXS}{\aff_{\set X}^2}
\newcommand{\affY}{\aff_{\set Y}}
\newcommand{\affYS}{\aff_{\set Y}^2}
\newcommand{\oaffYS}{\oaff_{\set Y}^2}

\newcommand{\dX}{D_{\set X}}
\newcommand{\dXS}{D_{\set X}^2}
\newcommand{\dY}{D_{\set Y}}
\newcommand{\dYS}{D_{\set Y}^2}
\newcommand{\odYS}{\od_{\set Y}^2}


\begin{document}

\title{Restricted Isometry Property of Gaussian Random Projection for Finite Set of Subspaces}

\author{Gen Li and Yuantao Gu%
\thanks{The authors are with the Department of Electronic Engineering and Tsinghua National Laboratory for Information Science and Technology (TNList), Tsinghua University, Beijing 100084, China. The corresponding author of this work is Y. Gu  (E-mail:\,gyt@tsinghua.edu.cn).}}

\date{submitted April 7, 2017, revised August 11, 2017, accepted November 8, 2017\\
This paper will be published on \emph{IEEE Transactions on Signal Processing.}}

\maketitle

\begin{abstract}
Dimension reduction plays an essential role when decreasing the complexity of solving large-scale problems.
The well-known Johnson-Lindenstrauss (JL) Lemma and Restricted Isometry Property (RIP) admit the use of random projection to reduce the dimension while keeping the Euclidean distance, which leads to the boom of Compressed Sensing and the field of sparsity related signal processing. 
Recently, successful applications of sparse models in computer vision and machine learning have increasingly hinted that the underlying structure of high dimensional data looks more like a union of subspaces (UoS). 
In this paper, motivated by JL Lemma and an emerging field of Compressed Subspace Clustering (CSC), we study for the first time the RIP of Gaussian random matrices for the compression of two subspaces based on the generalized projection $F$-norm distance. 
We theoretically prove that with high probability the \emph{affinity} or \emph{distance} between two projected subspaces are concentrated around their estimates.
When the ambient dimension after projection is sufficiently large, the affinity and distance between two subspaces almost remain unchanged after random projection.
Numerical experiments verify the theoretical work.

{\bf Keywords} Johnson-Lindenstrauss Lemma, Restricted Isometry Property, Gaussian random matrix, low-dimensional subspaces, principal angles, affinity, projection $F$-norm distance, compression
\end{abstract}

\section{Introduction}

In the big data era we confront with large-scale problems dealing with data points or features in high dimensional vector spaces. 
In the enduring effort of trying to decrease the complexity of solving such large problems, dimension reduction has played an essential role. 
The well-known Johnson-Lindenstrauss (JL) Lemma \cite{Johnson1984Extensions, dasgupta1999elementary} and the Restricted Isometry Property (RIP) \cite{Candes2005Decoding, Cand2008The, Baraniuk2015A} allow the use of random projection to reduce the space dimension while keeping the Euclidean distance between any two data points, which leads to the boom of Compressed Sensing (CS) and the field of sparsity related signal processing \cite{Donoho2006Compressed, Candes2006Robust, Aeron2010Information, candes2007sparsity, eldar2012compressed}.

Typically, the problem of CS is described as
$$
{\bf y}={\bm \Phi}{\bf x},
$$
where ${\bf x} \in \mathbb{R}^N$ is a $k$-sparse signal we wish to recover, ${\bf y} \in \mathbb{R}^n, n < N$ is the available measurement, and ${\bm \Phi} \in \mathbb{R}^{n \times N}$ is a known projection matrix. 
In order to sufficiently ensure robust recovery of the original signal, the projection matrix should approximately preserve the distance between any two $k$-sparse signals. 
Specifically, JL Lemma states that, for any set $\set V$ of $L$ points in $\mathbb{R}^N$, if $n$ is a positive integer such that
$$
n \ge 4\left(\frac{\varepsilon^2}{2} - \frac{\varepsilon^3}{3}\right)^{-1}\!\!{\rm ln} L, 
$$
then there exists a map $f : \mathbb{R}^N \to \mathbb{R}^n$, such that for all ${\bf x}_1, {\bf x}_2 \in {\set V}$,
$$
(1\!-\!\varepsilon)\|{\bf x}_1 \!-\! {\bf x}_2\|_2^2 \le \|f({\bf x}_1) \!-\! f({\bf x}_2)\|_2^2 \le (1\!+\!\varepsilon)\|{\bf x}_1 \!-\! {\bf x}_2\|_2^2, 
$$
where $0 < \varepsilon < 1$ is a constant. 
Moreover, RIP is a generalization of this lemma. 
We say that the projection matrix ${\bm \Phi}$ satisfies RIP of order $k$ with $\delta_k$ as the smallest nonnegative constant, such that
$$
(1\!-\!\delta_k)\|{\bf x}_1 \!-\! {\bf x}_2\|_2^2 \le \|{\bm \Phi}{\bf x}_1 \!-\! {\bm \Phi}{\bf x}_2\|_2^2 \le (1\!+\!\delta_k)\|{\bf x}_1 \!-\! {\bf x}_2\|_2^2.
$$
holds for any two $k$-sparse vectors ${\bf x}_1$ and ${\bf x}_2$.

In CS, we generally construct  the measurement matrix by selecting ${\bm \Phi}$ as a random matrix. 
For example, we draw the matrix elements $\phi_{ij}$ independently from Gaussian distribution $\mathcal{N}(0,1/n)$ \cite{Donoho2006Compressed, Candes2006Robust, baraniuk2006johnson}. 
More rigorously, using concentration of measure arguments \cite{levy1951problemes, achlioptas2001database, ledoux2005concentration}, ${\bm \Phi}$ is shown to have the RIP with high probability if $n \ge ck{\rm ln}(N/k)$, with $c$ a small constant. 
In addition, there are theoretical results showing some angle-preserving properties as well \cite{Haupt2007A, Chang2012Achievable}.

Furthermore, in \cite{Gedalyahu2010Time, Eldar2009Robust, wang2013fast}, the signals of interest have been extended from conventional sparse vectors to the vectors that belong to a union of subspaces (UoS). 
Nowadays, UoS becomes an important topic \cite{Eldar2009Robust}, and plays significant role in many subfields of CS, such as multiple measurement vector \cite{chen2006theoretical, Cotter2005Sparse} and block sparse recovery \cite{eldar2010block, Duarte2011Structured}. 
It has been proved in \cite{Davenport2010Signal, Blumensath2009Sampling, agarwal2007embeddings, magen2002dimensionality} that, with high probability the random projection matrix ${\bm \Phi}$ can preserve the distance between two signals belonging to a UoS. 
Recently, the stable embedding property has been extended to signals modeled as low-dimensional Riemannian submanifolds in Euclidean space \cite{Eftekhari2013New, Baraniuk2010Random, Yap2013Stable}.

\subsection{Motivation}

In the era of data deluge, labelling huge amount of large scale data can be time-consuming, costly, and sometimes intractable. 
As a consequence, unsupervised learning attracts increasing attention in recent years. 
One such method emerging recently, subspace clustering (SC) \cite{Elhamifar2009Sparse, soltanolkotabi2012geometric, elhamifar2013sparse, Heckel2015Robust}, that depicts the latent structure of a variety of data as a union of subspaces (Fig. \ref{fig:showCSC} (a) and (b)), has been shown to be powerful in a wide range of applications, including motion segmentation, face clustering, and anomaly detection. It also offers great potential for previously less explored datasets, such as network data, gene series, and medical images.

\begin{figure}
  \centering
  \subfigure[]{
    \label{fig:showCSC:a} 
    \includegraphics[width=0.3\textwidth]{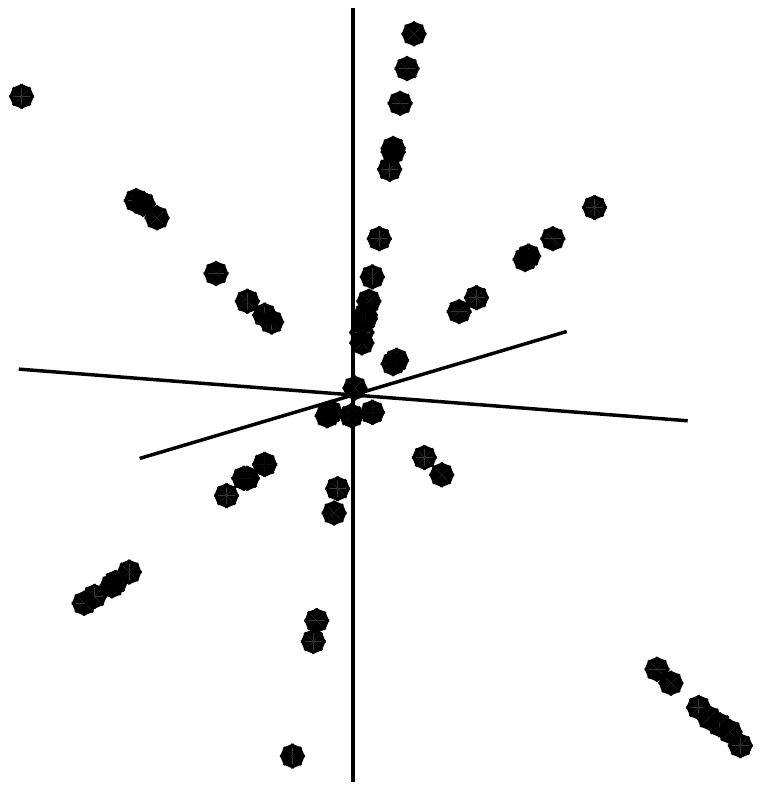}}
  \subfigure[]{
    \label{fig:showCSC:b} 
    \includegraphics[width=0.3\textwidth]{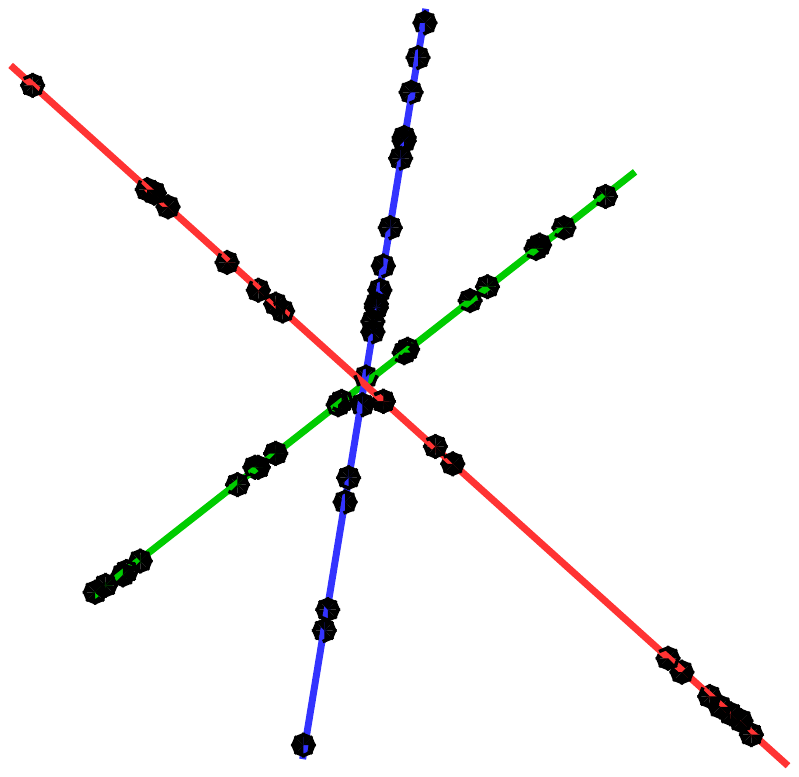}}\\
  \subfigure[]{
    \label{fig:showCSC:c} 
    \includegraphics[width=0.3\textwidth]{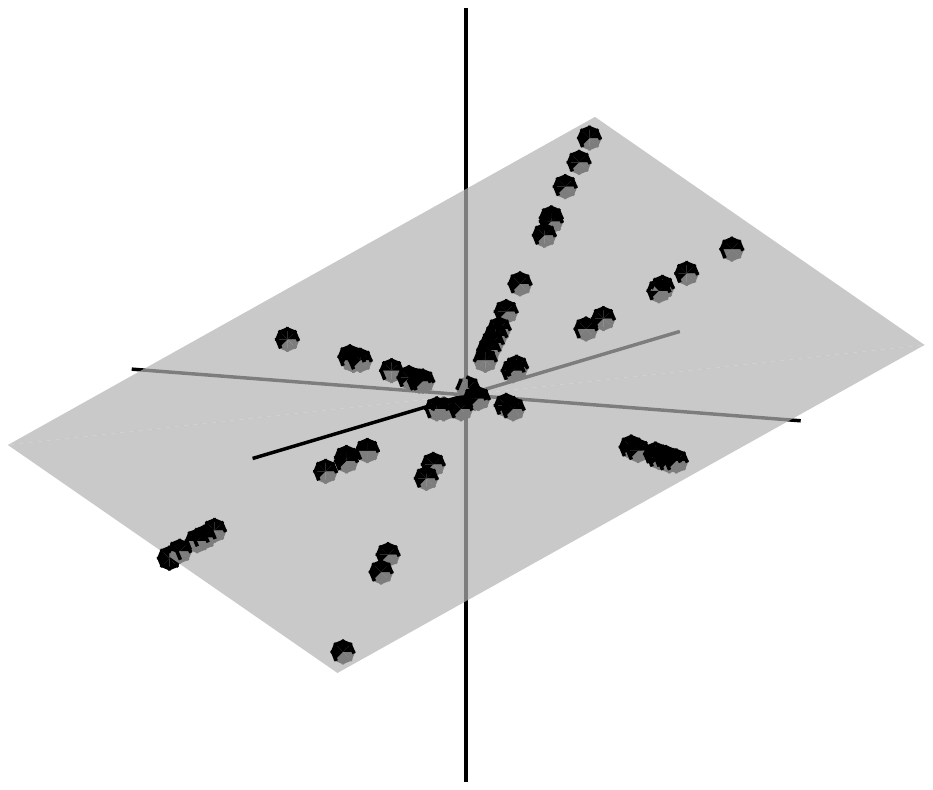}}
  \subfigure[]{
    \label{fig:showCSC:d} 
    \includegraphics[width=0.3\textwidth]{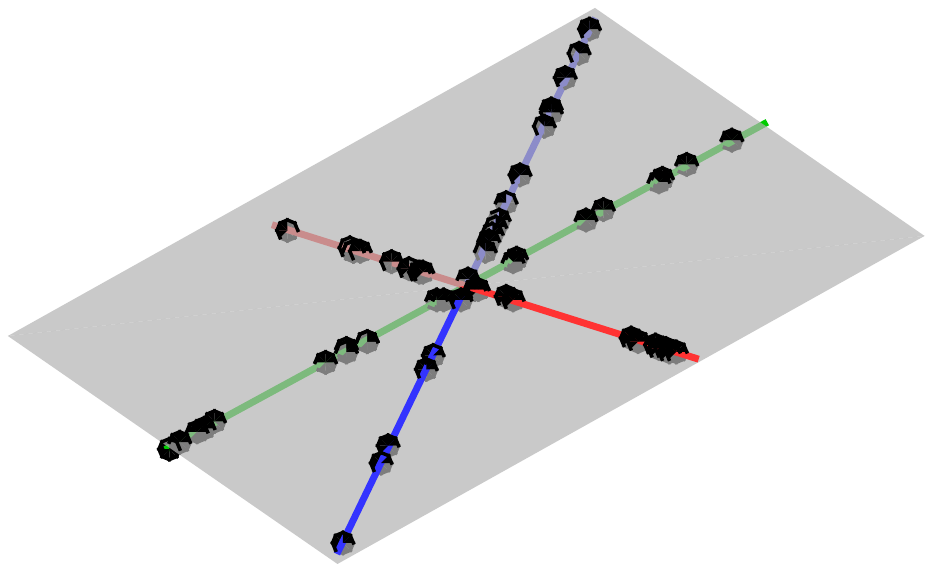}}\\
  \subfigure[]{
    \label{fig:showCSC:e} 
    \includegraphics[width=0.3\textwidth]{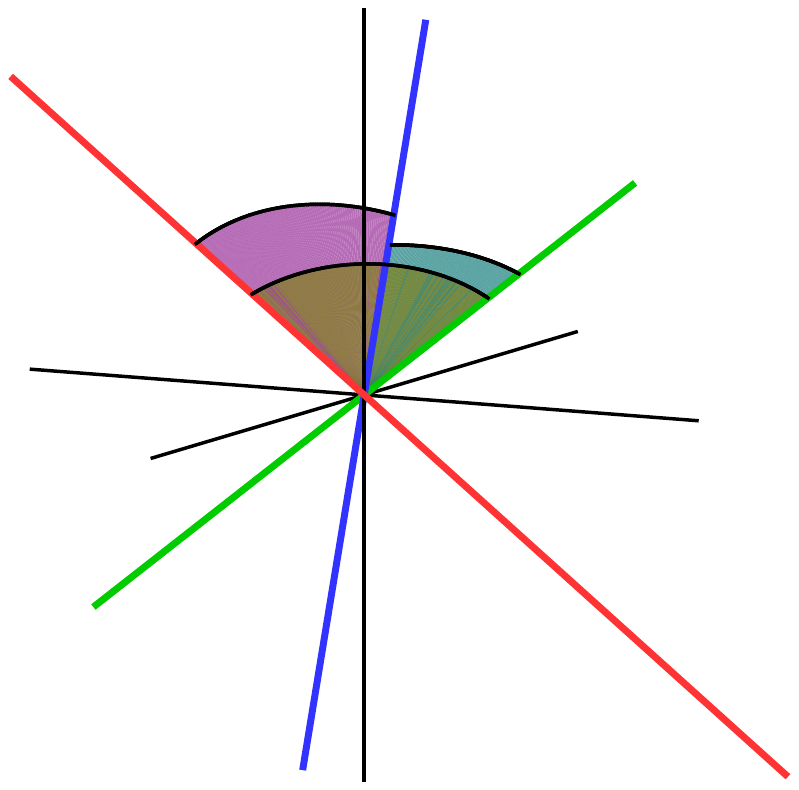}}
  \subfigure[]{
    \label{fig:showCSC:f} 
    \includegraphics[width=0.3\textwidth]{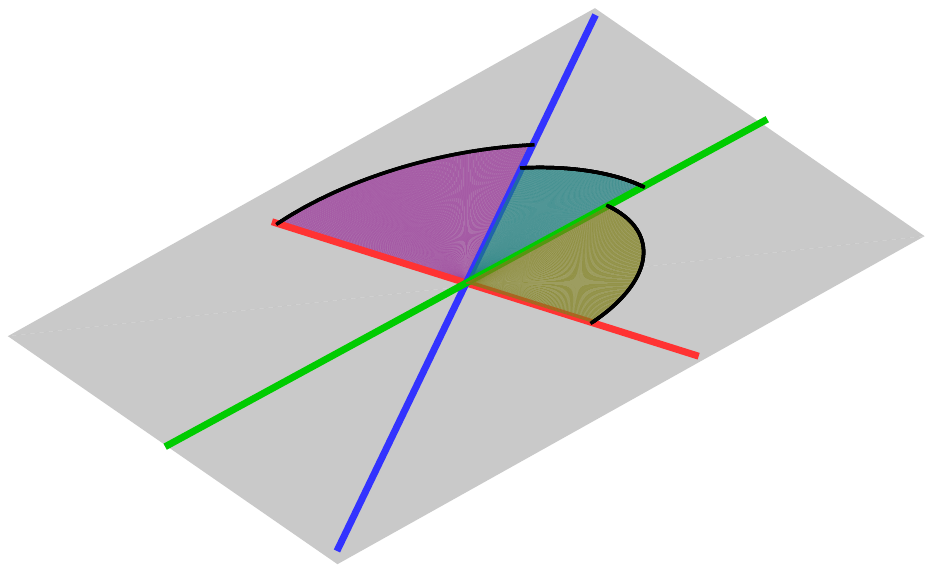}}
\caption{Visualization of SC, CSC, and the motivation of this work. 
(a) We are given unlabelled high-dimensional data in multiple classes, 
and we know the prior that data points in every same class locate in a low-dimensional latent subspace. 
(b) Our aim is to cluster the unlabelled data into several subspaces. This is called SC.
(c) In preprocessing, we reduce the dimension by using a Gaussian random matrix to project and compress the raw data into a medium-dimensional space.
(d) Then we do clustering on the compressed data. This is called CSC.
(e) The similarity or principal angles, which will be explained in Definition \ref{defi-principal-angles} and \ref{defi-affinity}, are utilized to characterize the relative position between subspaces.
(f) After random projection, the principal angles usually decrease, which may lead to clustering errors. 
In this work we will prove that the similarity between any two given subspaces is almost unchanged after random projection,
if their intrinsic dimensions are small.}
  \label{fig:showCSC} 
\end{figure}

Traditional subspace clustering methods, however, suffer from the deficiency in similarity representation,
so it can be computationally expensive to adapt them to the datasets that come at a large scale. 
This leads to the revival of subspace clustering, which has become a highly active research area in machine learning, data science, and signal processing in recent years. 
To alleviate the high computational burden, how to efficiently handle large-scale datasets becomes a crucial problem, and a variety of works have been done to address this problem. 
One method considered is to perform SC on randomly compressed samples (Fig. \ref{fig:showCSC} (c) and (d)).
This is called Compressed SC (CSC) \cite{Mao2014Compressed} or dimensionality-reduced SC \cite{Heckel2014Subspace}.
Because compression reduces the dimension of ambient signal space, the computational cost on finding the self-representation in SC can be efficiently reduced. 
Based on the concept of affinity \cite{soltanolkotabi2012geometric}, which characterizes the similarity between two subspaces, the conditions under which several popular algorithms can successfully perform clustering on the compressed data have been theoretically studied and numerically verified  \cite{heckel2015dimensionality,Wang2016}.

Because the data points are randomly projected from a high-dimensional ambient space to a new medium-dimensional ambient space, a worry is that the similarity between any two low-dimensional subspaces increases
and the SC algorithms are less likely to succeed (Fig. \ref{fig:showCSC} (e) and (f)).
However, if the dimension of the latent subspace that data lie in is small compared with those of the original ambient space and the new ambient space,
we speculate whether the similarity between any two given subspaces can remain almost unchanged.
It should be highlighted that this conjecture is independent of SC algorithms.
In addition, this may benefit future studies on other subspace related topics.

Although the UoS model is very popular and is extensively used in various applications, few theoretical analysis describes the property of the random projection for linear subspaces. 
This motivates our work to discover the distance-preserving property of random projection for a set of subspaces, 
parallel to what JL Lemma guarantees for finite signal set and RIP for sparse signals.

\subsection{Main Contribution}

In this paper, motivated by the significance of JL Lemma and the feasibility of CSC, we study the RIP of Gaussian random matrices for projecting a set of finite subspaces. 
The problem is challenging as random projections neither preserve orthogonality nor normalization of the vectors defining orthonormal bases of the subspaces. 
To quantify the change in subspace affinity or distance induced by random projections both effects have to be carefully quantified.
After introducing the projection $F$-norm distance and building a metric space of the set of low-dimensional subspaces, we reveal the connection between affinity and distance, and lay a solid foundation for this work.
Then we start from a simple case that one subspace is of dimension one and prove the concentration of affinity after random projection.
Consequently, the general case is studied that the subspaces are of arbitrary dimensions.
Based on that column-wise normalization well approximates the Gram-Schmidt orthogonalization in high-dimensional scenario, we successfully reach the RIP of two subspaces. 
Finally, the main contribution is generalized to a finite set of subspaces, as stated in Theorem \ref{thm-rip-set}.

\begin{thm}\label{thm-rip-set}
For any set composed by $L$ subspaces ${\set X}_1, \cdots, {\set X}_L \in \mathbb{R}^N$ of dimension no more than $d$,
if they are projected into $\mathbb R^n$ by a Gaussian random matrix ${\bm\Phi} \in\mathbb{R}^{n\times N}$, 
$$
\set{X}_k \stackrel{\bm \Phi}{\longrightarrow} \set{Y}_k = \{{\bf y} | {\bf y}={\bm \Phi}{\bf x}, \forall {\bf x}\in \set{X}_k\},\quad k=1,\cdots,L,
$$
and $d\ll n < N$, then we have
$$
    (1-\varepsilon)D^2({\set X}_i,{\set X}_j) \le D^2({\set Y}_i,{\set Y}_j) \le (1+\varepsilon)D^2({\set X}_i,{\set X}_j), \quad \forall i, j
$$
with probability at least 
$$
	1 - \frac{2dL(L-1)}{(\varepsilon-d/{n})^2n},
$$
when $n$ is large enough, where $D(\cdot)$ denotes the projection $F$-norm distance between two subspaces.
\end{thm}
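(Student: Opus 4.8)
The plan is to reduce the $L$-subspace claim to the two-subspace RIP result established earlier in the paper, via a union bound. The observation that makes this immediate is that the event ``$(1-\varepsilon)D^2({\set X}_i,{\set X}_j) \le D^2({\set Y}_i,{\set Y}_j) \le (1+\varepsilon)D^2({\set X}_i,{\set X}_j)$ for all $i,j$'' is exactly the conjunction, over all pairs, of the two-subspace statements. Because $D(\cdot)$ is symmetric and the case $i=j$ is trivial ($D=0$), there are precisely $\binom{L}{2}=\frac{L(L-1)}{2}$ distinct pairs that must be controlled simultaneously.

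First I would invoke the two-subspace theorem. Since every subspace in the set has dimension at most $d$, that result applies uniformly to each fixed pair $({\set X}_i,{\set X}_j)$ and guarantees distance preservation within the $(1\pm\varepsilon)$ window with probability at least $1-\frac{4d}{(\varepsilon-d/n)^2 n}$. The structure of this per-pair bound is a Chebyshev estimate: the projected squared distance concentrates about its mean with variance of order $d/n$, while the denominator term $\varepsilon-d/n$ is the effective fluctuation threshold once the $O(d/n)$ bias---arising because a Gaussian projection preserves neither the normalization nor the orthogonality of the basis vectors exactly---has been subtracted.

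Next, writing $E_{ij}$ for the failure event of pair $(i,j)$, I would apply $P\bigl(\bigcup_{i<j}E_{ij}\bigr)\le\sum_{i<j}P(E_{ij})$. The point worth emphasizing is that the \emph{same} matrix ${\bm\Phi}$ acts on all subspaces, so the events $E_{ij}$ are strongly dependent; however, the union bound needs no independence, so the step goes through without extra work. Summing the per-pair bound over the $\frac{L(L-1)}{2}$ pairs gives a total failure probability of at most $\frac{L(L-1)}{2}\cdot\frac{4d}{(\varepsilon-d/n)^2 n}=\frac{2dL(L-1)}{(\varepsilon-d/n)^2 n}$, which is the complement of the stated success probability.

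At this stage there is no serious obstacle: all the analytical difficulty lives in the two-subspace concentration result, and the generalization is essentially bookkeeping. The only things requiring genuine care are confirming that the per-pair estimate holds with a single dimension bound $d$ for every subspace, and recognizing that the dependence induced by sharing ${\bm\Phi}$ is harmless here. Were one to seek a sharper result, the natural route would be to replace the Chebyshev step with an exponential concentration inequality, which would trade the linear factor $L(L-1)$ for a logarithmic dependence on $L$ and thereby recover the classical JL scaling; but the bound as stated follows from the plan above.
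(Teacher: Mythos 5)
Your proposal is correct and follows essentially the same route as the paper: the paper derives Theorem \ref{thm-rip-set} directly from the two-subspace result (Theorem \ref{thm-rip}) by applying it to each of the $L(L-1)/2$ pairs (bounding each pair's failure probability by $4d/((\varepsilon-d/n)^2 n)$ since all dimensions are at most $d$) and taking a union bound, which is exactly your argument. The paper leaves this step implicit ("we can readily conclude"), so your write-up, including the observation that dependence among the events through the shared ${\bm\Phi}$ is harmless, is a faithful expansion of the intended proof.
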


Although different metrics and distance measures have been used to describe the topological structure of the Grassmann manifold \cite{Hamm2008Grassman, Miao1992On, Qiu2005Unitarily}, as far as we know, there is no rigorous theoretical analysis for RIP regarding directly to subspaces as opposed to points on subspaces.
This paper theoretically studies this problem for the first time. Numerical simulations are also provided to validate the theoretical results.

\subsection{Related Works}

\subsubsection{Signal Processing with Compressive Measurements \cite{Davenport2010Signal}}

This paper extends the RIP to signals that are sparse or compressible with respect to a certain basis ${\bm \Psi}$, i.e., ${\bf x} = {\bm \Psi}{\bm \alpha}$, where ${\bm \Psi}$ is represented as a unitary $N \times N$ matrix and ${\bm \alpha}$ is a $k$-sparse vector.

More rigorously, the projection matrix ${\bm \Phi}$ satisfies RIP, with respect to a certain sparsity basis ${\bm \Psi}$, of order $k$ with $\delta_k$ as the smallest nonnegative constant, such that
$$
(1\!-\!\delta_k)\|{\bf x}_1 \!-\! {\bf x}_2\|_2^2 \le \|{\bm \Phi}{\bf x}_1 \!-\! {\bm \Phi}{\bf x}_2\|_2^2 \le (1\!+\!\delta_k)\|{\bf x}_1 \!-\! {\bf x}_2\|_2^2
$$
holds for any two vectors ${\bf x}_1$ and ${\bf x}_2$ that are $k$-sparse with respect to ${\bm \Psi}$. 

\subsubsection{Sampling Theorems for Signals from the Union of Linear Subspaces \cite{Blumensath2009Sampling}}

This work proves that with high probability the random projection matrix ${\bm \Phi}$ can preserve the distance between two signals belonging to a UoS.
In detail, for any $t > 0, \delta > 0$, let
$
n > \frac{2}{c\delta}\left(\ln(2L) + k\ln\left(\frac{12}{\delta}\right)+t\right),
$
where $c > 0$ is a constant. There exists a matrix ${\bm\Phi} \in \mathbb{R}^{n\times N}$ such that
$$
(1\!-\!\delta)\|{\bf x}\|_2^2 \le \|{\bm \Phi}{\bf x}\|_2^2 \le (1\!+\!\delta)\|{\bf x}\|_2^2
$$
holds for all ${\bf x}$ from the union of $L$ arbitrary $k$-dimensional subspaces.
If the entries of $\bm\Phi$ are \emph{i.i.d.} normal, this matrix satisfies the RIP with probability at least
$
1-{\rm e}^{-t}
$
and $c = \frac{7}{18}$.

\subsubsection{Embeddings of Surfaces, Curves, and Moving Points in Euclidean Space \cite{ agarwal2007embeddings}}

This paper shows that random projection preserves the structure of surfaces.
Given a collection of $L$ surfaces of linearization dimension $d$, if they are embedded into $\mathcal{O}(d\delta^{−2} \log(Ld/\delta))$ dimensions, the projected surfaces preserve their structure in the sense that for any pair of points on these surfaces the distance between them are preserved.

This paper also shows that, besides preserving pairwise distances of the moving points, random projection is able to preserve the radius of the smallest enclosing ball of the moving points at every moment of the motion.

\subsubsection{Dimensionality reductions that preserve volumes and distance to affine spaces, and their algorithmic applications \cite{magen2002dimensionality}}

The main contribution of this work is stated as follows. If $S$ is an $n$ point subset of $\mathbb{R}^N$, $0 < \delta < \frac{1}{3}$ and 
$
n = 256d\log n(\max\{d, 1/\delta\})^2,
$
there is a mapping of $\mathbb{R}^N$ into $\mathbb{R}^n$ under which volumes of sets of size at most $d$ do not change by more than a factor of $1+\delta$, and the distance of points from affine hulls of sets of size at most $k - 1$ is preserved within a relative error of $\delta$.

The above related works all study the distance preserving properties of compressed data points,
which may be sparse on specific basis or lie in a couple of subspaces or surfaces.
As far as we know, there is no work that studies the RIP for subspaces directly.
In this paper, for the first time we extend the object of RIP from a set of \emph{data points} to a set of \emph{subspaces}.
As stated in Theorem \ref{thm-rip-set}, 
we prove that with high probability \emph{the distance between two projected subspaces} is concentrated around its estimates.
When the ambient dimension after projection is sufficiently large, the distance between two subspaces almost remain unchanged after random projection.

In both related works and this paper, the same mathematical tool of concentration inequalities
is adopted to derive the RIP for two different objects, data points in Euclidean space and subspaces in Euclidean space (or points in Grassmann manifold), respectively.
Considering that both Euclidean space and random projection are linear but Grassmannian is not linear let along the projection on it,
one has to admit that the new problem is much more difficult than the existing one.
In an intuitive way, our problem is more challenging as random projections neither preserve orthogonality nor normalize of the vectors defining orthonormal bases of the subspaces.
Finally, the essential part of the work is solving the above challenges with geometric proof. 
This technique has hardly been used previously to derive the RIP for data points.

\section{Problem Formulation}

We first introduce the principal angles to describe the relative position and the affinity to measure the similarity between two subspaces.
Considering that the relation of subspaces reflected by affinity does not possess good features of metric space, we then introduce the projection $F$-norm distance for evaluating the separability of subspaces.
To be highlighted, we discover the connection between affinity and the above distance, which lays the foundation for the main contribution of this work. 

\subsection{Principal angles and affinity}

The principal angles (or canonical angles) between two subspaces provide the best way to characterize the relative subspace positions. It has been introduced by Jordan \cite{jordan1875essai} in 1875 and then rediscovered for several times. One may read \cite{PrincipalAngles2006} and the references herein for more usages of principal angles.

\begin{defi}\label{defi-principal-angles} The principal angles $\theta_1,\cdots,\theta_{d_1}$ between two subspaces ${\set X}_1$ and ${\set X}_2$ of dimensions $d_1\le d_2$, are recursively defined as
\begin{equation}
\cos{\theta_i}=\max\limits_{{\bf x}_1 \in {\set X}_1}\max\limits_{{\bf x}_2 \in {\set X}_2}\frac{{\bf x}_1^{\rm T}{\bf x}_2}{\Vert {\bf x}_1\Vert \Vert {\bf x}_2\Vert}=:\frac{{\bf x}_{1i}^{\rm T}{\bf x}_{2i}}{\Vert {\bf x}_{1i}\Vert\Vert {\bf x}_{2i}\Vert},
\end{equation}
with the orthogonality constraints ${\bf x}_{k}^{\rm T}{\bf x}_{kj}=0, j=1,\cdots,i-1,k=1,2$.
\end{defi}

An alternative way of computing principal angles is to use the singular value decomposition \cite{PrincipalAngles1973}.

\begin{lem}\label{lema-principal-angles-2}
Let the columns of ${\bf U}_k$ be orthonormal bases for subspace ${\set X}_k$ of dimension $d_k, k=1,2$ and suppose $d_1\le d_2$. Let $\lambda_1\ge\lambda_2\ge\cdots\ge\lambda_{d_1}\ge 0$ be the singular values of ${\bf U}_1^{\rm T}{\bf U}_2$, then 
$$
\cos\theta_i = \lambda_i, \quad i=1,\cdots, d_1.
$$
\end{lem}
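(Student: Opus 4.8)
The plan is to translate the recursive variational definition into a statement about the singular value decomposition of the compact representation $M := \mathbf{U}_1^{\rm T}\mathbf{U}_2 \in \mathbb{R}^{d_1\times d_2}$. Since the columns of $\mathbf{U}_k$ form an orthonormal basis of $\set{X}_k$, every $\mathbf{x}_1\in\set{X}_1$ and $\mathbf{x}_2\in\set{X}_2$ can be written as $\mathbf{x}_1=\mathbf{U}_1\mathbf{a}$ and $\mathbf{x}_2=\mathbf{U}_2\mathbf{b}$ for unique coefficient vectors $\mathbf{a}\in\mathbb{R}^{d_1}$, $\mathbf{b}\in\mathbb{R}^{d_2}$. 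Orthonormality gives $\Vert\mathbf{x}_1\Vert=\Vert\mathbf{a}\Vert$, $\Vert\mathbf{x}_2\Vert=\Vert\mathbf{b}\Vert$, and $\mathbf{x}_1^{\rm T}\mathbf{x}_2=\mathbf{a}^{\rm T}M\mathbf{b}$, so the quotient maximized in Definition \ref{defi-principal-angles} becomes $\mathbf{a}^{\rm T}M\mathbf{b}/(\Vert\mathbf{a}\Vert\,\Vert\mathbf{b}\Vert)$. In the same way the orthogonality constraint $\mathbf{x}_{1i}^{\rm T}\mathbf{x}_{1j}=0$ becomes $\mathbf{a}_i^{\rm T}\mathbf{a}_j=0$ (and symmetrically on the $\mathbf{b}$ side) because $\mathbf{U}_1^{\rm T}\mathbf{U}_1=\mathbf{I}$. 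Thus the entire recursion is equivalent to the successive maximization of the bilinear Rayleigh quotient of $M$ over coefficient vectors constrained to be mutually orthogonal.

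The first step is to show $\cos\theta_1=\lambda_1$. Writing the SVD $M=\mathbf{P}\,\mathbf{\Sigma}\,\mathbf{Q}^{\rm T}$ with $\mathbf{\Sigma}=\mathrm{diag}(\lambda_1,\dots,\lambda_{d_1})$ padded by zeros, and substituting the norm-preserving orthogonal change of variables $\mathbf{u}=\mathbf{P}^{\rm T}\mathbf{a}$, $\mathbf{v}=\mathbf{Q}^{\rm T}\mathbf{b}$, the quotient becomes $\sum_i\lambda_i u_i v_i/(\Vert\mathbf{u}\Vert\,\Vert\mathbf{v}\Vert)$. A Cauchy--Schwarz estimate bounds this by $\lambda_1$, with equality attained at $\mathbf{u}=\mathbf{v}=\mathbf{e}_1$, that is at $\mathbf{a}=\mathbf{p}_1$, $\mathbf{b}=\mathbf{q}_1$, the leading left/right singular vector pair. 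Hence $\cos\theta_1=\lambda_1$ and the optimizing coefficient vectors are the first singular vectors of $M$.

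I would then induct on $i$. Assuming the first $i-1$ optimizers are the singular-vector pairs $(\mathbf{p}_1,\mathbf{q}_1),\dots,(\mathbf{p}_{i-1},\mathbf{q}_{i-1})$, the constraints at stage $i$ force $\mathbf{a}\perp\mathrm{span}\{\mathbf{p}_1,\dots,\mathbf{p}_{i-1}\}$ and $\mathbf{b}\perp\mathrm{span}\{\mathbf{q}_1,\dots,\mathbf{q}_{i-1}\}$, i.e. $u_1=\dots=u_{i-1}=0$ and $v_1=\dots=v_{i-1}=0$ in the rotated coordinates. The same Cauchy--Schwarz argument, now restricted to the remaining coordinates, yields a maximum of $\lambda_i$ attained at $(\mathbf{p}_i,\mathbf{q}_i)$, closing the induction and giving $\cos\theta_i=\lambda_i$ for all $i=1,\dots,d_1$. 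This is exactly the Courant--Fischer (Ky Fan) variational characterization of singular values specialized to the present bilinear form.

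The main subtlety to handle carefully is the interaction of the two one-sided orthogonality constraints together with possibly repeated singular values. When $\lambda_{i-1}=\lambda_i$ the maximizer at a given stage is not unique, so I cannot claim a specific optimizer is selected; instead I would argue at the level of the optimal \emph{value}, showing that restricting to the orthogonal complements cannot exceed $\lambda_i$ and that $\lambda_i$ is achievable within the admissible set, which is all the statement $\cos\theta_i=\lambda_i$ requires. I also need to confirm that the constraints imposed separately on the $\set{X}_1$ and $\set{X}_2$ sides are simultaneously compatible with choosing a matched singular pair at each stage; the rotated coordinates make this transparent since the constraints decouple coordinatewise. Attainment of each maximum is guaranteed because the quotient is continuous on a product of compact unit spheres.
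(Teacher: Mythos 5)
Your proposal attacks a statement that the paper itself never proves: Lemma \ref{lema-principal-angles-2} is quoted from Bj\"orck--Golub \cite{PrincipalAngles1973} with a bare citation, so there is no in-paper argument to compare against. Your route --- pass to coefficient vectors via $\mathbf{x}_k = \mathbf{U}_k(\cdot)$, observe that orthonormality of $\mathbf{U}_k$ turns the variational recursion of Definition \ref{defi-principal-angles} into successive maximization of $\mathbf{a}^{\rm T} M \mathbf{b}/(\Vert\mathbf{a}\Vert\Vert\mathbf{b}\Vert)$ with $M=\mathbf{U}_1^{\rm T}\mathbf{U}_2$, then diagonalize by the SVD and apply Cauchy--Schwarz --- is the standard proof of this classical fact, and your treatment of the first step and of the induction under \emph{distinct} singular values is correct.

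The one genuine gap is in your fallback for repeated singular values, which you correctly flag but resolve with a claim that is itself the hard part: ``restricting to the orthogonal complements cannot exceed $\lambda_i$.'' That bound is \emph{not} automatic from two-sided orthogonality constraints alone. For $M=\mathrm{diag}(1,\tfrac12)$, orthogonality to the (non-optimal) pair $\mathbf{a}_1=\mathbf{b}_1=\mathbf{e}_2$ leaves a restricted maximum of $1>\lambda_2$; so the bound must use the fact that the earlier vectors are \emph{optimizers}, not merely admissible. The missing lemma is: if $\Vert\mathbf{a}_j\Vert=\Vert\mathbf{b}_j\Vert=1$ and $\mathbf{a}_j^{\rm T}M\mathbf{b}_j=\lambda_j$, then the equality case of Cauchy--Schwarz (applied to $\lambda_j=\mathbf{a}_j^{\rm T}M\mathbf{b}_j\le\Vert M\mathbf{b}_j\Vert\le\lambda_j$ at stage $1$, and to the deflated matrix at later stages) forces $(\mathbf{a}_j,\mathbf{b}_j)$ to be a genuine singular pair, $M\mathbf{b}_j=\lambda_j\mathbf{a}_j$ and $M^{\rm T}\mathbf{a}_j=\lambda_j\mathbf{b}_j$. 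Once that is known, pairs that are orthonormal on each side extend to a full SVD of $M$ (within one repeated singular value, orthogonality of the $\mathbf{a}_j$ implies orthogonality of the $\mathbf{b}_j=M^{\rm T}\mathbf{a}_j/\lambda_j$, and across distinct singular values it is automatic); moreover $M$ maps $\mathrm{span}\{\mathbf{b}_1,\dots,\mathbf{b}_{i-1}\}^{\perp}$ into $\mathrm{span}\{\mathbf{a}_1,\dots,\mathbf{a}_{i-1}\}^{\perp}$, so the stage-$i$ problem is exactly the stage-$1$ problem for the deflated matrix, whose largest singular value is $\lambda_i$. Adding this lemma closes the induction for arbitrary (possibly non-unique) choices of earlier optimizers and makes your proof complete.
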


When studying the problem of subspace clustering, affinity has been defined by utilizing principal angles to measure the subspace similarity \cite{soltanolkotabi2012geometric}.

\begin{defi}\label{defi-affinity} The affinity between two subspaces ${\set X}_1$ and ${\set X}_2$ of dimension $d_1\le d_2$ is defined as
\begin{equation}\label{eq-define-affinity-2}
\aff\left({\set X}_1, {\set X}_2\right):=\left(\sum_{i=1}^{d_1}\cos^2\theta_i\right)^{\frac12}.
\end{equation}
\end{defi}

Using Lemma \ref{lema-principal-angles-2} in Definition \ref{defi-affinity}, we may readily introduce an algebraic approach for calculating affinity as follows.

\begin{lem}\label{defi-affinity}
The affinity between two subspaces ${\set X}_1$ and ${\set X}_2$ can be calculated by
\begin{equation}\label{eq-define-affinity}
\aff(\set{X}_1, \set{X}_2) := \|{\bf U}_1^{\rm T}{\bf U}_2\|_F,
\end{equation}
where the columns of ${\bf U}_k$ are orthonormal bases of $\set{X}_k, k=1,2$. 
\end{lem}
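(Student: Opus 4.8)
The plan is to connect the two expressions for affinity through the singular value decomposition of the matrix ${\bf U}_1^{\rm T}{\bf U}_2$. The starting point is the defining formula \eqref{eq-define-affinity-2}, namely $\aff({\set X}_1,{\set X}_2)=\left(\sum_{i=1}^{d_1}\cos^2\theta_i\right)^{1/2}$, so it suffices to establish that $\sum_{i=1}^{d_1}\cos^2\theta_i=\|{\bf U}_1^{\rm T}{\bf U}_2\|_F^2$.

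First I would apply Lemma \ref{lema-principal-angles-2}, which identifies the cosines of the principal angles with the singular values of ${\bf U}_1^{\rm T}{\bf U}_2$: writing $\lambda_1\ge\cdots\ge\lambda_{d_1}\ge 0$ for those singular values, we have $\cos\theta_i=\lambda_i$ for $i=1,\dots,d_1$. Since ${\bf U}_1^{\rm T}{\bf U}_2$ is a $d_1\times d_2$ matrix with $d_1\le d_2$, it has exactly $d_1$ singular values, so the index set of the sum over principal angles matches the full list of singular values with nothing omitted.

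The key identity is the standard fact that the squared Frobenius norm of any matrix equals the sum of the squares of its singular values. Concretely, $\|{\bf U}_1^{\rm T}{\bf U}_2\|_F^2=\mathrm{tr}\left({\bf U}_2^{\rm T}{\bf U}_1{\bf U}_1^{\rm T}{\bf U}_2\right)$, and the trace of the symmetric positive semidefinite matrix ${\bf U}_2^{\rm T}{\bf U}_1{\bf U}_1^{\rm T}{\bf U}_2$ equals the sum of its eigenvalues, which are precisely the squared singular values $\lambda_i^2$ of ${\bf U}_1^{\rm T}{\bf U}_2$. Substituting $\lambda_i=\cos\theta_i$ yields $\|{\bf U}_1^{\rm T}{\bf U}_2\|_F^2=\sum_{i=1}^{d_1}\lambda_i^2=\sum_{i=1}^{d_1}\cos^2\theta_i$, and taking square roots gives \eqref{eq-define-affinity}.

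There is essentially no hard step here: the result is a direct corollary of Lemma \ref{lema-principal-angles-2} combined with the relation between the Frobenius norm and the singular values. The only point worth checking carefully is the bookkeeping on dimensions — that the $d_1\times d_2$ matrix ${\bf U}_1^{\rm T}{\bf U}_2$ contributes exactly $d_1$ (possibly vanishing) singular values, matching the number of principal angles — so that no terms are dropped or double-counted when the two sums are equated.
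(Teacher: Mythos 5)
Your proof is correct and follows exactly the route the paper intends: it presents this lemma as an immediate consequence of Lemma \ref{lema-principal-angles-2} applied to Definition \ref{defi-affinity}, which is precisely your combination of the SVD characterization of principal angles with the identity $\|{\bf U}_1^{\rm T}{\bf U}_2\|_F^2 = \sum_{i=1}^{d_1}\lambda_i^2$. Your added care about the dimension bookkeeping (exactly $d_1$ singular values, none dropped) is a sound filling-in of details the paper leaves implicit.
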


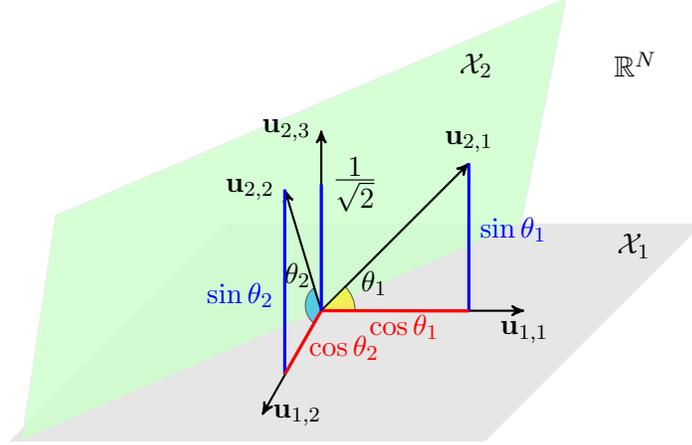
\begin{figure}[t]
\begin{center}
\begin{tikzpicture}[scale=3,
    axis/.style={ thick, ->, >=stealth'},
    important line/.style={thick},
    dashed line/.style={dashed, thin},
    pile/.style={thick, ->, >=stealth', shorten <=2pt, shorten >=2pt},
    every node/.style={color=black}
    ]
    \coordinate (O) at (0,0,0);
    \node[left=6pt,above]  at (O)  {$O$};
    \node[above] at (1.2,0.8,-0.5) {$\mathbb{R}^N$};
    \filldraw[fill=gray!20,draw=gray!20] (-0.8,0,-1) -- (-0.8,0,1.5) -- (1.3,0,1.5) -- (1.3,0,-1) -- cycle;
    \node[below] at (1,0,-1) {${\mathcal X}_1$};
    \filldraw[fill=green!20,draw=gray!20,opacity=0.8] (0.5,0,-1) -- (0.7,1,-1) -- (-0.6,1,1.5) -- (-0.75,0,1.5) -- cycle;
    \node[below] at (0.3,0.8,-1) {${\mathcal X}_2$};
    \coordinate (v21) at (0.9,0,0);
    \draw[axis] (O) -- (v21) node [right,below] {${\bf u}_{1,1}$};
    \coordinate (v11) at (0.5,0.5,-0.4);
    \draw[axis] (O) -- (v11) node [above] {${\bf u}_{2,1}$};
    \filldraw[fill=yellow, draw=black,opacity=0.6] (0,0) -- (0.15,0.0) arc (0:45:0.15) -- cycle;
    \node[right] at (0,0,0) [right=20,above=2] {$\theta_1$};
    \coordinate (v11pY2) at (0.5,-0.155,-0.4);
    \draw[very thick,blue] (v11) node [below=25,right,blue] {$\sin\theta_1$} -- (v11pY2) ;
    \draw[very thick,red] (O) -- (v11pY2) node [below=6,left=7,red] {$\cos\theta_1$};
    \coordinate (v22) at (0.2,0,1.2);
    \draw[axis] (O) -- (v22) node [right] {${\bf u}_{1,2}$};
    \coordinate (v12) at (0.3,1,1.2);
    \draw[axis] (O) -- (v12) node [left] {${\bf u}_{2,2}$};
    \filldraw[fill=cyan, draw=black,opacity=0.6] (0,0) -- (-0.035,0.1) arc (130:230:0.1) -- cycle;
    \node[right] at (0,0,0) [left=9,above=5] {$\theta_2$};
    \coordinate (v12pY2) at (0.3,0.18,1.2);
    \draw[very thick,blue] (v12) node [below=40,left,blue] {$\sin\theta_2$} -- (v12pY2) ;
    \draw[very thick,red] (O) -- (v12pY2) node [above=10,right=5,red] {$\cos\theta_2$};
    \coordinate (v13) at (0,0.8,0);
    \coordinate (v13half) at (0,0.56,0);
    \draw[axis] (O) -- (v13) node [left] {${\bf u}_{2,3}$};
    \draw[very thick,blue] (O) -- (v13half) node [right] {$\frac{\textstyle1}{\textstyle \sqrt{2}}$};
\end{tikzpicture}
\end{center}
\caption{Visualization of principal angles, affinity, and generalized projection $F$-norm distance. 
$\set{X}_1$ and $\set{X}_2$ are two subspaces of dimension $2$ and $3$. 
$\theta_1$ and $\theta_2$ are the principal angles.
Affinity and generalized projection $F$-norm distance are denoted, respectively, as the \emph{lengths} of red bars and blue bars.
\label{fig:principalangles}}
\end{figure}

\subsection{From similarity to distance}
\label{sec-distance}

In order to take advantage of the properties of metric space, we prefer to study the relation between subspaces by distances.
Actually, there are several different definitions of distance based on the principal angles between two subspaces \cite{SeveralDistances1998}. 
In this work, we focus on the projection $F$-norm distance.
When two subspaces are of the same dimension, this distance is defined as follows.

\begin{defi}\label{defi-distance-same-d}
The projection $F$-norm distance between two subspaces ${\set X}_1$ and ${\set X}_2$ of the same dimension $d$ is defined as
\begin{align*}
D_1({\set X}_1,{\set X}_2) :=\frac1{\sqrt 2} \|{\bf P}_1 - {\bf P}_2\|_F =\bigg(\sum_{i=1}^d \sin^2 \theta_i\bigg)^{\frac12},
\end{align*}
where ${\bf P}_k = {\bf U}_k{\bf U}_k^{\rm T}$ denotes the projection matrix for subspace ${\set X}_k, k=1,2$ and $\theta_i, 1\le i\le d$ denote the principal angles between the two subspaces.
\end{defi}

When the dimensions of the two subspaces are different, which is the case of wide applications, we may accordingly generate the projection $F$-norm distance as follows.

\begin{defi}\label{defi-distance}
The generalized projection $F$-norm distance between two subspaces ${\set X}_1$ and ${\set X}_2$ of dimension $d_1, d_2$ is defined as
\begin{equation}\label{eq-define-distance}
D({\set X}_1,{\set X}_2) :=\frac1{\sqrt 2} \|{\bf P}_1 - {\bf P}_2\|_F,
\end{equation}
where ${\bf P}_k = {\bf U}_k{\bf U}_k^{\rm T}$ denotes the projection matrix for subspace ${\set X}_k, k=1,2$.
\end{defi}

One may readily check that this definition meets all requirements in the definition of distance measure, i.e. non-negativity, positive-definiteness, symmetry, and triangular inequality, thus the space of different dimensional subspaces becomes a metric space. This could also be derived by deeming ${\set X}_k$ as points on Grassmann manifold \cite{Grassmann2004}.

\begin{rem}
We want to stress that \eqref{eq-define-distance} is different from
$
 \big(\sum_{i=1}^{d_1} \sin^2 \theta_i\big)^{1/2},
$
which does not possess positive-definiteness, therefore violates the definition of a distance. 
\end{rem}

Combining Definition \ref{defi-affinity} and Definition \ref{defi-distance}, we reveal for the first time the relationship between distance and affinity.

\begin{lem}\label{lem-aff-to-dist}
The \emph{distance} and \emph{affinity} between two subspaces ${\set X}_1$ and ${\set X}_2$ of dimension $d_1, d_2$, are connected by 
\begin{equation}\label{eq-aff-to-dist}
 D^2(\set{X}_1, \set{X}_2) = \frac{d_1+d_2}{2}-\aff ^2(\set{X}_1, \set{X}_2).
\end{equation}
\end{lem}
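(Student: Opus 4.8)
The plan is to work directly from the definitions \eqref{eq-define-distance} and \eqref{eq-define-affinity}, translating the squared projection $F$-norm distance into a trace expression and then exploiting the defining algebraic properties of orthogonal projection matrices. First I would write $D^2(\set{X}_1,\set{X}_2) = \frac12\|{\bf P}_1-{\bf P}_2\|_F^2 = \frac12\,\mathrm{tr}\big(({\bf P}_1-{\bf P}_2)^{\rm T}({\bf P}_1-{\bf P}_2)\big)$, using the elementary identity $\|{\bf A}\|_F^2=\mathrm{tr}({\bf A}^{\rm T}{\bf A})$. The whole computation is then a matter of expanding this quadratic and evaluating three traces.

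Next I would expand the product. Since each ${\bf P}_k={\bf U}_k{\bf U}_k^{\rm T}$ is symmetric and idempotent (${\bf P}_k^{\rm T}={\bf P}_k$ and ${\bf P}_k^2={\bf P}_k$, the latter because the columns of ${\bf U}_k$ are orthonormal so ${\bf U}_k^{\rm T}{\bf U}_k={\bf I}_{d_k}$), the squared terms simplify and I obtain $\mathrm{tr}({\bf P}_1)+\mathrm{tr}({\bf P}_2)-2\,\mathrm{tr}({\bf P}_1{\bf P}_2)$, where the two off-diagonal contributions $\mathrm{tr}({\bf P}_1{\bf P}_2)$ and $\mathrm{tr}({\bf P}_2{\bf P}_1)$ coincide by cyclic invariance of the trace. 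The diagonal terms are immediate: $\mathrm{tr}({\bf P}_k)=\mathrm{tr}({\bf U}_k{\bf U}_k^{\rm T})=\mathrm{tr}({\bf U}_k^{\rm T}{\bf U}_k)=\mathrm{tr}({\bf I}_{d_k})=d_k$.

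The one step carrying the actual content is identifying the cross term with the squared affinity. Using cyclicity again, $\mathrm{tr}({\bf P}_1{\bf P}_2)=\mathrm{tr}({\bf U}_1{\bf U}_1^{\rm T}{\bf U}_2{\bf U}_2^{\rm T})=\mathrm{tr}\big(({\bf U}_1^{\rm T}{\bf U}_2)^{\rm T}({\bf U}_1^{\rm T}{\bf U}_2)\big)=\|{\bf U}_1^{\rm T}{\bf U}_2\|_F^2=\aff^2(\set{X}_1,\set{X}_2)$ by \eqref{eq-define-affinity}. Substituting the three traces gives $\|{\bf P}_1-{\bf P}_2\|_F^2=d_1+d_2-2\aff^2(\set{X}_1,\set{X}_2)$, and multiplying by $\frac12$ yields exactly \eqref{eq-aff-to-dist}.

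I do not expect any genuine obstacle: the claim is a short algebraic identity, and the only point deserving care is consistency of the affinity with its basis-dependent-looking formula. The quantity $\mathrm{tr}({\bf P}_1{\bf P}_2)$ is manifestly independent of the choice of orthonormal bases, since the projection matrices ${\bf P}_k$ depend only on the subspaces; this confirms that the value $\|{\bf U}_1^{\rm T}{\bf U}_2\|_F^2$ is well defined as $\aff^2(\set{X}_1,\set{X}_2)$ irrespective of which orthonormal bases ${\bf U}_1,{\bf U}_2$ are selected, so the identification used above is legitimate.
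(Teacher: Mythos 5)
Your proof is correct, and every step checks out: the expansion $\|{\bf P}_1-{\bf P}_2\|_F^2=\mathrm{tr}({\bf P}_1)+\mathrm{tr}({\bf P}_2)-2\,\mathrm{tr}({\bf P}_1{\bf P}_2)$, the evaluation $\mathrm{tr}({\bf P}_k)=d_k$, and the identification $\mathrm{tr}({\bf P}_1{\bf P}_2)=\|{\bf U}_1^{\rm T}{\bf U}_2\|_F^2=\aff^2(\set{X}_1,\set{X}_2)$ are all valid. The paper proves the same identity by a more pedestrian route: it expands $\frac12\|{\bf U}_1{\bf U}_1^{\rm T}-{\bf U}_2{\bf U}_2^{\rm T}\|_F^2$ entry by entry as a double sum over matrix indices $l,m$, splits the summand into three terms $A_1+A_2-2B$, and then uses orthonormality of the columns ${\bf u}_{k,i}$ to evaluate $\sum_{l,m}A_k=d_k$ and $\sum_{l,m}B=\aff^2(\set{X}_1,\set{X}_2)$. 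The decomposition into diagonal and cross terms is thus identical in substance; what your version buys is brevity and transparency, since symmetry, idempotency (${\bf P}_k^2={\bf P}_k$), and cyclic invariance of the trace replace several displays of coordinate bookkeeping, and you additionally make explicit a point the paper leaves tacit: $\mathrm{tr}({\bf P}_1{\bf P}_2)$ depends only on the projection matrices, hence the value $\|{\bf U}_1^{\rm T}{\bf U}_2\|_F^2$ is independent of the choice of orthonormal bases. What the paper's coordinate computation buys in exchange is self-containedness, as it invokes nothing beyond the definition of the Frobenius norm as a sum of squared entries.
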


\begin{proof}
The proof is postponed to Appendix \ref{Appendixproof-lem-aff-to-dist}.
\end{proof}

Because of the concise definition and easy computation of affinity, in this work we always start the theoretical analysis with affinity, and then present the results with distance by using Lemma \ref{lem-aff-to-dist}. 
In addition, many discussions are also conducted based on the concept of affinity. 
The relations among principal angles, affinity, and distance are visualized in Fig. \ref{fig:principalangles}.

\subsection{Projection of subspaces}

We will focus on the RIP of randomly projecting two low-dimensional subspaces from a high-dimensional ambient space to a medium-dimensional ambient space.

\begin{defi}\label{defi-project-subspace}
Let $\set{X}_1, \set{X}_2\subset \mathbb{R}^N$ be two subspaces of dimension $d_1 \le d_2\ll N$. 
They are randomly projected to $\mathbb{R}^n, d_2 \ll n<N$ as ${\mathcal Y}_k$,
$$
	\set{X}_k \stackrel{\bm \Phi}{\longrightarrow} \set{Y}_k = \{{\bf y} | {\bf y}={\bm \Phi}{\bf x}, \forall {\bf x}\in \set{X}_k\}, \quad k=1,2.
$$
where the projection matrix ${\bm \Phi}\in\mathbb{R}^{n\times N}$, $n<N$, is composed of entries independently drawn from Gaussian distribution $\mathcal{N}(0,1/n)$. 
\end{defi}

\begin{rem}
Because $d_1\le d_2 < n$, one may notice that the dimension of subspaces remains unchanged after random projection with probability one. 
\end{rem}

Following the above definition, we will study the change of the distance caused by random projection. 
For simplifying notation, we denote $\dX = D({\set X}_1,{\set X}_2)$ and $\dY = D({\set Y}_1,{\set Y}_2)$ as the distances before and after random projection.
Similarly, we use $\affX = \aff({\set X}_1,{\set X}_2)$ and $\affY = \aff({\set Y}_1,{\set Y}_2)$ to denote the affinities before and after projection.
Without loss of generality, we always suppose that $d_1\le d_2$.
We call the affinity (distance) after random projection as \emph{projected affinity} (\emph{distance}).

\section{Main Results}

In this section, we present our results about the RIP of subspaces after random projection.
It begins with a simple case of estimating the projected affinity of a line and a subspace, and then the result is extended to the case of two subspaces with arbitrary dimensions.
Finally, the RIP for subspaces is stated.\footnote{Notice that the notation of \emph{$\lesssim$} in this work holds in the sense of equivalence. 
For example, if $f(n)\le 1/(n-2)$, we may state that, without confusion, $f(n)\lesssim 1/n$ when $n$ is large enough for simplicity, considering that $1/(n-2)\sim 1/n$. 
}
It should be noted that all conclusions made in this paper are based on the assumption that $d_1\le d_2 \ll n$.

\subsection{Concentration of the affinity between a line and a subspace after random projection}

We first focus on a special case that one subspace is restricted to be a line (one-dimensional subspace). We begin from the concentration of their affinity caused by random projection and then replace the metric by the introduced distance.

The affinity between a line and a subspace will increase and concentrate on an estimate after Gaussian random projection. When the dimension of the new ambient space is large enough, the affinity almost remains unchanged after projection, as revealed in Lemma \ref{lem-line-sub}.

\begin{lem}\label{lem-line-sub}
Suppose $\set X_1, \set X_2\subset \mathbb{R}^N$ are a line and a $d$-dimension subspace, $d\ge 1$, respectively. 
Let $\lambda = \aff_{\set X}$ denote their affinity. 
If they are projected onto $\mathbb R^n, n<N,$ by a Gaussian random matrix ${\bm\Phi} \in\mathbb{R}^{n\times N}$, $\set{X}_k \stackrel{\bm \Phi}{\longrightarrow} \set{Y}_k, k=1,2$, then the affinity after projection, $\aff_{\set Y}$, can be estimated by
\begin{equation}\label{lem-line-sub-aff-change}
\oaffYS = \lambda^2+\frac{d}{n}\left(1-\lambda^2\right),
\end{equation}
where the estimation error is bounded by
\begin{equation}\label{lem-line-sub-aff-bound}
    \mathbb{P}\left(\left|\affYS-\oaffYS\right| > \lambda^2(1-\lambda^2)\varepsilon\right) \lesssim \frac{4}{\varepsilon^2n},
\end{equation}
when $n$ is large enough.
\end{lem}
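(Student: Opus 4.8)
\noindent\emph{Proof strategy.}~The plan is to reduce the statement to a calculation involving only a few Gaussian vectors in $\mathbb R^n$, and then to control the resulting quotient through its first two moments. First I would fix an orthonormal basis of $\mathbb R^N$ adapted to the geometry: write $\set X_1 = \mathrm{span}({\bf u})$ with $\|{\bf u}\|=1$, take an orthonormal basis $\{{\bf e}_1,\dots,{\bf e}_d\}$ of $\set X_2$ whose first vector ${\bf e}_1$ lies along the projection of ${\bf u}$ onto $\set X_2$, and adjoin a unit vector ${\bf e}_{d+1}\perp\set X_2$, so that
$$
{\bf u} = \lambda\,{\bf e}_1 + \sqrt{1-\lambda^2}\,{\bf e}_{d+1},
$$
because $\lambda = \affX$ is exactly the cosine of the single principal angle between $\set X_1$ and $\set X_2$. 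Writing ${\bf g}_i := {\bm\Phi}{\bf e}_i$, the orthonormality of the ${\bf e}_i$ makes ${\bf g}_1,\dots,{\bf g}_{d+1}$ independent $\mathcal N(0,\tfrac1n{\bf I}_n)$ vectors, while $\set Y_2 = \mathrm{span}({\bf g}_1,\dots,{\bf g}_d)$ and ${\bm\Phi}{\bf u} = \lambda{\bf g}_1 + \sqrt{1-\lambda^2}\,{\bf g}_{d+1}$.

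Next I would express the projected affinity as a quotient. Since $\set Y_1$ is a line, the algebraic form of affinity gives $\affYS = \|{\bf P}\,{\bm\Phi}{\bf u}\|^2/\|{\bm\Phi}{\bf u}\|^2$, where ${\bf P}$ is the orthogonal projector onto $\set Y_2$. Using ${\bf g}_1\in\set Y_2$ to discard the parallel part of the first summand, this collapses to
$$
\affYS = 1 - \frac{(1-\lambda^2)\,\|{\bf g}_{d+1}^\perp\|^2}{\|{\bm\Phi}{\bf u}\|^2},
$$
where ${\bf g}_{d+1}^\perp := ({\bf I}_n-{\bf P}){\bf g}_{d+1}$ is the component of ${\bf g}_{d+1}$ orthogonal to $\set Y_2$. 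The two random quantities here have clean laws: $n\|{\bm\Phi}{\bf u}\|^2\sim\chi^2_n$ since ${\bf u}$ is a unit vector, and, conditioned on $\set Y_2$, $n\|{\bf g}_{d+1}^\perp\|^2\sim\chi^2_{n-d}$ since it is the squared length of an isotropic Gaussian projected onto the $(n-d)$-dimensional complement of $\set Y_2$. Substituting the means $\mathbb E\|{\bf g}_{d+1}^\perp\|^2 = (n-d)/n$ and $\mathbb E\|{\bm\Phi}{\bf u}\|^2 = 1$ reproduces, to leading order, the stated estimate $\oaffYS = \lambda^2 + \tfrac dn(1-\lambda^2)$.

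For the tail bound I would use a second-moment (Chebyshev) argument, which is the natural source of a $\lesssim 1/(\varepsilon^2 n)$ estimate. Each of the two $\chi^2$-type quantities fluctuates about its mean on the scale $1/\sqrt n$, so numerator and denominator concentrate and the quotient is, to leading order, an affine function of their centered versions. A delta-method expansion then shows $\mathrm{Var}(\affYS)$ to be of order $(1-\lambda^2)^2/n$ up to a factor polynomial in $\lambda$ that vanishes at $\lambda\in\{0,1\}$ (as it must, since the affinity is then deterministic). Chebyshev's inequality applied at the deviation level $\lambda^2(1-\lambda^2)\varepsilon$ — whose $\lambda$-dependent scale is chosen precisely to cancel that of the variance — then delivers the advertised tail of order $1/(\varepsilon^2 n)$, after discarding lower-order terms under the hypothesis that $n$ is large.

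I expect the main obstacle to be the dependence between numerator and denominator: the image ${\bm\Phi}{\bf u}$ of the line shares the component ${\bf g}_1$ with the projected subspace $\set Y_2$, so $\|{\bf g}_{d+1}^\perp\|^2$ and $\|{\bm\Phi}{\bf u}\|^2$ are correlated and the moments of the quotient cannot be read off termwise. The way around this is to split $\|{\bm\Phi}{\bf u}\|^2 = \|{\bf P}\,{\bm\Phi}{\bf u}\|^2 + (1-\lambda^2)\|{\bf g}_{d+1}^\perp\|^2$ and to observe that, conditioned on $\set Y_2$, the orthogonal mass $\|{\bf g}_{d+1}^\perp\|^2$ is independent of every quantity supported on $\set Y_2$; this decoupling is where the geometry of the problem genuinely enters and is what makes the covariance bookkeeping tractable. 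Some care is then still needed to verify that the cross-covariances among $\|{\bf g}_1\|^2$, ${\bf g}_1^{\rm T}{\bf g}_{d+1}$ and $\|{\bf P}\,{\bf g}_{d+1}\|^2$ vanish to leading order, and that the terms neglected in the delta-method expansion are genuinely $o(1/n)$, which is where the assumption $d\ll n$ with $n$ large is used.
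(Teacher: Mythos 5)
Your geometric setup is exactly the paper's: the adapted decomposition ${\bf u}=\lambda{\bf e}_1+\sqrt{1-\lambda^2}\,{\bf e}_{d+1}$, the identity $\affYS = 1-(1-\lambda^2)\|({\bf I}_n-{\bf P}){\bf g}_{d+1}\|^2/\|{\bm\Phi}{\bf u}\|^2$ (the paper writes the same quantity as $\|{\bf a}_0\|^2\bigl(1-\sum_{i=1}^d\cos^2\theta_i\bigr)/\|{\bf a}\|^2$ and treats the two factors separately), the conditional $\chi^2_{n-d}$ law of the numerator, and the conditional independence of the orthogonal mass from everything supported on $\set Y_2$ are all correct, and they do yield the estimate \eqref{lem-line-sub-aff-change}.

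The gap is in your tail-bound step, and it is not cosmetic. Carry out the delta-method expansion you describe: with $\|{\bm\Phi}{\bf u}\|^2=\lambda^2\|{\bf g}_1\|^2+2\lambda\sqrt{1-\lambda^2}\,{\bf g}_1^{\rm T}{\bf g}_{d+1}+(1-\lambda^2)\|{\bf g}_{d+1}\|^2$, the linearization of $\affYS$ contains the bilinear term $2\lambda(1-\lambda^2)^{3/2}\,{\bf g}_1^{\rm T}{\bf g}_{d+1}$, whose standard deviation is of order $\lambda(1-\lambda^2)/\sqrt{n}$; the total variance works out to $\mathrm{Var}(\affYS)\approx 4\lambda^2(1-\lambda^2)^2/n$, so the standard deviation is $\approx 2\lambda(1-\lambda^2)/\sqrt{n}$, one full factor of $\lambda$ \emph{larger} than the deviation level $\lambda^2(1-\lambda^2)\varepsilon$. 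Your claim that the threshold's $\lambda$-dependence is ``chosen precisely to cancel that of the variance'' is therefore false, and Chebyshev at that level gives $4/(\lambda^2\varepsilon^2 n)$, not $4/(\varepsilon^2 n)$. Under the paper's $\lesssim$ convention (which preserves constants) this does not prove the lemma, and the spurious $1/\lambda^2$ is fatal downstream, where the lemma is applied to each principal cosine $\lambda_i$, some of which may be arbitrarily small. (A symptom of the same lumping: at $\lambda=0$ the projected affinity is \emph{not} deterministic, as you assert parenthetically; it is Beta-distributed with variance of order $d/n^2$.) The missing idea—the crux of the paper's proof—is to split the fluctuation by tail type rather than by size. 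The $\chi^2$-ratio (F-statistic) part, with standard deviation $\sim 2\lambda^2(1-\lambda^2)/\sqrt n$, matches the threshold and alone yields $4/(\varepsilon^2 n)$ by Chebyshev (paper's Lemma \ref{L1}); the bilinear cross term, although it dominates the variance when $\lambda$ is small, is conditionally Gaussian, so $\mathbb{P}(|\cos\theta|>\varepsilon)\le\exp(-\varepsilon^2 n/2)$ (paper's Lemma \ref{L3}) and its excursion past any fixed level is exponentially small in $n$, which is then absorbed asymptotically by the paper's Lemma \ref{lemma-simp}. That two-scale argument is what the paper's Lemma \ref{L4} packages, and it cannot be replaced by a single second-moment bound on the quotient.
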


\begin{proof}
The proof is postponed to Section \ref{Appendixproof-thm-sub-lem-line-sub}.
\end{proof}

As revealed in Lemma \ref{lem-line-sub}, the affinity between a line and a subspace increases after they are projected with the projection matrix specified as a Gaussian random matrix. 
Furthermore, the projected affinity can be estimated by \eqref{lem-line-sub-aff-change} with high probability. 

\begin{rem}\label{rem-line-sub-1}
1)
It's evident that the projected affinity will increase. 
The reason is that the angle between the line and the subspace decreases largely after they are projected from a high-dimensional space to a medium-dimensional space.
2) 
The increment caused by projection is in direct ratio to the dimension of the subspace. 
This means that the principal angle between the line and the subspace after projection drops more, when the subspace is of larger dimensionality.
3) 
The increment caused by projection is in inverse ratio to the dimension of the ambient space after projection, $n$. 
When $n$ is large enough, the affinity remains almost unchanged after projection.
A visualization is that the smaller $n$, the larger the probability that the line is contained within the subspace after projection, or the smaller the angle between them.
4)
The increment of affinity is also determined by the affinity itself, which is $[0,1]$.
When the affinity is close to zero, which means the line is approximately orthogonal to the subspace, the increment caused by projection is the largest.
On the contrary, when the affinity approaches one, which means the line is almost contained within the subspace, the increment is the smallest.
\end{rem}

Now we will discuss the concentration of the projected affinity on its estimate, as shown in \eqref{lem-line-sub-aff-bound}.

\begin{rem}\label{rem-line-sub-2}
1)
The probability that the projected affinity deviates from its estimate is below a threshold, which reduces by the rate of $1/\varepsilon^2$, where $\varepsilon$ is in direct ratio to the accuracy of the estimate.
This demonstrates that the projected affinity concentrates well on its estimate.
2)
The threshold decreases to zero by the rate of $1/n$ when the ambient dimension $n$ increases.
This means that in the high-dimensional scenario, the projected affinity concentrates on its estimate with a very large probability.
Recalling Remark \ref{rem-line-sub-1}.3), one may conclude that the affinity remains unchanged with a large probability for high-dimensional problem.
3)
With a given probability, the accuracy of estimation also depends on the original affinity. 
Consequently, the estimate is exactly accurate in two situations, where the line is almost contained within or orthogonal to the subspace.
\end{rem}

Applying Lemma \ref{lem-aff-to-dist} in Lemma \ref{lem-line-sub} to replace affinity by distance, we may readily reach the concentration of distance when randomly projecting a line and a subspace. 

\begin{cor}
Let $\dX$ denote the distance between a line $\set X_1$ and a $d$-dimension subspace $\set X_2$. The distance after projection, $\dY$, can be estimated by
\begin{equation}\label{line-sub-dist-change}
\odYS = \dXS-\frac{d}{n}\bigg(\dXS-\frac{d-1}{2}\bigg).
\end{equation}
When $n$ is large enough, the estimation error is bounded by
\begin{equation}\label{line-sub-dist-bound}
    \mathbb{P}\left(\left|\dYS-\odYS\right| > \lambda^2(1-\lambda^2)\varepsilon\right) \lesssim \frac{4}{\varepsilon^2n}.
\end{equation}
\end{cor}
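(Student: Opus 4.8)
The plan is to derive this corollary directly from Lemma \ref{lem-aff-to-dist} and Lemma \ref{lem-line-sub}, exploiting the fact that, for fixed subspace dimensions, the passage from squared affinity to squared distance is an invertible affine map. Concretely, I would apply the affinity--distance identity \eqref{eq-aff-to-dist} with $d_1=1$ and $d_2=d$, both before and after projection. Before projection this gives $\dXS = \frac{d+1}{2}-\lambda^2$ (recalling $\lambda=\affX$), and after projection $\dYS = \frac{d+1}{2}-\affYS$. Hence squared distance is obtained from squared affinity by the single map $t\mapsto \frac{d+1}{2}-t$, a reflection composed with a constant shift that does not depend on the random matrix ${\bm\Phi}$.

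Next I would define the distance estimate by transporting the affinity estimate of Lemma \ref{lem-line-sub} through this same map, $\odYS := \frac{d+1}{2}-\oaffYS$. Substituting $\oaffYS = \lambda^2+\frac{d}{n}(1-\lambda^2)$ from \eqref{lem-line-sub-aff-change} and writing $\lambda^2 = \frac{d+1}{2}-\dXS$, a short simplification using the identity $\dXS-\frac{d-1}{2}=1-\lambda^2$ recovers exactly the claimed form \eqref{line-sub-dist-change}. The only point requiring care is that the constant $\frac{d-1}{2}$ appearing there is what remains after collapsing $\frac{d+1}{2}-\frac{d-1}{2}=1$, so that the factor $(1-\lambda^2)$ carries over verbatim from the affinity estimate to the distance estimate.

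Finally, for the concentration bound I would observe that the additive constant $\frac{d+1}{2}$ cancels in the difference, so that $\dYS-\odYS = \oaffYS-\affYS$ and hence $|\dYS-\odYS| = |\affYS-\oaffYS|$ identically. Consequently the deviation event in \eqref{line-sub-dist-bound} coincides with the one already controlled in \eqref{lem-line-sub-aff-bound}, and the probability bound $\lesssim \frac{4}{\varepsilon^2 n}$ is inherited with nothing further to prove. I do not expect a genuine obstacle in this corollary: the entire content is the deterministic change of variable dictated by Lemma \ref{lem-aff-to-dist}, while the randomness has already been absorbed inside Lemma \ref{lem-line-sub}. The main thing to keep honest is the algebraic bookkeeping that converts the affinity estimate into the distance estimate, since an error in the constant term would not be caught by the probability step.
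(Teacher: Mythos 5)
Your proposal is correct and is exactly the argument the paper intends: the paper dispatches this corollary in one line (``Applying Lemma \ref{lem-aff-to-dist} in Lemma \ref{lem-line-sub} to replace affinity by distance''), and your computation---transporting the estimate through the affine map $t\mapsto\frac{d+1}{2}-t$ and observing that $\dYS-\odYS=\oaffYS-\affYS$ so the deviation event coincides with that of \eqref{lem-line-sub-aff-bound}---is precisely that substitution carried out explicitly, matching also the paper's written proof of the analogous Corollary \ref{cor-sub-distance}.
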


When evaluating the impact of projection with \emph{distance} instead of \emph{affinity}, the changes of the estimate, the concentration, and their dependences on $n, d$, and the original metric are similar with those in Remark \ref{rem-line-sub-1} and \ref{rem-line-sub-2}.

To sum up, we reveal that both affinity and distance between a line and a subspace concentrate on their estimates after random projection. 
By increasing the new ambient dimensionality, the metrics remain almost unchanged with a high probability.

\subsection{Concentration of the affinity between two subspaces after random projection}

We then study the general case of projecting two subspaces of arbitrary dimensions. 
Similar to the approach in the previous subsection, we begin with the concentration of affinity and then transform to distance.

The concentration of affinity between two arbitrary subspaces after random projection are revealed in Theorem \ref{thm-sub}.

\begin{thm}\label{thm-sub}
Suppose $\set X_1, \set X_2\subset \mathbb{R}^N$ are two subspaces with dimension $d_1 \le d_2$, respectively. 
Define
\begin{equation}\label{thm-sub-aff-change}
\oaffYS = \affXS+\frac{d_2}{n}(d_1-\affXS)
\end{equation}
to estimate the affinity between two subspaces after random projection, $\set{X}_k \stackrel{\bm \Phi}{\longrightarrow} \set{Y}_k, k=1,2$. When $n$ is large enough, the estimation error is bounded by
\begin{equation}\label{thm-sub-aff-bound}
    \mathbb{P}\left(\left|\affYS-\oaffYS\right| > \affXS\varepsilon\right) \lesssim \frac{4d_1}{\varepsilon^2n}.
\end{equation}
\end{thm}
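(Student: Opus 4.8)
The plan is to reduce the two–subspace problem to $d_1$ line–subspace problems, each governed by Lemma~\ref{lem-line-sub}, and then to reassemble the pieces. First I would take $\mathbf U_1=[\mathbf u_{1,1},\dots,\mathbf u_{1,d_1}]$ and $\mathbf U_2=[\mathbf u_{2,1},\dots,\mathbf u_{2,d_2}]$ to be the \emph{principal} orthonormal bases produced by the singular value decomposition of $\mathbf U_1^{\rm T}\mathbf U_2$ in Lemma~\ref{lema-principal-angles-2}, so that $\mathbf u_{1,i}^{\rm T}\mathbf u_{2,j}=\cos\theta_i\,\delta_{ij}$. With this choice, \eqref{eq-define-affinity-2} gives $\affXS=\sum_{i=1}^{d_1}\cos^2\theta_i$, and, crucially, each direction $\ell_i:={\rm span}(\mathbf u_{1,i})$ is a line whose affinity with $\set X_2$ is exactly $\aff(\ell_i,\set X_2)=\|\mathbf u_{1,i}^{\rm T}\mathbf U_2\|_2=\cos\theta_i=:\lambda_i$. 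This exhibits $\set X_1$ as a union of $d_1$ mutually orthogonal lines, each paired with the $d_2$–dimensional subspace $\set X_2$ in precisely the situation of Lemma~\ref{lem-line-sub}.

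Applying Lemma~\ref{lem-line-sub} to each pair $(\ell_i,\set X_2)$, the projected quantity $\aff(\bm\Phi\ell_i,\set Y_2)^2$ concentrates around $\lambda_i^2+\tfrac{d_2}{n}(1-\lambda_i^2)$. Summing the $d_1$ estimates yields
\begin{equation}
\sum_{i=1}^{d_1}\Big(\lambda_i^2+\tfrac{d_2}{n}(1-\lambda_i^2)\Big)=\affXS+\frac{d_2}{n}\big(d_1-\affXS\big)=\oaffYS,
\end{equation}
which is precisely the estimate \eqref{thm-sub-aff-change}. For the concentration I would union-bound over the $d_1$ events; on the complementary event each term lies within $\lambda_i^2(1-\lambda_i^2)\varepsilon$ of its estimate, so the triangle inequality gives an aggregate deviation of at most $\sum_{i=1}^{d_1}\lambda_i^2(1-\lambda_i^2)\varepsilon\le\varepsilon\sum_{i=1}^{d_1}\lambda_i^2=\affXS\varepsilon$, while the failure probability is at most $d_1\cdot\tfrac{4}{\varepsilon^2n}=\tfrac{4d_1}{\varepsilon^2n}$, matching \eqref{thm-sub-aff-bound}.

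The hard part — and what makes this a statement about subspaces rather than points — is that the computation above controls $\sum_i\aff(\bm\Phi\ell_i,\set Y_2)^2$, whereas the quantity we actually want is $\affYS=\mathrm{tr}(\mathbf P_{\set Y_1}\mathbf P_{\set Y_2})$, built from the \emph{orthogonal projector} onto $\set Y_1$. A Gaussian map does not preserve orthonormality: the images $\bm\Phi\mathbf u_{1,1},\dots,\bm\Phi\mathbf u_{1,d_1}$ span $\set Y_1$ but are no longer orthonormal, so summing line affinities corresponds to column-wise normalization rather than genuine orthonormalization. Writing $\tilde{\mathbf W}_1$ for the matrix of normalized images and $\tilde{\mathbf W}_1^{\rm T}\tilde{\mathbf W}_1=\mathbf I+\mathbf E$ with $\mathbf E$ zero on the diagonal, the discrepancy is
\begin{equation}
\sum_{i=1}^{d_1}\aff(\bm\Phi\ell_i,\set Y_2)^2-\affYS=\mathrm{tr}\!\big(\mathbf P_{\set Y_2}\,(\tilde{\mathbf W}_1\tilde{\mathbf W}_1^{\rm T}-\mathbf P_{\set Y_1})\big)=\mathrm{tr}\!\big(\mathbf E\,\tilde{\mathbf W}_1^{\rm T}\mathbf P_{\set Y_2}\tilde{\mathbf W}_1\big)+O(\|\mathbf E\|^2).
\end{equation}

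Thus the crux is to show that $\mathbf E$ is small. Since $\mathbb E[(\bm\Phi\mathbf u_{1,i})^{\rm T}(\bm\Phi\mathbf u_{1,j})]=\mathbf u_{1,i}^{\rm T}\mathbf u_{1,j}=0$ for $i\ne j$ and the fluctuations of these quadratic forms are of order $1/\sqrt n$, I would argue that the off-diagonal entries of $\mathbf E$ are $O(1/\sqrt n)$ with high probability — i.e. column-wise normalization well approximates Gram–Schmidt orthogonalization in the high-dimensional regime, the heuristic advertised in the introduction. The genuinely delicate point is that the leading discrepancy term above is not obviously lower order than the $d_2/n$ correction, so one must verify that its net bias and fluctuations are dominated by the tolerance $\affXS\varepsilon$ once $n$ is large, allowing them to be absorbed into the ``$\lesssim$'' and ``$n$ large enough'' conventions. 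Establishing this orthonormalization estimate rigorously is the main obstacle, and it is exactly the step that upgrades the clean line-by-line bound into the subspace statement \eqref{thm-sub-aff-bound}.
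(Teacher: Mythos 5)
Your proposal follows the same route as the paper's own proof: the SVD-aligned principal bases satisfying $\mathbf u_{1,i}^{\rm T}\mathbf u_{2,j}=\lambda_i\delta_{ij}$, the decomposition of $\set X_1$ into $d_1$ orthogonal lines each handled by Lemma \ref{lem-line-sub}, the union bound over the $d_1$ events, and the final relaxation $\sum_i\lambda_i^2(1-\lambda_i^2)\le\affXS$ are precisely Steps 1, 3 and 4 of the paper's argument (your intermediate bound with tolerance $\sum_i\lambda_i^2(1-\lambda_i^2)\varepsilon$ is exactly the paper's Lemma \ref{thm-sub-tight}). However, the step you yourself flag as ``the main obstacle'' --- rigorously controlling the discrepancy between $\sum_i\aff(\bm\Phi\ell_i,\set Y_2)^2$ and $\affYS$ caused by the loss of orthonormality of $\bm\Phi\mathbf u_{1,1},\dots,\bm\Phi\mathbf u_{1,d_1}$ --- is left as a heuristic (``I would argue that the off-diagonal entries of $\mathbf E$ are $O(1/\sqrt n)$''), and this is a genuine gap: without it, the line-by-line computation controls a quantity built from a non-orthonormal frame and says nothing about the actual projected affinity.

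The paper closes this gap with dedicated machinery that your sketch would need to reproduce. Lemma \ref{lem-math} is a deterministic perturbation analysis of Gram--Schmidt applied to a column-normalized matrix: writing ${\bf V}-\bar{\bf A}=\bar{\bf A}\bar{\bf U}$ with $\bar{\bf R}=\bar{\bf A}^{\rm T}\bar{\bf A}-{\bf I}$, it shows the diagonal of $\bar{\bf U}$ is $O(\|\bar{\bf R}\|_F^2)$ while the off-diagonal entries equal $-\bar r_{ji}$ up to $o(\|\bar{\bf R}\|_F)$; this yields Corollary \ref{C2}, the multiplicative bound $\left|\left\|{\bf V}^{\rm T}{\bf W}\right\|_F^2-\left\|\bar{\bf A}^{\rm T}{\bf W}\right\|_F^2\right|\lesssim d\left\|\bar{\bf A}^{\rm T}{\bf W}\right\|_F^2\varepsilon_3$ whenever all $|\bar r_{ji}|<\varepsilon_3$. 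Combined with Lemma \ref{L3}, which guarantees $|\bar{\bf a}_{1,i}^{\rm T}\bar{\bf a}_{1,j}|\le\varepsilon_3$ for all pairs except on an event of probability at most $\tfrac{d_1(d_1-1)}{2}\exp(-\varepsilon_3^2n/2)$, the orthonormalization error is bounded by $d_1\|\bar{\bf A}_1^{\rm T}{\bf V}_2\|_F^2\varepsilon_3$ with exponentially small failure probability. This also resolves the ``delicate point'' you raise about the discrepancy not being lower order than the $d_2/n$ correction: it does not need to be. Since the error bound is multiplicative in $\varepsilon_3$ and its failure probability decays exponentially in $n$ while the line-lemma probabilities decay only like $1/n$, Lemma \ref{lemma-simp} allows one to take $\varepsilon_3$ an arbitrarily small fraction of the target tolerance and absorb both the deviation and the exponentially rare event, leaving the leading rate $4d_1/(\varepsilon^2n)$ in \eqref{thm-sub-aff-bound} untouched. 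Your trace identity and the $O(1/\sqrt n)$ heuristic for ${\bf E}$ are correct as far as they go, but upgrading them to the theorem requires exactly this quantitative Gram--Schmidt lemma plus the absorption argument.
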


\begin{proof}
The proof is postponed to Section \ref{Appendixproof-thm-sub}. 
\end{proof}

Recalling the discussions about projecting a line and a subspace in Remark \ref{rem-line-sub-1}, we may readily check that item 1) and 3) also hold in the situation of projecting two subspaces in Theorem \ref{thm-sub}.
We will discuss the other two items in Remark \ref{rem-sub-1}.

\begin{rem}\label{rem-sub-1}
1)
The increment of affinity caused by projection is in direct ratio to the larger dimension of two subspaces. 
This comes from the fact that each basis of the lower dimensional subspace may be deemed as a one-dimensional subspace, the affinity between which and the higher dimensional subspace is evaluated. 
2)
The increment caused by projection is in direct ratio to $(d_1-\affXS)$.
Notice that $\affXS\in[0,d_1]$.
When $\affXS$ is close to zero, which means that two subspaces are almost orthogonal to each other, the increment is the largest 
and in direct ratio to $d_1$.
When $\affXS$ is close to $d_1$, which means that the lower-dimensional subspace is almost contained within the other subspace, the increment must be the smallest among all situations. 
\end{rem}

When $d_1$ reduces to one, it's easy to check that the estimate of projected affinity \eqref{thm-sub-aff-change} in Theorem \ref{thm-sub} degenerates to \eqref{lem-line-sub-aff-change} in Lemma \ref{lem-line-sub}. 
However, it is obvious that the bound of \eqref{thm-sub-aff-bound} can not reduce to \eqref{lem-line-sub-aff-bound}. 
The reason is that the former is a loose result that undergoes much relaxation. 
Another version of Theorem \ref{thm-sub} is given in Lemma \ref{thm-sub-tight}, which is tight enough and exactly degenerates to Lemma \ref{lem-line-sub}.

\begin{lem}\label{thm-sub-tight}
Following the same conditions and notations in Theorem \ref{thm-sub}, when $n$ is large enough, the estimation error is bounded by
\begin{equation}\label{thm-sub-aff-bound-tight}
    \mathbb{P}\bigg(\left|\affYS-\oaffYS\right| > \sum_{i=1}^{d_1}\lambda_i^2(1-\lambda_i^2)\varepsilon\bigg) \lesssim \frac{4d_1}{\varepsilon^2n},
\end{equation}
where $\lambda_i = \cos\theta_i$ and $\theta_i$ denotes the principal angles between the original subspaces.
\end{lem}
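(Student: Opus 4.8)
The plan is to reduce the subspace-to-subspace problem to $d_1$ independent instances of the line-to-subspace problem already settled in Lemma~\ref{lem-line-sub}, tracking each principal angle separately instead of relaxing as in Theorem~\ref{thm-sub}. First I would fix orthonormal bases ${\bf U}_1 = [{\bf u}_{1,1},\dots,{\bf u}_{1,d_1}]$ and ${\bf U}_2 = [{\bf u}_{2,1},\dots,{\bf u}_{2,d_2}]$ of $\set X_1$ and $\set X_2$ consisting of principal vectors, so that ${\bf u}_{1,i}^{\rm T}{\bf u}_{2,j} = \lambda_i \delta_{ij}$ by Lemma~\ref{lema-principal-angles-2}. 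In this basis the affinity decomposes directionwise: the line $\set L_i := {\rm span}({\bf u}_{1,i})$ has affinity exactly $\lambda_i$ with $\set X_2$, since $\|{\bf U}_2^{\rm T}{\bf u}_{1,i}\| = \lambda_i$ by Lemma~\ref{defi-affinity}, and $\affXS = \sum_{i=1}^{d_1}\lambda_i^2$.

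The key reduction is to show that after projection the squared affinity still splits as $\affYS \approx \sum_{i=1}^{d_1} \aff^2(\set L_i', \set Y_2)$ with $\set L_i' = {\bm \Phi}\set L_i$. By Lemma~\ref{defi-affinity}, $\affYS = \|{\bf V}_1^{\rm T}{\bf V}_2\|_F^2 = \sum_i \|{\bf V}_2^{\rm T}{\bf v}_{1,i}\|^2$ for any orthonormal basis ${\bf V}_1 = [{\bf v}_{1,1},\dots,{\bf v}_{1,d_1}]$ of $\set Y_1$ and orthonormal basis ${\bf V}_2$ of $\set Y_2$, and each summand is precisely the squared affinity between ${\rm span}({\bf v}_{1,i})$ and $\set Y_2$. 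I would take ${\bf v}_{1,i}$ to be the column-normalized images ${\bm \Phi}{\bf u}_{1,i}/\|{\bm \Phi}{\bf u}_{1,i}\|$. Because ${\bm \Phi}$ approximately preserves the inner products of the orthonormal ${\bf u}_{1,i}$ when $d_1 \le d_2 \ll n$, these normalized images form an approximate orthonormal basis of $\set Y_1$, so column-wise normalization stands in for Gram--Schmidt with a controllable error. With this substitution each $\aff^2({\rm span}({\bf v}_{1,i}), \set Y_2)$ is exactly the projected line-subspace affinity for the line $\set L_i$ of original affinity $\lambda_i$ against the $d_2$-dimensional $\set X_2$.

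I would then invoke Lemma~\ref{lem-line-sub} termwise. For each $i$ it supplies the estimate $\lambda_i^2 + \tfrac{d_2}{n}(1-\lambda_i^2)$ with $\mathbb P(|\aff^2(\set L_i', \set Y_2) - \lambda_i^2 - \tfrac{d_2}{n}(1-\lambda_i^2)| > \lambda_i^2(1-\lambda_i^2)\varepsilon) \lesssim \tfrac{4}{\varepsilon^2 n}$. Summing the estimates telescopes to $\sum_{i=1}^{d_1}[\lambda_i^2 + \tfrac{d_2}{n}(1-\lambda_i^2)] = \affXS + \tfrac{d_2}{n}(d_1 - \affXS) = \oaffYS$, matching \eqref{thm-sub-aff-change}. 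A triangle inequality then bounds the aggregate deviation by $\sum_{i=1}^{d_1}\lambda_i^2(1-\lambda_i^2)\varepsilon$, and a union bound over the $d_1$ directional events gives failure probability $\lesssim \tfrac{4d_1}{\varepsilon^2 n}$, which is exactly \eqref{thm-sub-aff-bound-tight}. Retaining the factors $(1-\lambda_i^2)$ rather than bounding them by $1$ is the only place this argument departs from Theorem~\ref{thm-sub}, and it is what sharpens the threshold.

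The main obstacle is the reduction step itself: making rigorous the claim that $\affYS$ equals $\sum_i \aff^2(\set L_i', \set Y_2)$ up to negligible error. The projected vectors ${\bm \Phi}{\bf u}_{1,i}$ are neither exactly unit-norm nor exactly orthogonal, so replacing a true orthonormal basis of $\set Y_1$ by their normalizations introduces cross terms ${\bf v}_{1,i}^{\rm T}{\bf v}_{1,j}$, $i\ne j$, that leak into $\|{\bf V}_1^{\rm T}{\bf V}_2\|_F^2$. I would control these by quantifying how close the Gram matrix $({\bm \Phi}{\bf U}_1)^{\rm T}({\bm \Phi}{\bf U}_1)$ is to the identity---its off-diagonal entries concentrate at $0$ and its diagonal entries at $1$ at rate $1/n$---and showing the induced perturbation of the affinity is of lower order than the $1/n$ terms already present in the estimate, hence absorbed in the $\lesssim$ notation for large $n$. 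Verifying that this orthogonalization error neither corrupts the per-direction probability bound nor degrades the final rate $4d_1/(\varepsilon^2 n)$ is the delicate part; everything after the decomposition is the termwise bookkeeping described above.
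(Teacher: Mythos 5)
Your proposal follows essentially the same route as the paper's proof: choose principal-vector bases via the SVD so that ${\bf u}_{1,i}^{\rm T}{\bf u}_{2,j}=\lambda_i\delta_{ij}$, replace the Gram--Schmidt basis of $\set Y_1$ by the column-normalized images ${\bm\Phi}{\bf u}_{1,i}/\|{\bm\Phi}{\bf u}_{1,i}\|$, apply Lemma \ref{lem-line-sub} termwise to the $d_1$ line-versus-subspace problems, and finish with a triangle inequality and union bound. The ``delicate part'' you flag---controlling the leakage from the normalized-but-not-orthogonal basis---is exactly what the paper handles with Lemma \ref{L3}, Lemma \ref{lem-math}, Corollary \ref{C2}, and Lemma \ref{lemma-simp} (the exponential decay of the off-diagonal concentration makes that error negligible against the $1/n$ rate), so your sketch of that step matches the paper's resolution.
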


\begin{proof}
The proof is postponed to Section \ref{Appendixproof-thm-sub}.
\end{proof}

We keep both Theorem \ref{thm-sub} and Lemma \ref{thm-sub-tight} deliberately as the main results, because they provide complementary usages.
While \eqref{thm-sub-aff-bound} produces a clear formulation which is ready to calculate by using the original affinity as a whole, \eqref{thm-sub-aff-bound-tight} reveals the relation between estimating accuracy and principal angles for us perceiving the intension.

Recalling Remark \ref{rem-line-sub-2} about projecting a line and a subspace, the first two items both hold in the scenario of projecting two subspaces.
We will recheck the last item.

\begin{rem}\label{rem-sub-2}
With a given probability, the accuracy of estimation depends on all principal angles, i.e., in direct ratio to the sum of all $\lambda_i^2(1-\lambda_i^2)$.
This means that when two subspaces are almost orthogonal to each other or one is contained within the other, the estimate is accurate.
The reason is as that in Remark \ref{rem-line-sub-2}.3).
In order to simplify notation and avoid using the concept of principal angles, the above bound is relaxed to $\affXS$ in \eqref{thm-sub-aff-bound}, as is accurate when two subspaces are orthogonal to each other.
The relaxation leads to concise expression by only using affinity without aware of principal angles.
Therefore, we recommend to apply the two bounds in respective situations.
\end{rem}

Finally, we want to highlight that the increment of affinity, the estimate accurancy, and the probability of deviation are determined by the dimensions of two subspaces, the original affinity, and the new ambient dimension, rather than the dimension of the original ambient space. This is also obvious.

Using Lemma \ref{lem-aff-to-dist} and Theorem \ref{thm-sub}, we may reach the concentration of distance between two subspaces after random projection.

\begin{cor}\label{cor-sub-distance}
Let $\dX$ denote the distance between subspaces $\set X_1$ and $\set X_2$ of dimensions $d_1 \le d_2$. The distance after Gaussian random projection, $\dY$, can be estimated by
\begin{equation}\label{sub-distance-estimate}
\odYS = \dXS-\frac{d_2}{n}\left(\dXS-\frac{d_2-d_1}{2}\right).
\end{equation}
When $n$ is large enough, the estimation error is bounded by
\begin{equation}\label{sub-dist-bound}
    \mathbb{P}\left(\left|\dYS-\odYS\right| > \dXS\varepsilon\right)  \lesssim \frac{4d_1}{\varepsilon^2n}.
\end{equation}
\end{cor}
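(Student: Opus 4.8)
The plan is to obtain Corollary \ref{cor-sub-distance} as the distance-valued counterpart of Theorem \ref{thm-sub}, translated through the affine dictionary supplied by Lemma \ref{lem-aff-to-dist}. Two facts underpin everything. First, Lemma \ref{lem-aff-to-dist} gives $\dXS = \frac{d_1+d_2}{2} - \affXS$ for the original subspaces. Second, by the remark following Definition \ref{defi-project-subspace} the projection preserves the dimensions $d_1, d_2$ with probability one, so the \emph{same} identity holds after projection: $\dYS = \frac{d_1+d_2}{2} - \affYS$. Thus $\dYS$ and $\affYS$ differ only by an additive constant and a sign, and this is exactly what lets each affinity statement be rewritten for distance.

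For the point estimate, I would simply \emph{define} the distance estimate so as to mirror the affinity estimate,
\begin{equation*}
\odYS := \frac{d_1+d_2}{2} - \oaffYS,
\end{equation*}
and then substitute the formula \eqref{thm-sub-aff-change} for $\oaffYS$. Using $\frac{d_1+d_2}{2} - \affXS = \dXS$ together with the elementary identity $d_1 - \affXS = \dXS - \frac{d_2-d_1}{2}$ (obtained by replacing $\affXS$ with $\frac{d_1+d_2}{2} - \dXS$), the expression collapses to $\odYS = \dXS - \frac{d_2}{n}\bigl(\dXS - \frac{d_2-d_1}{2}\bigr)$, which is precisely \eqref{sub-distance-estimate}. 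This step is a routine algebraic rearrangement carrying no probabilistic content.

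For the concentration bound, the crucial observation is that subtracting the two displayed identities yields $\dYS - \odYS = \oaffYS - \affYS$, hence $\left|\dYS - \odYS\right| = \left|\affYS - \oaffYS\right|$ as an \emph{exact} pointwise identity between random variables. Consequently the event $\{\left|\dYS - \odYS\right| > \dXS\varepsilon\}$ coincides with $\{\left|\affYS - \oaffYS\right| > \dXS\varepsilon\}$, and I would feed the latter into the bound \eqref{thm-sub-aff-bound} of Theorem \ref{thm-sub}.

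The main obstacle is the mismatch of normalizing factors in the threshold: Theorem \ref{thm-sub} is stated with threshold $\affXS\varepsilon$, whereas the corollary normalizes by $\dXS\varepsilon$. Aligning the two by reparametrizing the free constant $\varepsilon$ rescales the right-hand side by the ratio $(\affXS/\dXS)^2$, so a literal substitution produces $\frac{4d_1}{\varepsilon^2 n}\cdot\frac{\affXS^2}{\dXS^2}$ rather than the clean $\frac{4d_1}{\varepsilon^2 n}$. Reconciling these is exactly where the $\lesssim$-convention adopted in this section (and the regime of large $n$) is meant to absorb the bounded prefactor; the honest alternative I would pursue is to re-run the second-moment argument underlying Theorem \ref{thm-sub} directly on $\dYS$ with the distance-normalized threshold from the outset, which delivers the stated form $\frac{4d_1}{\varepsilon^2 n}$ without any intermediate rescaling. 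Care in how the variance term is bounded against $\dXS^2$ (rather than $\affXS^2$) is the delicate part, but everything else is a mechanical transcription of the affinity result.
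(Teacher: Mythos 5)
Your first two steps coincide with the paper's: the pointwise identity $\dYS - \odYS = \oaffYS - \affYS$ (via Lemma \ref{lem-aff-to-dist} applied before and after projection, dimensions being preserved almost surely) and the algebraic rearrangement yielding \eqref{sub-distance-estimate} are both correct. The gap is in the concentration step. You route the bound through Theorem \ref{thm-sub}, whose deviation threshold is $\affXS\varepsilon$, and then pass to the threshold $\dXS\varepsilon$ by rescaling $\varepsilon$; this produces the prefactor $(\affXS/\dXS)^2$, which you describe as a ``bounded prefactor'' that the $\lesssim$-convention should absorb. That prefactor is not bounded: when $d_1=d_2$ and the subspaces nearly coincide, $\affXS \to d_1$ while $\dXS \to 0$, so $(\affXS/\dXS)^2$ blows up. The $\lesssim$ notation in this paper (see the footnote in Section 3) only licenses dropping terms that are asymptotically negligible in $n$; it cannot absorb a factor that is constant in $n$ but arbitrarily large depending on the subspace pair. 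So your primary route fails exactly in the regime where $\dXS$ is small, and your fallback (``re-run the second-moment argument'') is left unspecified at precisely the point where the work lies.

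The missing idea is to invoke Lemma \ref{thm-sub-tight} rather than Theorem \ref{thm-sub}. The tight bound has threshold $\sum_{i=1}^{d_1}\lambda_i^2(1-\lambda_i^2)\varepsilon$, and the paper's proof simply observes
\begin{equation*}
\sum_{i=1}^{d_1}\lambda_i^2(1-\lambda_i^2) \le \sum_{i=1}^{d_1}(1-\lambda_i^2) = d_1 - \affXS \le \dXS,
\end{equation*}
the last inequality holding because $\dXS - (d_1-\affXS) = \frac{d_2-d_1}{2} \ge 0$ by Lemma \ref{lem-aff-to-dist}. In words: bound $\lambda_i^2(1-\lambda_i^2)$ by $1-\lambda_i^2$ rather than by $\lambda_i^2$ --- the latter relaxation is what produced Theorem \ref{thm-sub} and your normalization mismatch. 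Since the distance threshold $\dXS\varepsilon$ dominates the tight threshold, the event $\left\{\left|\affYS-\oaffYS\right| > \dXS\varepsilon\right\}$ is contained in the tight deviation event, and the bound $\frac{4d_1}{\varepsilon^2 n}$ carries over verbatim, with no rescaling of $\varepsilon$ and no prefactor. This is exactly why the paper keeps both Theorem \ref{thm-sub} and Lemma \ref{thm-sub-tight}: the corollary genuinely needs the tight version.
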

\begin{proof}
The proof is postponed to Section \ref{Appendixproof-cor-sub-distance}.
\end{proof}

\subsection{Restricted Isometry Property of random projection for subspaces}

Based on the above results, we are ready to state the RIP of subspaces.
We will begin with a special case of two given subspaces and then extend to a general case of any two candidates in a finite set of subspaces.
As a consequence, we generalize the JL lemma from set of points to set of subspaces.

\begin{thm}\label{thm-rip}
Suppose $\set X_1, \set X_2\subset \mathbb{R}^N$ are two subspaces with dimension $d_1 \le d_2$, respectively. 
If $\set X_1$ and $\set X_2$ are projected into $\mathbb R^n$ by a Gaussian random matrix ${\bm\Phi} \in\mathbb{R}^{n\times N}$, $\set{X}_k \stackrel{\bm \Phi}{\longrightarrow} \set{Y}_k, k=1,2$, then we have
\begin{equation}\label{rip}
    (1-\varepsilon)\dXS \le \dYS  \le (1+\varepsilon)\dXS,
\end{equation}
with probability at least 
\begin{equation}\label{eq-rip-probability}
	1 - \frac{4d_1}{(\varepsilon-{d_2}/{n})^2n},
\end{equation}
when $n$ is large enough.
\end{thm}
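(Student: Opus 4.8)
The plan is to reduce the statement to the concentration result already established in Corollary \ref{cor-sub-distance}. First I would observe that the two-sided bound \eqref{rip} is merely a rewriting of the single inequality $|\dYS - \dXS| \le \varepsilon\dXS$, so it suffices to control the deviation of the projected squared distance from the \emph{original} squared distance. The natural device is to insert the estimate $\odYS$ supplied by Corollary \ref{cor-sub-distance} and split by the triangle inequality,
$$
|\dYS - \dXS| \le |\dYS - \odYS| + |\odYS - \dXS|,
$$
so that the randomness is isolated in the first term while the second is a purely deterministic bias.

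Next I would bound the deterministic bias. Using the explicit form $\odYS = \dXS - \frac{d_2}{n}\big(\dXS - \frac{d_2-d_1}{2}\big)$ from \eqref{sub-distance-estimate}, the bias equals $|\odYS - \dXS| = \frac{d_2}{n}\big(\dXS - \frac{d_2-d_1}{2}\big)$. To show this is at most $\frac{d_2}{n}\dXS$ I would invoke Lemma \ref{lem-aff-to-dist}, which gives $\dXS = \frac{d_1+d_2}{2} - \affXS$; since $\affXS \in [0, d_1]$ (Remark \ref{rem-sub-1}), one has $\frac{d_2-d_1}{2} \le \dXS \le \frac{d_1+d_2}{2}$. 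In particular $\dXS - \frac{d_2-d_1}{2}$ is nonnegative and bounded above by $\dXS$, whence $|\odYS - \dXS| \le \frac{d_2}{n}\dXS$.

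Finally I would absorb the bias into the tolerance. Combining the two previous steps, the event $|\dYS - \odYS| \le \big(\varepsilon - \frac{d_2}{n}\big)\dXS$ forces $|\dYS - \dXS| \le \varepsilon\dXS$, i.e.\ the RIP \eqref{rip}. Applying the concentration bound \eqref{sub-dist-bound} of Corollary \ref{cor-sub-distance} with the shifted tolerance $\varepsilon - d_2/n$ then yields
$$
\mathbb{P}\Big(|\dYS - \dXS| > \varepsilon\dXS\Big) \le \mathbb{P}\Big(|\dYS - \odYS| > \big(\varepsilon - \tfrac{d_2}{n}\big)\dXS\Big) \lesssim \frac{4d_1}{(\varepsilon - d_2/n)^2 n},
$$
and passing to the complementary event gives the claimed probability \eqref{eq-rip-probability}.

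The only genuine subtlety — and the reason for the hypothesis that $n$ be large enough — is that this argument requires the shifted tolerance $\varepsilon - d_2/n$ to be positive; otherwise the deterministic bias alone could exceed the allowed distortion and the bound would be vacuous. Since $d_2 \ll n$, for fixed $\varepsilon$ this holds once $n$ surpasses $d_2/\varepsilon$, precisely the regime in which Corollary \ref{cor-sub-distance} is stated. No further probabilistic work is needed, as all the concentration has been packaged into the corollary; the real content here is the clean bookkeeping that converts a \emph{concentration-around-the-biased-estimate} statement into a genuine isometry statement centered at the true distance $\dXS$.
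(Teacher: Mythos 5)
Your proposal is correct and follows essentially the same route as the paper: both reduce Theorem \ref{thm-rip} to Corollary \ref{cor-sub-distance} via the triangle inequality, use $\dXS \ge (d_2-d_1)/2$ from Lemma \ref{lem-aff-to-dist} to control the deterministic bias, and absorb that bias into the shifted tolerance $\varepsilon - d_2/n$. The only cosmetic difference is that you relax the bias to $\frac{d_2}{n}\dXS$ at the outset, whereas the paper carries the exact bias through and performs the equivalent relaxation in the final probability denominator; the resulting bound is identical.
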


\begin{proof}
The proof is postponed to Section \ref{Appendixproof-thm-rip}.
\end{proof}

Theorem \ref{thm-rip} shows that when $n$ is sufficiently large, the distance between two subspaces remains unchanged with high probability after random projection. 

According to Theorem \ref{thm-rip}, we can readily conclude the RIP of finite subspaces set in Theorem \ref{thm-rip-set}.

\begin{rem}
It should be noticed that the RIP for all low-dimensional subspaces does not hold in a way similar to sparse signals. 
Even for all one-dimensional subspaces which include all directions in the ambient space, any projection matrix will reduce some subspaces to the origin.
Therefore there is no RIP for compressing all one-dimensional subspaces in any cases.
\end{rem}

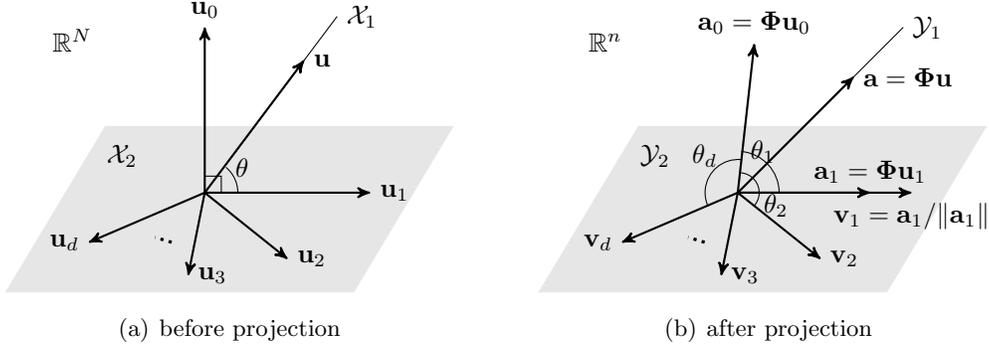
\begin{figure*}[!h]
  \centering
  \subfigure[before projection]{
    \label{fig:subfig:a}
\begin{small}
\begin{tikzpicture}[
    scale=2.2,
    axis/.style={ thick, ->, >=stealth'},
    important line/.style={thick},
    dashed line/.style={dashed, thin},
    pile/.style={thick, ->, >=stealth', shorten <=2pt, shorten >=2pt},
    every node/.style={color=black}
    ]
    \node[above] at (-0.8,0.8) {$\mathbb{R}^N$};
    \filldraw[fill=gray!20,draw=gray!20] (1.5,0.4) -- (0.9,-0.6) -- (-1.2,-0.6) -- (-0.6,0.4) -- (1.5,0.4);
    \node[above] at (-0.5,0.1) {$\mathcal{X}_2$};
    \draw[axis] (0,0) coordinate (O) -- (0.6,0.8) node(u) [right] {${\bf u}$};
    \draw (O) -- (0.8,1.066) node [right] {$\mathcal{X}_1$};
    \draw (0.2,0) arc (0:50:0.2) node [right] {$\theta$};
    \draw[axis] (O) -- (0,1) node(u0) [above] {${\bf u}_0$};
    \draw[axis] (O) -- (1,0) node(u1) [right] {${\bf u}_1$};
    \draw (0.1,0) -- (0.1,0.1) -- (0,0.1);
    \draw[axis] (O) -- (0.5,-0.4) node(u2) [right] {${\bf u}_2$};
    \draw[axis] (O) -- (-0.1,-0.5) node(u3) [right] {${\bf u}_3$};
    \draw[axis] (O) -- (-0.7,-0.3) node(ud) [left] {${\bf u}_d$};
    \draw[dotted,very thick] (-0.2,-0.3) -- (-0.3,-0.27);
\end{tikzpicture}
\end{small}
  }
  \hspace{1em}
  \subfigure[after projection]{
    \label{fig:subfig:b} 
\begin{small}
\begin{tikzpicture}[
    scale=2.2,
    axis/.style={ thick, ->, >=stealth'},
    important line/.style={thick},
    dashed line/.style={dashed, thin},
    pile/.style={thick, ->, >=stealth', shorten <=2pt, shorten >=2pt},
    every node/.style={color=black}
    ]
    \node[above] at (-0.8,0.8) {$\mathbb{R}^n$};
    \filldraw[fill=gray!20,draw=gray!20] (1.5,0.4) -- (0.9,-0.6) -- (-1.2,-0.6) -- (-0.6,0.4) -- (1.5,0.4);
    \node[above] at (-0.5,0.1) {$\mathcal{Y}_2$};
    \draw[axis] (0,0) coordinate (O) -- (0.7,0.7) node(a) [right] {${\bf a}={\bm\Phi}{\bf u}$};
    \draw (O) -- (1,1) node [right] {$\mathcal{Y}_1$};
    \draw[axis] (O) -- (0.1,0.9) node(a0) [above] {${\bf a}_0={\bm\Phi}{\bf u}_0$};
    \draw[axis] (O) -- (0.8,0) node(a1) [above] {${\bf a}_1={\bm\Phi}{\bf u}_1$};
    \draw (0.25,0) arc (0:85:0.25) node [above,right] {$\theta_1$};
    \draw[axis] (O) -- (1.05,0) node(v1) [below] {${\bf v}_1={\bf a}_1/\|{\bf a}_1\|$};
    \draw[axis] (O) -- (0.5,-0.4) node(v2) [right] {${\bf v}_2$};
    \draw (0.02,0.12) arc (85:-40:0.12) node [right] {$\theta_2$};
    \draw[axis] (O) -- (-0.1,-0.5) node(v3) [right] {${\bf v}_3$};
    \draw[axis] (O) -- (-0.7,-0.3) node(vd) [left] {${\bf v}_d$};
    \draw (0.02,0.2) arc (85:205:0.2);
    \node[above] at (-0.2,0.1) {$\theta_d$};
    \draw[dotted,very thick] (-0.2,-0.3) -- (-0.3,-0.27);
\end{tikzpicture}
\end{small}
  }
  \caption{Visualization of the notations in the proof of Lemma \ref{lem-line-sub}.}
  \label{fig:visualLemma2} 
\end{figure*}

\begin{rem}
One may already notice that affinity, which measures similarity, can not provide a characteristic like RIP. 
A direct example is that for two independent subspaces with zero affinity, the projected affinity always deviates from zero.
On the contrary, for two subspaces of zero distance, which only happens in the situation that the subspaces are exactly identical, the projected distance is of course zero.
This visualizes our motivation of introducing the distance and building a metric space in Section \ref{sec-distance}.
\end{rem}

\section{Proofs of the Main Results}

Before proving the main results, we would like to define Gaussian random vector and introduce an important tool in Lemma \ref{lemma-simp} for simplifying our conclusion.

\begin{defi}
A Gaussian random vector ${\bf a} \in\mathbb{R}^n$ has i.i.d. zero-mean Gaussian entries with variance $\sigma^2 = {1}/{n}$.
\end{defi}

\begin{lem}\label{lemma-simp}
Assume that $X(n)\ge0$ is a random variable indexed by $n\in\mathbb{N}^+$ and for all positive $\varepsilon_1, \varepsilon_2$, it holds that
\begin{equation}\label{Appendixequation1}
\mathbb{P}\left(X(n) > c_1\varepsilon_1 + c_2\varepsilon_2 \big| n\right) \le p(\varepsilon_1, n) + q(\varepsilon_2, n),
\end{equation}
where $c_1$, $c_2$ are two positive constants, and $p(\cdot)$, $q(\cdot)$ are two real-valued functions. If
\begin{equation}\label{Appendixequation1.5}
\lim_{n \to \infty} \frac{q(\varepsilon_2, n)}{p(\varepsilon_1, n)} = 0,\quad \forall \varepsilon_1, \varepsilon_2 > 0,
\end{equation}
then we may simplify \eqref{Appendixequation1} by
\begin{equation}\label{Appendixequation2}
\mathbb{P}\left(X(n) > c_1\varepsilon \big| n\right) \lesssim p(\varepsilon, n),
\end{equation}
when $n$ is large enough.
\end{lem}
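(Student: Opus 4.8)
The plan is to treat this as a purely analytic manipulation of the given inequality; the fact that $X(n)$ is random plays no role beyond a trivial event identity. The key idea is to split the single threshold $c_1\varepsilon$ into the two-parameter form $c_1\varepsilon_1 + c_2\varepsilon_2$ demanded by hypothesis \eqref{Appendixequation1}, choosing the split so that the subordinate term $q$ is rendered negligible while the dominant term $p$ is driven back toward $p(\varepsilon, n)$.

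Concretely, I would fix an arbitrary small constant $\varepsilon_2 > 0$ (independent of $n$) and set $\varepsilon_1 = \varepsilon - c_2\varepsilon_2/c_1$, which is positive as soon as $\varepsilon_2 < c_1\varepsilon/c_2$. With this choice one has $c_1\varepsilon_1 + c_2\varepsilon_2 = c_1\varepsilon$ \emph{exactly}, so the events $\{X(n) > c_1\varepsilon\}$ and $\{X(n) > c_1\varepsilon_1 + c_2\varepsilon_2\}$ coincide, and applying \eqref{Appendixequation1} gives
$$\mathbb{P}\left(X(n) > c_1\varepsilon \mid n\right) \le p\!\left(\varepsilon - \frac{c_2\varepsilon_2}{c_1},\, n\right) + q(\varepsilon_2, n).$$
Now invoke the ratio condition \eqref{Appendixequation1.5}: since $\varepsilon_1$ and $\varepsilon_2$ are held fixed, $q(\varepsilon_2, n) = o\!\left(p(\varepsilon_1, n)\right)$ as $n\to\infty$, so the right-hand side equals $p(\varepsilon - c_2\varepsilon_2/c_1,\, n)\,(1 + o(1))$. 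In the equivalence sense of $\lesssim$, this already yields $\mathbb{P}\left(X(n) > c_1\varepsilon \mid n\right) \lesssim p(\varepsilon - c_2\varepsilon_2/c_1,\, n)$ for every admissible fixed $\varepsilon_2$.

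The remaining step, and the place where care is needed, is to pass from $p(\varepsilon - c_2\varepsilon_2/c_1,\, n)$ back to $p(\varepsilon, n)$ by letting $\varepsilon_2 \to 0^+$. This is legitimate because $\lesssim$ is an equivalence in the variable $n$ with $\varepsilon$ frozen: for every bound that actually arises here $p$ has the form $p(\varepsilon, n) = C/(\varepsilon^2 n)$, so the distortion factor $p(\varepsilon - c_2\varepsilon_2/c_1,\, n)/p(\varepsilon, n) = \varepsilon^2/(\varepsilon - c_2\varepsilon_2/c_1)^2$ is \emph{constant in $n$} and tends to $1$ as $\varepsilon_2 \to 0$; hence the slack introduced by the split can be made arbitrarily mild, and \eqref{Appendixequation2} follows. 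The main obstacle is therefore not any delicate estimate but the bookkeeping of two competing limits — $n \to \infty$, needed to discard $q$ through \eqref{Appendixequation1.5}, and $\varepsilon_2 \to 0$, needed to restore the first argument of $p$ — together with the observation that the hypotheses do not by themselves guarantee continuity of $p$ in $\varepsilon$ uniformly in $n$, so one implicitly leans on the structural form of $p$ present in every application to carry out the second limit.
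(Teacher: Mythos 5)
Your proof is correct and follows essentially the same route as the paper's: both split the threshold exactly as $c_1\varepsilon = c_1\varepsilon_1 + c_2\varepsilon_2$ (your fixed $\varepsilon_2$ corresponds to the paper's choice $\varepsilon_1 = (1-1/m)\varepsilon$, $\varepsilon_2 = c_1\varepsilon/(mc_2)$), discard $q$ via the ratio condition \eqref{Appendixequation1.5}, and then take the split parameter to its limit ($\varepsilon_2 \to 0$ versus $m \to \infty$) to restore $p(\varepsilon, n)$. Your closing observation about the double limit and the implicit reliance on the structural form $p(\varepsilon,n) = C/(\varepsilon^2 n)$ is a fair point that the paper itself glosses over with ``let $m$ approach infinity.''
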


\begin{proof}
The proof is postponed to Appendix \ref{proof-lemma-simp}.
\end{proof}

\subsection{Proof of Lemma \ref{lem-line-sub}}
\label{Appendixproof-thm-sub-lem-line-sub}

We will first check some properties of Gaussian random vectors, which will be used in the proof.

\begin{lem}\label{L4}
Let ${\bf p}, {\bf q} \in\mathbb{R}^n$ be two Gaussian random vectors, which are dependent to each other. If $\mathbb{E}{\bf p}{\bf q}^{\rm T}=\omega{\bf I}_n/n, 0\le\omega\le1$, we have
\begin{equation}
        \mathbb{P}\left(\left|\frac{\|{\bf p}\|^2}{\|{\bf q}\|^2}-1\right|>\left(1-\omega^2\right)\varepsilon\right) \lesssim \frac{4}{\varepsilon^2n} =: P_1(\varepsilon,n), 
\end{equation}
when $n$ is large enough.
\end{lem}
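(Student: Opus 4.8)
The plan is to exploit the joint Gaussianity of ${\bf p}$ and ${\bf q}$ together with the prescribed cross-covariance $\mathbb{E}{\bf p}{\bf q}^{\rm T} = \omega{\bf I}_n/n$ so as to write ${\bf q}$ as a deterministic combination of ${\bf p}$ and an \emph{independent} Gaussian copy. Concretely, I would introduce a Gaussian random vector ${\bf r}\in\mathbb{R}^n$ independent of ${\bf p}$ with $\mathbb{E}{\bf r}{\bf r}^{\rm T} = {\bf I}_n/n$ and set ${\bf q} = \omega{\bf p} + \sqrt{1-\omega^2}\,{\bf r}$. Checking first and second moments confirms $\mathbb{E}{\bf q}{\bf q}^{\rm T} = {\bf I}_n/n$ and $\mathbb{E}{\bf p}{\bf q}^{\rm T} = \omega{\bf I}_n/n$, so this entails no loss of generality. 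Expanding $\|{\bf q}\|^2$ then yields the exact identity
\[
\frac{\|{\bf p}\|^2}{\|{\bf q}\|^2} - 1 = \frac{(1-\omega^2)\left(\|{\bf p}\|^2 - \|{\bf r}\|^2\right) - 2\omega\sqrt{1-\omega^2}\,{\bf p}^{\rm T}{\bf r}}{\|{\bf q}\|^2},
\]
which cleanly isolates the quantity controlling the deviation.

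The heart of the argument is the first summand of the numerator. Since ${\bf p}$ and ${\bf r}$ are independent with $\|{\bf p}\|^2$ and $\|{\bf r}\|^2$ each of mean $1$ and variance $2/n$, the difference $\|{\bf p}\|^2 - \|{\bf r}\|^2$ has mean $0$ and variance $4/n$. Dividing the threshold $(1-\omega^2)\varepsilon$ by the common factor $(1-\omega^2)$, Chebyshev's inequality gives
\[
\mathbb{P}\left(\left|\|{\bf p}\|^2 - \|{\bf r}\|^2\right| > \varepsilon\right) \le \frac{4/n}{\varepsilon^2} = \frac{4}{\varepsilon^2 n},
\]
which is exactly the target $P_1(\varepsilon,n)$. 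The point is that the $(1-\omega^2)$ in the threshold cancels against the $(1-\omega^2)$ scaling of this leading term, producing the clean constant $4$ rather than the $4/(1-\omega^2)$ that a naive second-moment bound on $\|{\bf p}\|^2-\|{\bf q}\|^2$ would leave behind.

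It then remains to show that the cross term $2\omega\sqrt{1-\omega^2}\,{\bf p}^{\rm T}{\bf r}$ and the fluctuation of the denominator $\|{\bf q}\|^2$ about $1$ are of strictly higher order in $n$, so that Lemma \ref{lemma-simp} can discard them. I would split the event by the triangle inequality, assigning a threshold $c_1\varepsilon_1$ to the leading term and $c_2\varepsilon_2$ to the remainder, with $c_1=1-\omega^2$. The subtlety is that a second-moment bound on ${\bf p}^{\rm T}{\bf r}$ alone gives only an $O(1/n)$ probability, the \emph{same} order as the leading term. The remedy is to use a higher moment: conditioned on ${\bf p}$, the scalar ${\bf p}^{\rm T}{\bf r}$ is Gaussian, and its fourth moment is $O(1/n^2)$, so Markov's inequality applied to the fourth power delivers a bound of order $1/n^2 = o(1/n)$. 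The same $O(1/n^2)$ control handles the $O_P(1/\sqrt n)$ departure of $\|{\bf q}\|^2$ from $1$, since this correction multiplies an already $O_P(1/\sqrt n)$ numerator and so enters only quadratically. With the leading contribution yielding $p(\varepsilon_1,n)=4/(\varepsilon_1^2 n)$ and the remainder yielding $q(\varepsilon_2,n)=O(1/n^2)$, the ratio $q/p\to 0$, and Lemma \ref{lemma-simp} collapses the two-parameter estimate to $\lesssim 4/(\varepsilon^2 n)$ for $n$ large.

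I expect the main obstacle to be precisely this last point: controlling the cross term ${\bf p}^{\rm T}{\bf r}$. Under crude variance bookkeeping it is the same order as the main term, and only by exploiting the Gaussianity of ${\bf p}^{\rm T}{\bf r}$ through a higher-moment estimate does it become genuinely negligible — which is exactly the situation Lemma \ref{lemma-simp} is built to formalize. The random denominator needs a little extra care but is benign, since it concentrates at $1$ and multiplies a numerator that is already small.
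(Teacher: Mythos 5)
Your proposal is correct, and it shares the paper's central device --- using the cross-covariance to rewrite one vector as a scalar multiple of the other plus an \emph{independent} Gaussian --- but in mirrored form and with a genuinely different concentration argument afterwards. The paper sets ${\bf w} = ({\bf p}-\omega{\bf q})/\sqrt{1-\omega^2}$ (independent of ${\bf q}$), writes ${\bf p} = \omega{\bf q}+\sqrt{1-\omega^2}{\bf w}$, and divides through by $\|{\bf q}\|^2$: the main term becomes $(1-\omega^2)(F_{n,n}-1)$ with $F_{n,n}=\|{\bf w}\|^2/\|{\bf q}\|^2$ an exact F-statistic (Chebyshev via Lemma \ref{L1}), the cross term is $2\omega\sqrt{1-\omega^2}\sqrt{F_{n,n}}\cos\theta$ and is killed by the exponential angle bound of Lemma \ref{L3}, and the random denominator never needs separate treatment because it is absorbed into the F-ratio. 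You instead expand the difference $\|{\bf p}\|^2-\|{\bf q}\|^2$, which buys you a main term $(1-\omega^2)\left(\|{\bf p}\|^2-\|{\bf r}\|^2\right)$ that is a difference of independent scaled chi-squares, so Chebyshev gives exactly $4/(\varepsilon^2 n)$ with no F-distribution machinery and no asymptotic correction; your side remark that a naive variance bound on $\|{\bf p}\|^2-\|{\bf q}\|^2$ would leave a factor $4/(1-\omega^2)$ is accurate and correctly explains why the cross term must be isolated before applying Chebyshev. The price of your route is two remainder terms the paper's formulation never faces: the cross term ${\bf p}^{\rm T}{\bf r}$, which (as you rightly observe) has tail probability of the \emph{same} order $1/n$ under a second-moment bound and so genuinely requires the fourth-moment Markov estimate of order $1/n^2$; and the fluctuation of the denominator $\|{\bf q}\|^2$ about $1$, whose product with the already-small numerator you also push to order $1/n^2$. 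That last step is the only place where your write-up is a sketch rather than a finished argument --- to invert $\|{\bf q}\|^2$ one should first record a high-probability lower bound such as $\|{\bf q}\|^2\ge 1/2$, then split the product event and apply fourth moments to each factor --- but the tool you name is the right one and the estimate closes. Both proofs then finish identically, invoking Lemma \ref{lemma-simp} to discard the $o(1/n)$ remainders and collapse the two-threshold bound to $P_1(\varepsilon,n)=4/(\varepsilon^2 n)$.
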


\begin{proof}
The proof is postponed to Appendix \ref{proof-L4}.
\end{proof}

\begin{lem}\label{C1}
Let ${\bf u}={\bf a}/{\|{\bf a}\|}$, where ${\bf a} \in\mathbb{R}^n$ is a Gaussian random vector. Let ${\bf V}=\left[{\bf v}_1,\cdots,{\bf v}_d\right]\in\mathbb{R}^{n\times d}$ denote a given orthonormal matrix and $\theta_i$ denote the angle between $\bf u$ and ${\bf v}_i, 1\le i \le d$, then we have
\begin{equation}\label{eq-p2epsilon1}
    \mathbb{P}\left(\left|\sum_{i=1}^d \cos^2\theta_i - \frac{d}{n}\right|>\varepsilon\right) 
    < \frac{2d}{\varepsilon^2n^2} =: P_2(\varepsilon,n). 
\end{equation}
\end{lem}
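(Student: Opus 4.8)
The plan is to reduce the trigonometric sum to a ratio of sums of squared Gaussian coordinates and then apply Chebyshev's inequality, so the whole argument rests on an exact computation of the first two moments. First I would observe that, since $\bf u$ and each ${\bf v}_i$ are unit vectors, $\cos\theta_i = {\bf u}^{\rm T}{\bf v}_i$, whence
$$
\sum_{i=1}^d \cos^2\theta_i = \|{\bf V}^{\rm T}{\bf u}\|^2 = \frac{\|{\bf V}^{\rm T}{\bf a}\|^2}{\|{\bf a}\|^2}.
$$
Because $\bf V$ has orthonormal columns and $\bf a$ has i.i.d. $\mathcal N(0,1/n)$ entries, the rotational invariance of the Gaussian law lets me assume without loss of generality that ${\bf V}=[{\bf e}_1,\dots,{\bf e}_d]$; equivalently, ${\bf V}^{\rm T}{\bf a}$ is again a vector of $d$ i.i.d. $\mathcal N(0,1/n)$ entries. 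Writing $X = \sum_{i=1}^d a_i^2 / \|{\bf a}\|^2$, the task becomes to show that $X$ concentrates around $d/n$.

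The mean follows immediately from symmetry: since $\sum_{i=1}^n a_i^2/\|{\bf a}\|^2 = 1$ and the summands are exchangeable, $\mathbb E[a_i^2/\|{\bf a}\|^2] = 1/n$, so $\mathbb E[X] = d/n$, matching the centering in the statement. The crux is the variance. Here I would expand
$$
\mathbb E[X^2] = d\,A + d(d-1)\,B, \qquad A := \mathbb E\!\left[\frac{a_i^4}{\|{\bf a}\|^4}\right], \quad B := \mathbb E\!\left[\frac{a_i^2 a_j^2}{\|{\bf a}\|^4}\right]\ (i\ne j),
$$
and determine $A$ and $B$ from two relations. The normalization $\big(\sum_i a_i^2/\|{\bf a}\|^2\big)^2 = 1$ gives $nA + n(n-1)B = 1$. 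For the second relation I would exploit that, for a Gaussian vector, the radius $\|{\bf a}\|$ is independent of the direction ${\bf a}/\|{\bf a}\|$; this makes $A/B = \mathbb E[a_i^4]/\mathbb E[a_i^2 a_j^2] = 3\sigma^4/\sigma^4 = 3$. Solving yields $B = 1/(n(n+2))$ and $A = 3/(n(n+2))$, hence $\mathbb E[X^2] = d(d+2)/(n(n+2))$ and
$$
\mathrm{Var}(X) = \frac{d(d+2)}{n(n+2)} - \frac{d^2}{n^2} = \frac{2d(n-d)}{n^2(n+2)}.
$$

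Finally, Chebyshev's inequality gives $\mathbb P(|X-d/n|>\varepsilon) \le \mathrm{Var}(X)/\varepsilon^2$, and since $(n-d)/(n+2) < 1$ for $d\ge 1$ this is strictly less than $2d/(\varepsilon^2 n^2)$, which is precisely the claimed bound $P_2(\varepsilon,n)$. I expect the main obstacle to be the exact variance computation, in particular pinning down $A$ and $B$. The identity $A = 3B$, justified by the independence of the radial and angular components of a Gaussian vector (equivalently, by recognizing $X$ as $\mathrm{Beta}(d/2,(n-d)/2)$-distributed), is the step that makes the clean closed form possible; everything else is symmetry together with a one-line Chebyshev estimate.
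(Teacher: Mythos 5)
Your proof is correct and takes essentially the same route as the paper's: reduce to coordinate projections via rotational invariance, compute the exact mean $d/n$ and variance $\frac{2d(n-d)}{n^2(n+2)}$ of the normalized squared Gaussian coordinates, and finish with Chebyshev's inequality, yielding the same strict bound $\frac{2d}{\varepsilon^2 n^2}$. The only difference is that you supply the explicit moment computation (via the $A=3B$ identity from radial--angular independence) that the paper's Lemma \ref{L2} dispatches with the phrase ``by calculating.''
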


\begin{proof}
The proof is postponed to Appendix \ref{proof-C1}.
\end{proof}

Let us begin the proof of Lemma \ref{lem-line-sub} by choosing the bases for the line ${\set X}_1$ and the subspace ${\set X}_2$ and then calculate the compressed affinity. One may refer to Fig. \ref{fig:visualLemma2} for a visualization.

According to the definition of affinity, $\lambda=\cos\theta$, where $\theta$ is the only principal angle between ${\set X}_1$ and ${\set X}_2$.
We use $\bf u$ and ${\bf u}_1$ to denote the basis of ${\set X}_1$ and the unit vector, which constructs the principal angle with $\bf u$.
Notice that ${\bf u}_1$ locates inside ${\set X}_2$.
Consequently, we may decompose $\bf u$ by
$$
{\bf u}=\lambda{\bf u}_1+\sqrt{1-\lambda^2}{\bf u}_0,
$$
where ${\bf u}_0$ denotes some unit vector orthogonal to ${\set X}_2$.
Based on the above definition, we choose ${\bf U}=\left[{\bf u}_1, ..., {\bf u}_d\right]$ as the basis of ${\set X}_2$. 
Notice that $\{{\bf u}_2,\cdots,{\bf u}_d\}$ could be freely chosen when the orthogonality is satisfied.

After random projection, the basis of ${\set Y}_1$ changes to
\begin{align}
{\bf a} = {\bm \Phi}{\bf u} &= \lambda{\bm \Phi}{\bf u}_1+\sqrt{1-\lambda^2}{\bm \Phi}{\bf u}_0\nonumber\\
&=\lambda{\bf a}_1+\sqrt{1-\lambda^2}{\bf a}_0,\label{eq-lem-line-sub-proof-a}
\end{align}
where ${\bf a}_1={\bm\Phi}{\bf u}_1$ and ${\bf a}_0={\bm\Phi}{\bf u}_0$ are not orthogonal to each other.
As to ${\set Y}_2$, considering that ${\bm\Phi}{\bf U}$ is not a orthonormal basis, we do orthogonalization by using Gram-Schmidt process.
Denote the orthogonalized matrix by ${\bf V} = \left[{\bf v}_1,\cdots,{\bf v}_d\right]$, the first column of which
\begin{equation}\label{eq-lem-line-sub-proof-v1}
{\bf v}_1={\bf a}_1/\|{\bf a}_1\|
\end{equation}
does not change its direction after the orthogonalization.

According to the definition of affinity in \eqref{eq-define-affinity}, we may calculate the compressed affinity by
\begin{equation}\label{eq-lem-line-sub-proof-affY2}
        \affYS =\left\|\frac{{\bf a}^{\rm T}}{\|{\bf a}\|}\bf V\right\|^2 
        = \frac1{\|{\bf a}\|^2}\sum_{i=1}^d \left({\bf a}^{\rm T}{\bf v}_i\right)^2 
\end{equation}
Splitting the summation in \eqref{eq-lem-line-sub-proof-affY2} into two parts and using \eqref{eq-lem-line-sub-proof-a}, \eqref{eq-lem-line-sub-proof-v1}, and that $\bf V$ is an orthonormal matrix, we have
\begin{align}
        \quad\affYS 
        &=\frac{1}{\|{\bf a}\|^2}\bigg(\left(\lambda{\bf a}_1^{\rm T}{\bf v}_1\!+\!\sqrt{1\!-\!\lambda^2}{\bf a}_0^{\rm T}{\bf v}_1\right)^2 
         + \sum_{i=2}^d \left(\sqrt{1\!-\!\lambda^2}{\bf a}_0^{\rm T}{\bf v}_i\right)^2\bigg) \nonumber\\
        &=\frac{1}{\|{\bf a}\|^2}\!\bigg(\lambda^2\|{\bf a}_1\|^2 + 2\lambda\sqrt{1\!-\!\lambda^2}\|{\bf a}_0\|\|{\bf a}_1\|\cos\theta_1 
         + \sum_{i=1}^d (1\!-\!\lambda^2)\|{\bf a}_0\|^2\cos^2\theta_i\bigg), \label{eq-lem-line-sub-proof-affY2-1}
\end{align}
where $\theta_i$ denote the angles between ${\bf a}_0$ and ${\bf v}_i$ for $i=1,\cdots,d$. By taking the norm on both sides of \eqref{eq-lem-line-sub-proof-a}, we write
\begin{align}
\|{\bf a}\|^2 \!&= \!\left\|\lambda{\bf a}_1+\sqrt{1-\lambda^2}{\bf a}_0\right\|^2 \nonumber\\
\!&= \!\lambda^2\|{\bf a}_1\|^2 \!+\! 2\lambda\sqrt{1\!-\!\lambda^2}\|{\bf a}_0\|\|{\bf a}_1\|\cos\theta_1 
+ (1\!-\!\lambda^2)\|{\bf a}_0\|^2. \label{eq-lem-line-sub-proof-a2}
\end{align}
Eliminating $\|{\bf a}_1\|$ by inserting \eqref{eq-lem-line-sub-proof-a2} into \eqref{eq-lem-line-sub-proof-affY2-1}, we get
\begin{align}
        \affYS &=\!\frac{1}{\|{\bf a}\|^2}\!\bigg(\!\|{\bf a}\|^2 \!- \!(1\!-\!\lambda^2)\|{\bf a}_0\|^2 
        \!+\! \sum_{i=1}^d (1\!-\!\lambda^2)\|{\bf a}_0\|^2\!\cos^2\!\theta_i\bigg) \nonumber\\
        &=1 - (1-\lambda^2)\frac{\|{\bf a}_0\|^2}{\|{\bf a}\|^2}\bigg(1 - \sum_{i=1}^d \cos^2\theta_i\bigg).\label{eq-lem-line-sub-proof-affY-2}
\end{align}

We are ready for estimating ${\|{\bf a}_0\|^2}/{\|{\bf a}\|^2}$ and $\sum_{i=1}^d \cos^2\theta_i$ by using Lemma \ref{L4} and Lemma \ref{C1}, respectively. First recalling Lemma \ref{L4}, let ${\bf p} = {\bf a}_0$ and ${\bf q} = {\bf a}$. Using \eqref{eq-lem-line-sub-proof-a} we have
$$
\mathbb{E}{\bf a}_0{\bf a}^{\rm T}=\sqrt{1-\lambda^2}\mathbb{E}{\bf a}_0{\bf a}_0^{\rm T}={\sqrt{1-\lambda^2}}{\bf I}_n/{n}.
$$
Denoting $\omega=\sqrt{1-\lambda^2}$ and applying Lemma \ref{L4}, we have
\begin{equation}\label{eq-lem-line-sub-proof-P3}
\mathbb{P}\left(\left|\frac{\|{\bf a}_0\|^2}{\|{\bf a}\|^2}-1\right|>\lambda^2\varepsilon_1\right) \lesssim \frac{4}{\varepsilon_1^2n} = P_1(\varepsilon_1,n).
\end{equation}
Then recalling Lemma \ref{C1} and that ${\bf a}_0$ is independent with ${\bf V}$ by using the properties of Gaussian random distribution, we have
\begin{equation}\label{eq-lem-line-sub-proof-P4}
\mathbb{P}\left(\left|\sum_{i=1}^d \cos^2\theta_i - \frac{d}{n}\right|>\varepsilon_2\right)
< \frac{2d}{\varepsilon_2^2n^2} =  P_2(\varepsilon_2,n).
\end{equation}
Consequently, combing \eqref{eq-lem-line-sub-proof-affY-2} and \eqref{lem-line-sub-aff-change}, the estimate error is rewritten as
\begin{align}
\left|\affYS-\oaffYS\right| 
= &\left|\affYS-\left(\lambda^2+\frac{d}{n}(1-\lambda^2)\right)\right| \nonumber\\
=& \left|1 \!-\! (1\!-\!\lambda^2)\frac{\|{\bf a}_0\|^2}{\|{\bf a}\|^2}\!\!\bigg(\!1 \!-\! \sum_{i=1}^d \cos^2\!\theta_i\!\bigg)\!-\!\left(\!1\!-\!(1\!-\!\lambda^2)\!\left(1\!-\!\frac{d}{n}\right)\!\!\right)\right| \nonumber\\
=& (1-\lambda^2)\left|\left(1-\frac{d}{n}\right)-\frac{\|{\bf a}_0\|^2}{\|{\bf a}\|^2}\bigg(1 - \sum_{i=1}^d \cos^2\theta_i\bigg)\right|. \label{eq-lem-line-sub-proof-esterr}
\end{align}
By using \eqref{eq-lem-line-sub-proof-P3} and \eqref{eq-lem-line-sub-proof-P4}, the second item in RHS of \eqref{eq-lem-line-sub-proof-esterr} is bounded by
\begin{align}
\left|\left(1-\frac{d}{n}\right)-\frac{\|{\bf a}_0\|^2}{\|{\bf a}\|^2}\bigg(1 - \sum_{i=1}^d \cos^2\theta_i\bigg)\right| 
\le& \left|\left(1-\frac{d}{n}\right)-(1+\lambda^2\varepsilon_1)\left(\left(1-\frac{d}n\right)+\varepsilon_2\right)\right| \nonumber\\
\le& \left(1-\frac{d}{n}\right)\lambda^2\varepsilon_1 + \varepsilon_2 + \lambda^2\varepsilon_1\varepsilon_2 \label{eq-lem-line-sub-proof-esterrRHS-pre} \\
\sim& \left(1-\frac{d}{n}\right)\lambda^2\varepsilon_1 + \varepsilon_2, \label{eq-lem-line-sub-proof-esterrRHS}
\end{align}
with probability at least $1-P_1(\varepsilon_1,n) - P_2(\varepsilon_2,n)$, where the last item in \eqref{eq-lem-line-sub-proof-esterrRHS-pre} is dropped.

Next we will use Lemma \ref{lemma-simp} to simplify \eqref{eq-lem-line-sub-proof-esterrRHS}. Recalling that $d$ is much smaller than $n$, we have 
$$
\lim_{n\rightarrow\infty}\frac{P_2(\varepsilon_2,n)}{P_1(\varepsilon_1,n)} = \frac{d\varepsilon_1^2}{n\varepsilon_2^2} = 0.
$$
Finally, inserting \eqref{eq-lem-line-sub-proof-esterrRHS} in \eqref{eq-lem-line-sub-proof-esterr} and using Lemma \ref{lemma-simp}, we have
\begin{align}\label{eq-lem-line-sub-proof-final}
        \mathbb{P}\left(\left|\affYS-\oaff^2\right|>(1\!-\!\lambda^2)\left(1\!-\!\frac{d}n\right)\lambda^2\varepsilon\right) 
        \lesssim P_1(\varepsilon,n) = \frac{4}{\varepsilon^2n}.
\end{align}
Using $d\ll n$ again to drop $\left(1-{d}/n\right)$ from \eqref{eq-lem-line-sub-proof-final}, we complete the proof.

\subsection{Proof of Lemma \ref{thm-sub-tight} and Theorem \ref{thm-sub}}
\label{Appendixproof-thm-sub}

We introduce two major techniques utilized in this proof.

\subsubsection{\emph{Quasi}-orthonormal basis}

We introduce a \emph{quasi}-orthonormal basis of the original lower dimensional subspace for estimating the projected affinity. 
In order to calculate the affinity, recalling its definition in \eqref{eq-define-affinity}, we need to prepare the orthonormal basis for both subspaces. 
However, a reasonable way is to utilize the normalized data matrix to approximate its Gram-Schmidt orthogonalization. 

\begin{figure*}[!ht]
  \centering
  \begin{small}
  \subfigure[]{
    \label{fig:realworlddata:a} 
\begin{tikzpicture}[scale=3.2,
    axis/.style={ thick, ->, >=stealth'},
    important line/.style={thick},
    dashed line/.style={dashed, thin},
    pile/.style={thick, ->, >=stealth', shorten <=2pt, shorten >=2pt},
    every node/.style={color=black}
    ]
    \coordinate (O) at (0,0,0);
    \node[left=6pt,above]  at (O)  {$O$};
    \coordinate (v1) at (0,0,1);
    \coordinate (v2) at (1,0,0);
    \coordinate (a2) at (0.9,0,-0.3);
    \coordinate (v3) at (0,0.94,0);
    \coordinate (a3) at (0.25,0.95,0.3);    
    \coordinate (a3pv23) at (0.25,0.95,0);     
    \coordinate (a3pa1) at (0,0,0.3);
    \coordinate (a3pa2) at (0.9*0.24,0,-0.3*0.24);
    \draw[axis] (O) -- (v1) node [left] {$\bar{\bf a}_1 = {\bf v}_1$};
    
    \draw[axis] (O) -- (v2) node [right] {${\bf v}_2$};
    \draw[axis] (O) -- (a2) node [right] {$\bar{\bf a}_2$};
    \draw[axis,blue] (a2) -- (v2);
    \node[below=5,blue] at (v2)  {$-\bar{\bf a}_1\bar{r}_{1,2}$};
    
    \draw[axis] (O) -- (v3) node [above] {${\bf v}_3$};
    \draw[axis] (O) -- (a3) node [right=9,below] {$\bar{\bf a}_3$};
    \draw[dashed] (a3pa1) -- (a3);
    
    \draw[axis,blue] (O) -- (a3pa1) node [below=10,right,blue] {$\bar{\bf a}_1\bar{r}_{1,3}$};    
    \draw[dashed] (O) -- (a3pv23);
    \draw[axis,blue] (a3) -- (a3pv23) node [right=20,below,blue] {$-\bar{\bf a}_1\bar{r}_{1,3}$};
    
    \draw[axis,blue] (O) -- (a3pa2) node [above=10,right,blue] {$\bar{\bf a}_2\bar{r}_{2,3}$};    
    \draw[dashed] (a3) -- (a3pa2);    
    
    \draw[dashed] (a3pv23) -- (a3pa2);
    \draw[axis,blue] (a3pv23) -- (v3) node [above=10,right=5,blue] {$-\bar{\bf a}_2\bar{r}_{2,3}$};
    
    \draw[axis,blue] (a3) -- (v3);

    \node[below=3,left,blue] (notation) at (v3) {$-\bar{\bf a}_1\bar{r}_{1,3}-\bar{\bf a}_2\bar{r}_{2,3}$};
\end{tikzpicture}}
\subfigure[]{
    \label{fig:realworlddata:b} 
\begin{tikzpicture}[scale=3.2,
    axis/.style={ thick, ->, >=stealth'},
    important line/.style={thick},
    dashed line/.style={dashed, thin},
    pile/.style={thick, ->, >=stealth', shorten <=2pt, shorten >=2pt},
    every node/.style={color=black}
    ]
    \coordinate (O) at (0,0,0);
    \node[left=6pt,above]  at (O)  {$O$};
    \node[above] at (1.2,0.8,-0.5) {$\mathbb{R}^n$};
    \filldraw[fill=gray!20,draw=gray!20] (-1,0,-1) -- (-1,0,1.5) -- (1.5,0,1.5) -- (1.5,0,-1) -- cycle;
    \node[below] at (1,0,-1) {${\mathcal Y}_2$};
    \coordinate (v21) at (0.95,0,-0.2);
    \coordinate (v22) at (0.8,0,0.6);
    \coordinate (v23) at (0.5,0,0.85);
    \coordinate (v2d) at (0,0,0.95);
    \coordinate (v11) at (0.5,0.7,-0.4);
    \coordinate (v11pY2) at (0.5,0,-0.4);
        
    \coordinate (v12) at (0.2,0.75,0.4);
    \coordinate (a12) at (0.3,0.8,0.15);
    \coordinate (a12pY2) at (0.3,0,0.15);
    
    \coordinate (v1d) at (-0.1,0.7,0.8);
    \coordinate (a1d) at (-0.1,0.8,0.6);
    \coordinate (a1dpY2) at (-0.1,0,0.6);
    \filldraw[fill=green!20,draw=gray!20,opacity=0.8] (0.5,0,-1) -- (0.7,1,-1) -- (-0.6,1,1.5) -- (-0.75,0,1.5) -- cycle;
    \node[below] at (0.3,0.8,-1) {${\mathcal Y}_1$};    
    \draw[axis] (O) -- (v21) node [right,below] {${\bf v}_{2,1}$};
    \draw[axis] (O) -- (v22) node [right] {${\bf v}_{2,2}$};
    \draw[axis] (O) -- (v23) node [below] {${\bf v}_{2,3}$};
    \draw[axis] (O) -- (v2d) node [left] {${\bf v}_{2,d_2}$};
    \draw[dotted,very thick] (0.18,0,0.85) -- (0.25,0,0.83);
    \draw[axis] (O) -- (v11) node [above] {$\bar{\bf a}_{1,1}={\bf v}_{1,1}$};
    \draw[dashed] (v11) -- (v11pY2) ;
    \draw[thick,red] (O) -- (v11pY2) node [right,red] {${\rm aff}_{{\mathcal Y}_1}^2$};
    
    \draw[axis] (O) -- (v12) node [left] {${\bf v}_{1,2}$};
    \draw[axis] (O) -- (a12) node [above] {$\bar{\bf a}_{1,2}$};
    \draw[dashed] (a12) -- (a12pY2) ;
    \draw[thick,red] (O) -- (a12pY2) node [right,red] {${\rm aff}_{{\mathcal Y}_2}^2$};
    \draw[axis,blue] (a12) -- (v12) node [above=15,left=-10,blue] {$-\bar{\bf a}_{1,2}r_{1,12}$};

    \draw[axis] (O) -- (v1d) node [left] {${\bf v}_{1,{d_1}}$};
    \draw[axis] (O) -- (a1d) node [above] {$\bar{\bf a}_{1,{d_1}}$};
    \draw[dashed] (a1d) -- (a1dpY2) ;
    \draw[thick,red] (O) -- (a1dpY2) node [right=10,red] {${\rm aff}_{{\mathcal Y}_{d_1}}^2$};
    \draw[axis,blue] (a1d) -- (v1d) node [left=40,above=5,blue] {$-\sum_{i=1}^{d_1-1}\bar{\bf a}_{1,i}r_{1,id_1}$};
\end{tikzpicture}}
\end{small}
\caption{(a) Visualization of Lemma \ref{lem-math}. $\{{\bf v}_i\}$ denote the Gram-Schmidt orthogonalization of a set of column-normalized vectors $\{\bar{\bf a}_i\}$. Blue vectors denote the approximation error. (b) Proof sketch of Lemma \ref{thm-sub-tight}. Blue vectors denote the estimation error in Step 2). Red bars plot the affinities in Step 3), i.e., those between $d_1$ independent one-dimensional subspaces and $\set{Y}_2$. \label{visprof}}
\end{figure*}
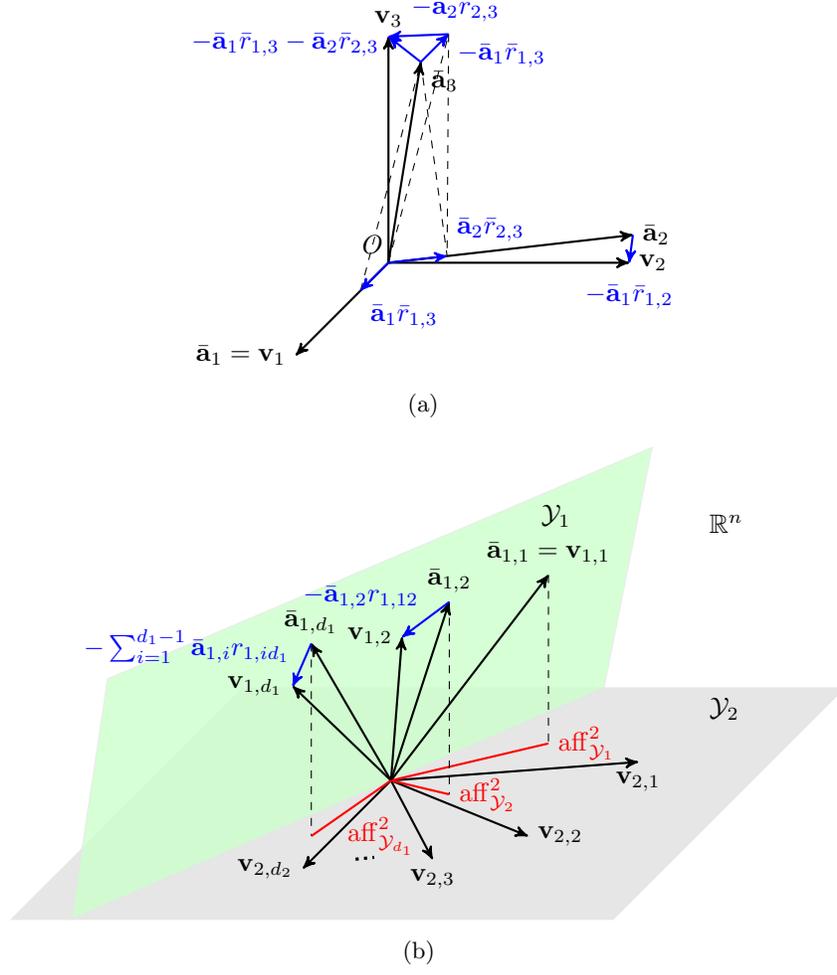

\begin{lem}\label{lem-math}
Let ${\bf V}=\left[{\bf v}_1, {\bf v}_2, \dots, {\bf v}_d\right]$ denote
the Gram-Schmidt orthogonalization
of a column-normalized matrix $\bar{\bf A}=\left[\bar{\bf a}_1, \bar{\bf a}_2, \dots, \bar{\bf a}_d\right]$, 
where $\|\bar{\bf a}_i\|=1, \forall i$. 
Denote $\bar{\bf R} = (\bar r_{ji}) = \bar{\bf A}^{\rm T}\bar{\bf A} - {\bf I}$. 
When $\bar{r}_{ji} = \bar{\bf a}_j^{\rm T}\bar{\bf a}_i$ is small enough for $j\ne i$, we can use $\bar{\bf A}$ to approximate ${\bf V}$ with error ${\bf V}-\bar{\bf A}=\bar{\bf A}\bar{\bf U}$, where $\bar{\bf U}=[\bar u_{ji}]\in\mathbb{R}^{d\times d}$ is an upper triangular matrix satisfying
\begin{equation}\label{eq_L1_0_1}
\bar u_{ii} = \bar g_{ii}(\bar{\bf R})\|\bar{\bf R}\|_F^2,\quad \forall i,
\end{equation}
where $\bar g_{ii}(\bar{\bf R}) > 0$ and $\lim_{\bar{\bf R} \to {\bf 0}} \bar g_{ii}(\bar{\bf R}) \le {1}/{4}$, and
\begin{equation}\label{eq_L1_0_2}
\bar u_{ji} = - \bar r_{ji} + \bar g_{ji}(\bar{\bf R})\|\bar{\bf R}\|_F, \quad \forall j<i,
\end{equation}
where $\lim_{\bar{\bf R} \to {\bf 0}} \bar g_{ji}(\bar {\bf R}) = 0$.
\end{lem}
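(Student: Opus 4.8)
The plan is to recast the asserted factorization as a perturbed QR/Cholesky identity and then expand in powers of $\|\bar{\bf R}\|_F$. Observe that $\bf{V}=\bar{\bf A}({\bf I}+\bar{\bf U})$ with $\bf V$ orthonormal and ${\bf I}+\bar{\bf U}$ upper triangular is exactly the inverse of the thin QR factorization of $\bar{\bf A}$: writing ${\bf T}:=({\bf I}+\bar{\bf U})^{-1}$, the matrix $\bf T$ is upper triangular with positive diagonal and $\bar{\bf A}={\bf V}{\bf T}$, so uniqueness of QR forces ${\bf I}+\bar{\bf U}={\bf T}^{-1}$. Since the Gram factor is $\bar{\bf A}^{\rm T}\bar{\bf A}={\bf I}+\bar{\bf R}$, we obtain the Cholesky identity ${\bf T}^{\rm T}{\bf T}={\bf I}+\bar{\bf R}$. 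For $\bar{\bf R}$ small enough ${\bf I}+\bar{\bf R}\succ0$, i.e. $\bar{\bf A}$ has full column rank, so ${\bf T}$, $\bf V$, and hence $\bar{\bf U}$ exist and are unique; this is precisely the small-$\bar r_{ji}$ regime assumed in the statement.

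First I would write ${\bf T}={\bf I}+{\bf S}$ with $\bf S$ upper triangular and substitute into the Cholesky identity to get ${\bf S}+{\bf S}^{\rm T}+{\bf S}^{\rm T}{\bf S}=\bar{\bf R}$. Reading this entrywise and sorting by powers of $\|\bar{\bf R}\|_F$ produces the perturbation solution. For $j<i$ the linear part gives $s_{ji}=\bar r_{ji}-({\bf S}^{\rm T}{\bf S})_{ji}$, and since $({\bf S}^{\rm T}{\bf S})_{ji}$ is quadratic this reads $s_{ji}=\bar r_{ji}+O(\|\bar{\bf R}\|_F^2)$; thus $\bf S$ equals the strictly upper-triangular part of $\bar{\bf R}$ up to $O(\|\bar{\bf R}\|_F^2)$. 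On the diagonal, $\bar r_{ii}=0$ forces $2s_{ii}+\sum_{k<i}s_{ki}^2=O(s_{ii}^2)$, so $s_{ii}=-\tfrac12\sum_{k<i}\bar r_{ki}^2+o(\|\bar{\bf R}\|_F^2)$, a genuinely second-order, nonpositive quantity.

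Next I would invert through the Neumann series ${\bf I}+\bar{\bf U}={\bf T}^{-1}={\bf I}-{\bf S}+{\bf S}^2-\cdots$ and collect entries. For $j<i$, $({\bf S}^2)_{ji}$ is again quadratic, so $\bar u_{ji}=-s_{ji}+O(\|\bar{\bf R}\|_F^2)=-\bar r_{ji}+O(\|\bar{\bf R}\|_F^2)$; setting $\bar g_{ji}:=(\bar u_{ji}+\bar r_{ji})/\|\bar{\bf R}\|_F=O(\|\bar{\bf R}\|_F)$ yields \eqref{eq_L1_0_2} with $\lim_{\bar{\bf R}\to{\bf 0}}\bar g_{ji}=0$. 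For the diagonal, $({\bf S}^2)_{ii}=s_{ii}^2=O(\|\bar{\bf R}\|_F^4)$, so $\bar u_{ii}=-s_{ii}+O(\|\bar{\bf R}\|_F^4)=\tfrac12\sum_{k<i}\bar r_{ki}^2+o(\|\bar{\bf R}\|_F^2)$, which is positive; writing $\bar g_{ii}:=\bar u_{ii}/\|\bar{\bf R}\|_F^2$ gives \eqref{eq_L1_0_1} with $\bar g_{ii}>0$. The bound $\lim_{\bar{\bf R}\to{\bf 0}}\bar g_{ii}\le1/4$ then follows from symmetry and the vanishing diagonal of $\bar{\bf R}$, since $\sum_{k<i}\bar r_{ki}^2\le\sum_{k<\ell}\bar r_{k\ell}^2=\tfrac12\|\bar{\bf R}\|_F^2$.

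I expect the main obstacle to be the bookkeeping of remainder terms: one must verify that every correction discarded above is \emph{uniformly} of the claimed order in $\|\bar{\bf R}\|_F$, and not merely along a fixed direction of approach. Concretely, this amounts to showing that the quadratic matrix equation ${\bf S}+{\bf S}^{\rm T}+{\bf S}^{\rm T}{\bf S}=\bar{\bf R}$ has the solution ${\bf S}=(\text{upper part of }\bar{\bf R})+O(\|\bar{\bf R}\|_F^2)$ with a remainder controlled by a contraction or implicit-function argument valid whenever $\|\bar{\bf R}\|_F$ is below an absolute threshold. The positivity $\bar g_{ii}>0$ and the sharp constant $1/4$ are the remaining delicate points, both of which reduce to the elementary inequality above once the leading-order form of $s_{ii}$ is secured.
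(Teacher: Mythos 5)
Your proposal is correct in substance but takes a genuinely different route from the paper's. The paper works entry-by-entry through the Gram--Schmidt recursion: it writes $\bar u_{ii} = 1/\|\tilde{\bf v}_i\|-1$, Taylor-expands $1/\sqrt{1-x}$, and substitutes the recursive identity $\bar{\bf a}_i^{\rm T}{\bf v}_m = \bar r_{mi}+\sum_{k\le m} \bar u_{km}\bar r_{ki}$, bounding the accumulated cross terms by auxiliary small quantities $\epsilon_1,\epsilon_2$; the constant $1/4$ then comes from bounding $\sum_{m<i}\bar r_{mi}^2$ by $\frac12\|\bar{\bf R}\|_F^2$ using the symmetry of $\bar{\bf R}$, exactly the inequality in your last step. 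You instead identify ${\bf I}+\bar{\bf U}$ with the inverse Cholesky factor of the Gram matrix ${\bf I}+\bar{\bf R}$ (via uniqueness of the thin QR factorization), reduce everything to the single quadratic matrix equation ${\bf S}+{\bf S}^{\rm T}+{\bf S}^{\rm T}{\bf S}=\bar{\bf R}$, and read off leading orders through a Neumann series. What your route buys: the second-order diagonal coefficient is pinned down exactly ($2s_{ii}+s_{ii}^2+\sum_{k<i}s_{ki}^2=0$, hence $\bar u_{ii}=\frac12\sum_{k<i}\bar r_{ki}^2+o(\|\bar{\bf R}\|_F^2)$), which makes both the sign and the $1/4$ bound transparent, and all remainder control collapses into one standard fact --- local smoothness of the Cholesky factor near ${\bf I}$, provable by the contraction or implicit-function argument you indicate --- whereas the paper's $\epsilon_1,\epsilon_2$ bookkeeping is looser. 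What the paper's route buys: it is elementary and self-contained, requiring no appeal to QR/Cholesky uniqueness or perturbation theory. One caveat, common to both arguments: strict positivity $\bar g_{ii}>0$ is not actually established in degenerate cases (for $i=1$ one has ${\bf v}_1=\bar{\bf a}_1$, so $\bar u_{11}=0$; likewise $\bar u_{ii}=0$ whenever $\bar{\bf a}_i$ is orthogonal to all earlier columns). Your framework in fact handles this more cleanly than the paper: the exact relations $s_{ii}=-\frac12\bigl(s_{ii}^2+\sum_{k<i}s_{ki}^2\bigr)\le 0$ and $1+\bar u_{ii}=1/(1+s_{ii})$ give $\bar u_{ii}\ge 0$ exactly, which is all the paper's own proof really delivers as well.
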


\begin{proof}
The proof is postponed to Appendix \ref{proof-lem-math}.
\end{proof}

Lemma \ref{lem-math} unveils that, when $\bar{\bf A}$ approaches an orthonormal basis,
i.e., $\bar{\bf R}$ approaches $\bf 0$, the diagonal elements of $\bar{\bf U}$ go to zero,
and they are of the same order as $\|\bar{\bf R}\|_F^2$.
At the same time, the off-diagonal elements go to $-\bar{\bf R}$,
and the differences are of a higher order than $\|\bar{\bf R}\|_F$.
This lemma is visualized in Fig. \ref{visprof} (a).

\begin{cor}\label{C2}
Follow the definition of Lemma \ref{lem-math} and assume $\bar{r}_{ji} < \varepsilon, \forall j\ne i$. When $\varepsilon$ is small enough, for an arbitrary matrix ${\bf W} \in\mathbb{R}^{n \times l}$, we conclude that 
$$
\left|\left\|{\bf V}^{\rm T}{\bf W}\right\|_F^2-\left\|\bar{\bf A}^{\rm T}{\bf W}\right\|_F^2\right| \lesssim d\left\|\bar{\bf A}^{\rm T}{\bf W}\right\|_F^2\varepsilon.
$$
\end{cor}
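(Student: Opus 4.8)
The plan is to convert the multiplicative approximation ${\bf V}-\bar{\bf A}=\bar{\bf A}\bar{\bf U}$ supplied by Lemma \ref{lem-math} into an additive bound on squared Frobenius norms. First I would rewrite that relation as ${\bf V}=\bar{\bf A}({\bf I}+\bar{\bf U})$, so that ${\bf V}^{\rm T}{\bf W}=({\bf I}+\bar{\bf U})^{\rm T}\bar{\bf A}^{\rm T}{\bf W}$. Introducing ${\bf M}:=\bar{\bf A}^{\rm T}{\bf W}\in\mathbb{R}^{d\times l}$ and expanding the squared norm through the trace gives
$$
\left\|{\bf V}^{\rm T}{\bf W}\right\|_F^2={\rm tr}\!\left({\bf M}^{\rm T}({\bf I}+\bar{\bf U})({\bf I}+\bar{\bf U})^{\rm T}{\bf M}\right)=\left\|\bar{\bf A}^{\rm T}{\bf W}\right\|_F^2+{\rm tr}\!\left({\bf M}^{\rm T}{\bf B}\,{\bf M}\right),
$$
where ${\bf B}:=\bar{\bf U}+\bar{\bf U}^{\rm T}+\bar{\bf U}\bar{\bf U}^{\rm T}$. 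Thus the quantity to be controlled is exactly $\left|{\rm tr}({\bf M}^{\rm T}{\bf B}{\bf M})\right|$, and applying the operator-norm inequality $|{\bf m}^{\rm T}{\bf B}{\bf m}|\le\|{\bf B}\|_2\|{\bf m}\|^2$ to each column ${\bf m}$ of ${\bf M}$ and summing yields $\left|{\rm tr}({\bf M}^{\rm T}{\bf B}{\bf M})\right|\le\|{\bf B}\|_2\|{\bf M}\|_F^2$, valid even though ${\bf B}$ is not symmetric.

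It then remains to show $\|{\bf B}\|_2\lesssim d\varepsilon$. Here I would invoke the explicit structure of $\bar{\bf U}$ from Lemma \ref{lem-math}: the diagonal entries satisfy $\bar u_{ii}=O(\|\bar{\bf R}\|_F^2)$, while the strictly upper entries satisfy $\bar u_{ji}=-\bar r_{ji}+o(\|\bar{\bf R}\|_F)$. Assembling the symmetric matrix $\bar{\bf U}+\bar{\bf U}^{\rm T}$ entrywise, its off-diagonal part equals $-\bar{\bf R}$ up to an $o(\|\bar{\bf R}\|_F)$ perturbation, its diagonal is $O(\|\bar{\bf R}\|_F^2)$, and the remaining product $\bar{\bf U}\bar{\bf U}^{\rm T}$ is $O(\|\bar{\bf R}\|_F^2)$; hence ${\bf B}=-\bar{\bf R}+{\bf E}$ with $\|{\bf E}\|_2=o(\|\bar{\bf R}\|_F)$. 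Since the hypothesis $\bar r_{ji}<\varepsilon$ for $j\ne i$, together with the zero diagonal of $\bar{\bf R}=\bar{\bf A}^{\rm T}\bar{\bf A}-{\bf I}$, gives $\|\bar{\bf R}\|_F\le\sqrt{d(d-1)}\,\varepsilon\sim d\varepsilon$, I conclude $\|{\bf B}\|_2\le\|\bar{\bf R}\|_2+\|{\bf E}\|_2\le\|\bar{\bf R}\|_F+o(\|\bar{\bf R}\|_F)\lesssim d\varepsilon$. Substituting back into the first display and recalling $\|{\bf M}\|_F^2=\|\bar{\bf A}^{\rm T}{\bf W}\|_F^2$ finishes the estimate.

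The main obstacle I anticipate is the bookkeeping in the second step: one must verify that every contribution to ${\bf B}$ other than $-\bar{\bf R}$ is genuinely of higher order, so that it is absorbed by the $\lesssim$ convention, and that the dimensional factor ends up linear in $d$ rather than a higher power. The latter hinges on routing the bound through $\|\bar{\bf R}\|_2\le\|\bar{\bf R}\|_F\le\sqrt{d(d-1)}\,\varepsilon\sim d\varepsilon$ (equivalently the Gershgorin row-sum bound $\|\bar{\bf R}\|_2\le(d-1)\varepsilon$), which keeps the $d$-dependence tight. Everything else is routine trace and norm manipulation, so the proof is essentially a careful accounting of orders in $\|\bar{\bf R}\|_F$ built on top of Lemma \ref{lem-math}.
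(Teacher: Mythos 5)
Your proposal is correct, but it is organized differently from the paper's proof, and in fact it is more complete. The paper's proof is a three-line column-splitting argument: it writes ${\bf W}=[{\bf w}_1,\cdots,{\bf w}_l]$, asserts the per-column bound $\left|\|{\bf V}^{\rm T}{\bf w}_i\|_2^2-\|\bar{\bf A}^{\rm T}{\bf w}_i\|_2^2\right|\lesssim d\,\|\bar{\bf A}^{\rm T}{\bf w}_i\|_2^2\,\varepsilon$ with only the citation ``according to Lemma \ref{lem-math},'' and sums over $i$. The substantive content --- why the lemma's expansion ${\bf V}=\bar{\bf A}({\bf I}+\bar{\bf U})$ yields that per-column estimate with a factor linear in $d$ --- is left implicit. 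Your proof supplies exactly that missing content, and does it at the matrix level rather than column by column: the identity $\|{\bf V}^{\rm T}{\bf W}\|_F^2-\|\bar{\bf A}^{\rm T}{\bf W}\|_F^2={\rm tr}({\bf M}^{\rm T}{\bf B}{\bf M})$ with ${\bf B}=\bar{\bf U}+\bar{\bf U}^{\rm T}+\bar{\bf U}\bar{\bf U}^{\rm T}$, the bound $|{\rm tr}({\bf M}^{\rm T}{\bf B}{\bf M})|\le\|{\bf B}\|_2\|{\bf M}\|_F^2$, and the spectral estimate $\|{\bf B}\|_2\le\|\bar{\bf R}\|_2+o(\|\bar{\bf R}\|_F)\lesssim d\varepsilon$ (using $\|\bar{\bf R}\|_F\le\sqrt{d(d-1)}\,\varepsilon$ and the fact that the diagonal of $\bar{\bf U}$ and the product $\bar{\bf U}\bar{\bf U}^{\rm T}$ are of order $\|\bar{\bf R}\|_F^2$, hence absorbed for small $\varepsilon$ at fixed $d$) are all sound. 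Two small remarks: your matrix ${\bf B}$ is in fact symmetric, so the caveat about non-symmetry is unnecessary (though the inequality you use holds either way); and your order bookkeeping for the $o(\|\bar{\bf R}\|_F)$ perturbation implicitly treats $d$ as fixed while $\varepsilon\to 0$, which is consistent with the paper's ``when $\varepsilon$ is small enough'' convention. What your route buys is a self-contained, verifiable proof with the $d$-dependence made explicit through $\|\bar{\bf R}\|_2$; what the paper's route buys is brevity, at the cost of deferring the real estimate to an uncited computation.
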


\begin{proof}
Denote ${\bf W} = \left[{\bf w}_1,\cdots,{\bf w}_l\right]\in\mathbb{R}^{n\times l}$. According to Lemma \ref{lem-math}, when $\varepsilon$ is small enough, we have
\begin{align*}
\left|\left\|{\bf V}^{\rm T}{\bf W}\right\|_F^2\!-\!\left\|\bar{\bf A}^{\rm T}{\bf W}\right\|_F^2\right| &\le \sum_{i = 1}^{l} \left|\left\|{\bf V}^{\rm T}{\bf w}_{i}\right\|_2^2\!-\!\left\|\bar{\bf A}^{\rm T}{\bf w}_{i}\right\|_2^2\right| \\
&\lesssim d\sum_{i = 1}^{l} \left\|\bar{\bf A}^{\rm T}{\bf w}_{i}\right\|_2^2\varepsilon \\
&= d\left\|\bar{\bf A}^{\rm T}{\bf W}\right\|_F^2\varepsilon.
\end{align*}
\end{proof}

Let ${\bf U}_k$ denote the orthonormal basis of ${\set X}_k$. 
After projection, the basis matrix becomes ${\bf A}_k={\bm\Phi}{\bf U}_k$, whose columns are not orthogonal to each other.
Instead of processing ${\bf A}_k$ by Gram-Schmidt process to yield the accurate orthogonal basis ${\bf V}_k$,
according to Corollary \ref{C2}, we can normalize each column of ${\bf A}_k$ to produce $\bar{\bf A}_k$ as a rather good estimate of ${\bf V}_k$, and then use $\bar{\bf A}_1^{\rm T}{\bf V}_2$ as ${\bf V}_1^{\rm T}{\bf V}_2$ to estimate the projected affinity.
We will use this technique in Step 2) of this proof.

\subsubsection{Property of Gaussian random matrix}

We need to introduce more properties of Gaussian random matrix to fulfill the proof.

\begin{lem}\label{L3} \emph{\cite{levy1951problemes, achlioptas2001database, ledoux2005concentration}}
Let ${\bf a}_1, {\bf a}_2 \in\mathbb{R}^n$ be two independent Gaussian random vectors. Let $\theta$ denote the angle between ${\bf a}_1$ and ${\bf a}_2$, then we have
\begin{equation}\label{eq-p3epsilon}
    \mathbb{P}(|\cos\theta|>\varepsilon)\le \exp\left({-\frac{\varepsilon^2n}{2}}\right) =: P_3(\varepsilon,n).
\end{equation}
\end{lem}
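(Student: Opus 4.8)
The plan is to reduce the two-vector statement to a one-dimensional concentration estimate for a single coordinate of a uniformly distributed unit vector, and then to control the resulting tail by a Chernoff argument. The key structural observation is that, since ${\bf a}_1$ and ${\bf a}_2$ are independent Gaussian vectors with i.i.d. entries, the normalized vector ${\bf a}_2/\|{\bf a}_2\|$ is uniformly distributed on the unit sphere $S^{n-1}$ and is independent of ${\bf a}_1$. Conditioning on ${\bf a}_1$ and invoking the rotational invariance of the Gaussian law, I would rotate so that ${\bf a}_1/\|{\bf a}_1\|$ becomes the first coordinate vector, whence
$$
\cos\theta = \frac{{\bf a}_1^{\rm T}{\bf a}_2}{\|{\bf a}_1\|\,\|{\bf a}_2\|} \stackrel{d}{=} u_1,
$$
where $u = (u_1,\dots,u_n)$ is uniform on $S^{n-1}$. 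It then suffices to show $\mathbb{P}(|u_1| > \varepsilon) \le \exp(-\varepsilon^2 n/2)$, and by the symmetry $u_1 \stackrel{d}{=} -u_1$ I only need to control the one-sided tail $\mathbb{P}(u_1 > \varepsilon)$ and double it.

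The next step is to pass to an explicit Gaussian representation. Writing $u_1 \stackrel{d}{=} g_1/\|g\|$ with $g = (g_1,\dots,g_n) \sim \mathcal{N}({\bf 0},{\bf I}_n)$ and setting $R^2 = g_2^2 + \cdots + g_n^2 \sim \chi^2_{n-1}$ independent of $g_1$, the event $\{u_1 > \varepsilon\}$ is equivalent, after squaring on $\{g_1 > 0\}$, to $\{g_1 > \sqrt{c}\,R\}$ with $c = \varepsilon^2/(1-\varepsilon^2)$. Conditioning on $R$ and using the Gaussian tail bound $\mathbb{P}(g_1 > x) \le \tfrac12\exp(-x^2/2)$, this gives
$$
\mathbb{P}(u_1 > \varepsilon) \le \tfrac12\,\mathbb{E}\big[\exp(-c R^2/2)\big] = \tfrac12\,(1+c)^{-(n-1)/2} = \tfrac12\,(1-\varepsilon^2)^{(n-1)/2},
$$
where the middle equality is just the moment generating function of a $\chi^2_{n-1}$ variable evaluated at $-c/2$. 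Doubling and applying $1-\varepsilon^2 \le \exp(-\varepsilon^2)$ yields a clean exponential bound of the advertised shape.

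The hard part will be the sharpness of the constant in the exponent rather than the structure of the argument. The computation above delivers $\mathbb{P}(|\cos\theta| > \varepsilon) \le (1-\varepsilon^2)^{(n-1)/2} \le \exp(-(n-1)\varepsilon^2/2)$, which carries an $n-1$ in place of the $n$ claimed in \eqref{eq-p3epsilon}; in fact $(1-\varepsilon^2)^{(n-1)/2} \le \exp(-n\varepsilon^2/2)$ holds only once $\varepsilon^2 \gtrsim 2/(n-1)$, as one sees by comparing the series $-\ln(1-\varepsilon^2) \ge \varepsilon^2 + \varepsilon^4/2$ against $\tfrac{n}{n-1}\varepsilon^2$. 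Reconciling the two is therefore the delicate step: it can be absorbed for $\varepsilon$ bounded away from zero using those higher-order terms, or one may instead derive the estimate directly from L\'evy's isoperimetric concentration of measure on $S^{n-1}$ (consistent with the cited references), noting that $u \mapsto u_1$ is $1$-Lipschitz with median $0$. Since this is a classical concentration result invoked only qualitatively in the sequel, and since the paper adopts the $\lesssim$ convention for $n$ large, I would present the Chernoff derivation as the self-contained core and remark that the precise exponent follows from the standard spherical concentration inequalities in \cite{levy1951problemes, achlioptas2001database, ledoux2005concentration}.
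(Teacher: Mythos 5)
Your proof is correct, but it takes a genuinely different route from the paper's: the paper disposes of this lemma in a single line, asserting that \eqref{eq-p3epsilon} ``is verified by using the concentration of measure'' and leaning entirely on the cited references, whereas you give a self-contained derivation. Your reduction via rotational invariance to a single coordinate of a uniform vector on $S^{n-1}$, the representation $u_1 \stackrel{d}{=} g_1/\|g\|$, and the Chernoff step through the $\chi^2_{n-1}$ moment generating function are all sound, and they yield the explicit bound $\mathbb{P}(|\cos\theta|>\varepsilon)\le(1-\varepsilon^2)^{(n-1)/2}\le\exp\left(-(n-1)\varepsilon^2/2\right)$. You are also right, and commendably candid, that this falls just short of the stated exponent $n$: closing that gap requires either $\varepsilon^2\gtrsim 2/(n-1)$ (via the higher-order terms of $-\ln(1-\varepsilon^2)$) or an appeal to the spherical isoperimetric inequalities of the cited literature, which is exactly what the paper itself does. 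The discrepancy is immaterial for the paper's purposes: Lemma \ref{L3} is invoked only qualitatively (in the proofs of Lemma \ref{L4} and of Theorem \ref{thm-sub}) to argue that $P_3(\varepsilon,n)$ decays exponentially in $n$ and is therefore negligible against $P_1(\varepsilon,n)=4/(\varepsilon^2 n)$ under the paper's $\lesssim$ convention, and $\exp\left(-(n-1)\varepsilon^2/2\right)$ serves that role identically. What your approach buys is transparency and verifiability --- an elementary computation with an explicit constant, and in fact a sharper bound, since for fixed $\varepsilon$ and large $n$ one has $(1-\varepsilon^2)^{(n-1)/2}<\exp(-n\varepsilon^2/2)$; what the paper's approach buys is brevity, at the price of deferring the entire content of the estimate to external references.
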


\begin{proof}
Equation \eqref{eq-p3epsilon} is verified by using the concentration of measure.
\end{proof}

Before going into the proof of Lemma \ref{thm-sub-tight} and Theorem \ref{thm-sub}, we first simplify the notation. For any matrix $(\cdot)_k$ or its column vector $(\cdot)_{k,i}$ in this subsection, the subscript $k$ denotes the index of subspaces, i.e., $k=1,2$. 

There are four steps in this proof. We will prepare the basis matrices for the subspaces before and after projection in the first step. 
Then following a proof sketch, Lemma \ref{thm-sub-tight} will be justified in the last three steps. Finally Theorem \ref{thm-sub} is reached by relaxing some conditions.

\emph{\bf Step 1)}
Let $\tilde{\bf U}_k=\left[\tilde{\bf u}_{k,1}, \cdots, \tilde{\bf u}_{k, d_k}\right]$ denote any orthonormal matrix for subspace ${\set X}_k$. 
According to the definition of affinity in \eqref{eq-define-affinity}, one may do singular value decomposition to $\tilde{\bf U}_2^{\rm T}\tilde{\bf U}_2$,
$$
\tilde{\bf U}_1^{\rm T}\tilde{\bf U}_2 = {\bf Q}_1{\bm\Lambda}{\bf Q}_2^{\rm T},
$$
where ${\bf Q}_1$ and ${\bf Q}_2^{\rm T}$ denote, respectively, the orthonormal basis of the column space and row space for $\tilde{\bf U}_1^{\rm T}\tilde{\bf U}_2$. 
The singular values $\lambda_i=\cos\theta_i, 1\le i\le d_1$ is located on the diagonal of $\bm\Lambda$, where $\theta_i$ denotes the $i$th principal angle. After reshaping, we have
\begin{align*}
\left(\tilde{\bf U}_1{\bf Q}_1\right)^{\rm T}\tilde{\bf U}_2{\bf Q}_2 =
{\bf U}_1^{\rm T}{\bf U}_2 = {\bm\Lambda} 
= \left[ \begin{array}{ccc|c}
     \lambda_1 & & & 0 \\ & \ddots& & \vdots \\ & & \lambda_{d_1} & 0 \end{array} \right],
\end{align*}
where ${\bf U}_k = \tilde{\bf U}_k{\bf Q}_k$ are the orthonormal basis, which has the closest connection with the affinity between these two subspaces.
Specifically, the angles of the first $d_1$ columns are all principal angles for calculating affinity, i.e.
\begin{equation}\label{eq-proof-thm-sub-orig-line-aff}
{\bf u}_{1,i}^{\rm T}{\bf u}_{2,j} = \left\{\begin{array}{ll} 
\lambda_i, & 1\le i=j \le d_1;\\
0, & {\rm elsewhere}.
\end{array}\right.
\end{equation}

After projection by using Gaussian random matrix ${\bm\Phi}$, the original basis matrix changes to ${\bf A}_k={\bm\Phi}{\bf U}_k=\left[{\bf a}_{k,1}, \cdots, {\bf a}_{k,d_k}\right]$, whose columns are not orthogonal to each other.
Considering the angles between any two columns are not very large, however, we may normalize each columns as
$$
\bar{\bf A}_k=\left[\bar{\bf a}_{k,1}, \cdots, \bar{\bf a}_{k,d_k}\right]=\left[\frac{{\bf a}_{k,1}}{\|{\bf a}_{k,1}\|}, \cdots, \frac{{\bf a}_{k,d_k}}{\|{\bf a}_{k,d_k}\|}\right],
$$
which could be used to approximate the orthonormal basis of ${\set Y}_k$.
In order to obtain the accurate orthonormal basis for the compressed subspace, the efficient method is to process $\bar{\bf A}_k$ by using Gram-Schmidt orthogonalization. We use ${\bf V}_k=\left[{\bf v}_{k,1}, \cdots, {\bf v}_{k,d_k}\right]$ to denote the orthonormal basis after orthogonalization.

Now we are ready to introduce the sketch of our proof. 
In Step 2), we use $\bar{\bf A}_1^{\rm T}{\bf V}_2$ to estimate the compressed affinity according to Corollary \ref{C2}.
In Step 3), we first deem the original subspace ${\set X}_1$ as $d_1$ independent one-dimensional subspaces and then calculate the distance between  $\bar{\bf A}_1^{\rm T}{\bf V}_2$ and the estimator of \eqref{thm-sub-aff-change} by using Lemma \ref{lem-line-sub}.
Finally, we combine the results in the above two steps and simplify it to complete the proof in the last step.
The proof sketch is visualized in Fig. \ref{visprof} (b).

\emph{\bf Step 2)}
According to the properties of Gaussian matrix, we know that $\bar{\bf a}_{1,i}$ and $\bar{\bf a}_{1,j}$, which are obtained by random projection and normalization, are independent for all $i\ne j$.
Using Lemma \ref{L3} and as a consequence, with probability at least $1-\left(d_1(d_1-1)/2\right)P_3(\varepsilon_3,n)$, we have
$$
\left|\bar{\bf a}_{1,i}^{\rm T}\bar{\bf a}_{1,j}\right|\le\varepsilon_3, \quad\forall 1\le i \ne j \le d_1.
$$
This means that $\bar{\bf A}_1$ well approximates ${\bf V}_1$ and can be roughly utilized as an orthonormal basis.
Recalling the definition of affinity in \eqref{eq-define-affinity} and using Corollary \ref{C2},  we have
\begin{align}
\left|\affYS-\left\|\bar{\bf A}_1^{\rm T}{\bf V}_2\right\|_F^2\right| &= \left|\|{\bf V}_1^{\rm T}{\bf V}_2\|_F^2-\left\|\bar{\bf A}_1^{\rm T}{\bf V}_2\right\|_F^2\right|  \nonumber\\
&\lesssim d_1\left\|\bar{\bf A}_1^{\rm T}{\bf V}_2\right\|_F^2\varepsilon_3. \label{eq-proof-thm-sub-bound1}
\end{align}

\emph{\bf Step 3)}
Now we will deem all basis vectors of ${\set X}_1$ separately as multiple one-dimensional subspaces, denoted by ${\set X}_{1,i}, 1\le i\le d_1$.
According to its definition and \eqref{eq-proof-thm-sub-orig-line-aff}, the affinity between ${\set X}_{1,i}$ and ${\set X}_2$ equals $\lambda_i$.
Actually we are interested in the relation between ${\set X}_{1,i}$ and ${\set X}_2$ after random projection.
This has been solved by Lemma $\ref{lem-line-sub}$, which means that, with probability at least $1 - P_1(\varepsilon_4,n)$,
\begin{equation}\label{eq-proof-thm-sub-bound-line}
\left|{\rm aff}_{{\set Y}_i}^2 - \overline{\rm aff}_{{\set Y}_i}^2\right| \lesssim \lambda_i^2(1-\lambda_i^2)\varepsilon_4,
\end{equation}
where 
\begin{align}
{\rm aff}_{{\set Y}_i}^2 &= \left\|\bar{\bf a}_{1, i}^{\rm T}{\bf V}_2\right\|^2, \label{eq-proof-thm-sub-aff-line}\\
\overline{\rm aff}_{{\set Y}_i}^2 &= \lambda_i^2+\frac{d_2}{n}(1-\lambda_i^2), \nonumber
\end{align}
denote, respectively, the affinity and its estimate between the compressed line ${\set Y}_{1,i}$ and the compressed subspace ${\set Y}_2$. 
Equation \eqref{eq-proof-thm-sub-aff-line} comes from that $\bar{\bf a}_{1,i}$ is the orthonormal basis for ${\set Y}_{1,i}$.

Considering the independence among these one-dimensional subspaces and using \eqref{eq-define-affinity-2}, \eqref{thm-sub-aff-change}, \eqref{eq-proof-thm-sub-bound-line}, and \eqref{eq-proof-thm-sub-aff-line}, we have, with probability at least $1 - d_1P_1(\varepsilon_4,n)$, 
\begin{align}
        \left|\|\bar{\bf A}_1^{\rm T}{\bf V}_2\|_F^2-\oaffYS\right| 
        &= \left|\sum_{i= 1}^{d_1} \left\|\bar{\bf a}_{1, i}^{\rm T}{\bf V}_2\right\|^2-\left(\affXS+\frac{d_2}{n}(d_1-\affXS)\right)\right|\nonumber\\ 
        &= \left|\sum_{i= 1}^{d_1} {\rm aff}_{{\set Y}_i}^2-\sum_{i= 1}^{d_1}\left(\lambda_i^2+\frac{d_2}{n}(1-\lambda_i^2)\right)\right| \nonumber\\
        &\le \sum_{i= 1}^{d_1} \left|{\rm aff}_{{\set Y}_i}^2-\overline{\rm aff}_{{\set Y}_i}^2\right| 
        \lesssim \sum_{i= 1}^{d_1} \lambda_i^2(1-\lambda_i^2)\varepsilon_4. \label{eq-proof-thm-sub-bound2}
\end{align}

\emph{\bf Step 4)}
Combining \eqref{eq-proof-thm-sub-bound1} and \eqref{eq-proof-thm-sub-bound2} by utilizing triangle inequality, we readily reach that, with probability at least $1 - \left(d_1(d_1\!-\!1)/2\right)P_3(\varepsilon_3,n) - d_1P_1(\varepsilon_4,n) $,
\begin{align*}
\left|\affYS-\oaffYS\right| 
\le& \left|\affYS-\left\|\bar{\bf A}_1^{\rm T}{\bf V}_2\right\|_F^2\right| + \left|\left\|\bar{\bf A}_1^{\rm T}{\bf V}_2\right\|_F^2-\oaffYS\right| \\
\lesssim & \, d_1\left\|\bar{\bf A}_1^{\rm T}{\bf V}_2\right\|_F^2\varepsilon_3 + \sum_{i= 1}^{d_1} \lambda_i^2(1-\lambda_i^2)\varepsilon_4.
\end{align*}
Recalling Lemma \ref{lemma-simp} and that $P_3(\varepsilon_3,n)$ decreases exponentially with respect to $n$, we have
\begin{align}
        \mathbb{P}\left(\left|\affYS-\oaff^2\right| >\sum_{i= 1}^{d_1} \lambda_i^2(1-\lambda_i^2)\varepsilon\right) \label{eq-proof-thm-sub-bound-tight}
        &\lesssim d_1P_1(\varepsilon_4,n) + \frac{d_1(d_1-1)}{2}P_3(\varepsilon_3,n) \\
        &\sim d_1P_1(\varepsilon,n) = \frac{4d_1}{\varepsilon^2n}, \nonumber
\end{align}
We then complete the proof of Lemma \ref{thm-sub-tight}. Finally, relaxing the bound in \eqref{eq-proof-thm-sub-bound-tight} by
$$
 \sum_{i= 1}^{d_1} \lambda_i^2(1-\lambda_i^2) \le \sum_{i= 1}^{d_1} \lambda_i^2 = \affXS,
$$
Theorem \ref{thm-sub} is proved.

\subsection{Proof of Corollary \ref{cor-sub-distance}}
\label{Appendixproof-cor-sub-distance}

By reshaping \eqref{sub-distance-estimate}, we have
\begin{equation}\label{eq-proof-cor-sub-distance-1}
\dYS \!-\! \odYS = \left(\dYS \!-\! \dXS \right) + \frac{d_2}{n}\left(\dXS-\frac{d_2\!-\!d_1}{2}\right).
\end{equation}
Using Lemma \ref{lem-aff-to-dist} in \eqref{eq-proof-cor-sub-distance-1}, we are ready to verify
\begin{equation}\label{eq-proof-cor-sub-distance-2}
\dYS - \odYS = \oaffYS - \affYS.
\end{equation}
Now let check the bound in \eqref{thm-sub-aff-bound-tight},
\begin{equation}
	\sum_{i= 1}^{d_1} \lambda_i^2(1\!-\!\lambda_i^2) \le \sum_{i= 1}^{d_1} (1\!-\!\lambda_i^2) 
        = d_1 \!-\! \affXS 
        \le \dXS. \label{eq-proof-cor-sub-distance-3}
\end{equation}
Combing \eqref{eq-proof-cor-sub-distance-2} and \eqref{eq-proof-cor-sub-distance-3} in Lemma \ref{thm-sub-tight}, the proof is complete.

\subsection{Proof of Theorem \ref{thm-rip}}
\label{Appendixproof-thm-rip}

Jointly applying the fact of $\dXS \ge (d_2-d_1)/2$ by Lemma \ref{lem-aff-to-dist} and triangle inequality in \eqref{eq-proof-cor-sub-distance-1}, we have
\begin{equation}\label{eq-proof-thm-rip-2}
\left|\dYS \!-\! \dXS\right| \!\le\! \left|\dYS \!-\! \odYS\right| \!+\!  \frac{d_2}{n}\!\left(\!\dXS\!-\!\frac{d_2\!-\!d_1}{2}\right).
\end{equation}
Inserting \eqref{eq-proof-thm-rip-2} in \eqref{sub-dist-bound}, we have
\begin{equation}\label{eq-proof-thm-rip-3}
\mathbb{P}\!\left(\!\left|\dYS\!-\!\dXS\right|\!>\!\dXS\varepsilon_1\!+\!\frac{d_2}{n}\!\left(\!\dXS\!-\!\frac{d_2\!-\!d_1}{2}\right)\!\right)  \!\lesssim \frac{4d_1}{\varepsilon_1^2n}.
\end{equation}
By redefining $\varepsilon$ as
\begin{equation}\label{eq-proof-thm-rip-4}
	\dXS\varepsilon = \dXS\varepsilon_1+\frac{d_2}{n}\left(\dXS-\frac{d_2-d_1}{2}\right),
\end{equation}
we have
\begin{equation}\label{eq-proof-thm-rip-5}
	\frac{4d_1}{\varepsilon_1^2n} = \frac{4d_1}{\left(\varepsilon - \frac{d_2}{n} + \frac{d_2(d_2-d_1)}{2n\dXS}\right)^2n} <
	\frac{4d_1}{\left(\varepsilon - \frac{d_2}{n}\right)^2n}.
\end{equation}
Using \eqref{eq-proof-thm-rip-4} and \eqref{eq-proof-thm-rip-5} in \eqref{eq-proof-thm-rip-3}, the proof is complete.

\section{Numerical verification}

In this section, the main result of Theorem \ref{thm-sub} is evaluated by numerical simulations. 
In order to save computation, we randomly generate two subspaces in the following steps. 
\begin{enumerate}
\item
Given $d_1\le d_2 \ll N$, generate an orthonormal matrix ${\bf W} = \left[{\bf w}_1,\cdots,{\bf w}_{d_1+d_2}\right] \in\mathbb{R}^{N\times (d_1+d_2)}$.
\item
Let ${\bf U}_2 = \left[{\bf w}_1,\cdots,{\bf w}_{d_2}\right]$ be the orthonormal basis for subspace ${\set X}_2$.
\item
Given affinity $\affX$, randomly choose $\hat\lambda_i, 1\le i \le d_1$ from the uniform distribution in $[0,1]$ and then scale them to the affinity, i.e.
$$
	\lambda_i = \affX\cdot\frac{\hat\lambda_i}{\left(\sum_{i=1}^{d_1}\hat\lambda_i^2\right)^{\frac12}}.
$$
\item
Calculate the orthonormal basis for subspace ${\set X}_1$ as
\begin{align*}
	{\bf U}_1 = \big[\lambda_1{\bf w}_1+{(1-\lambda_1^2)^{\frac12}}{\bf w}_{d_2+1},\cdots,
 \lambda_{d_1}{\bf w}_{d_1}+{(1-\lambda_{d_1}^2)^{\frac12}}{\bf w}_{d_2+d_1}\big].
\end{align*}
\end{enumerate}
With this method, we can generate two subspaces with any given affinity, which are ready for projection.
In addition, several subspace clustering algorithms are conducted on both compressed synthetic data and real-world data to verify the motivation and application of this work.

\subsection{Affinities before and after random projection}

In the first experiment, the estimate of the compressed affinity \eqref{thm-sub-aff-change} is verified in the condition of $(N,n)=(500, 200)$ and $(d_1,d_2) = (5,10)$. 
The original affinity in the ambient space is chosen as $\affXS=1,2,3,4$, respectively. 
For each $\affXS$, a random Gaussian matrix is generated and used to project the two subspaces into the compressed space, where the compressed affinity $\affYS$ is calculated. 
The frequencies of the compressed affinities obtained from $1{\rm E}5$ trials as well as with their theoretical estimates are demonstrated in Fig. \ref{simfigure1}. 
One may read that the proposed estimate is rather accurate and the compressed affinities concentrate on their theoretical estimates.

\begin{figure}[!t]
\begin{center}
\includegraphics[width=0.7\textwidth]{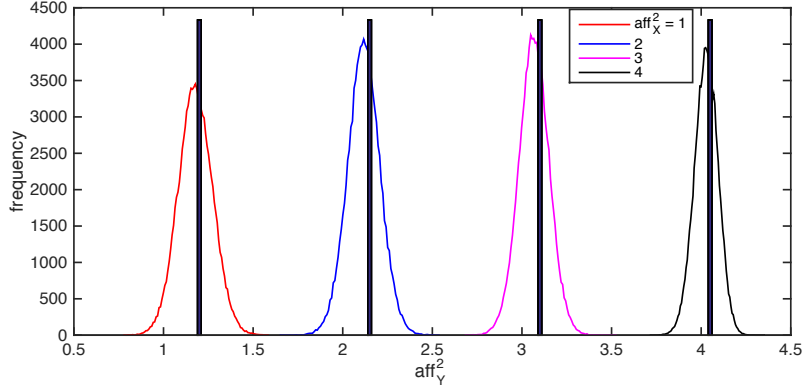}
\caption{This figure demonstrates the experimental frequency (denoted by curves) and the theoretical estimate (denoted by bars) of the compressed affinity, where $(N,n)=(500,200)$, $(d_1,d_2)=(5,10)$, and the original affinities are fixed as $1,2,3$, and $4$. 
The frequencies are calculated by 1E5 trials.}\label{simfigure1}
\end{center}
\end{figure}

In the second experiment, the estimate of the compressed affinity is further tested for all possible original affinities and by various subspace dimension combinations, where the dimensions of the ambient space and compressed space are the same as those in the first experiment. 
Here $(d_1,d_2)$ is chosen from a candidate set and the original affinity varies from $0$ to its maximum, i.e., $d_1\le d_2$. 
For each case, two original subspaces and a random Gaussian matrix are generated, then the compressed affinity is calculated after projection. 
After repeating 500 times, the frequencies at different compressed affinities are computed and normalized by its maximum, i.e., the compressed affinity with the highest appearance is assigned 1 and the others are smaller than 1. 
Then the normalized frequencies for all cases are plotted in Fig. \ref{simfigure2}, where the blue line denotes the theoretical estimate. 
This result further verifies that the compressed affinities of various dimensions of subspaces display the concentration property, as shows in Theorem \ref{thm-sub}.
 
\begin{figure}[!t]
\begin{center}
\includegraphics[width=0.7\textwidth]{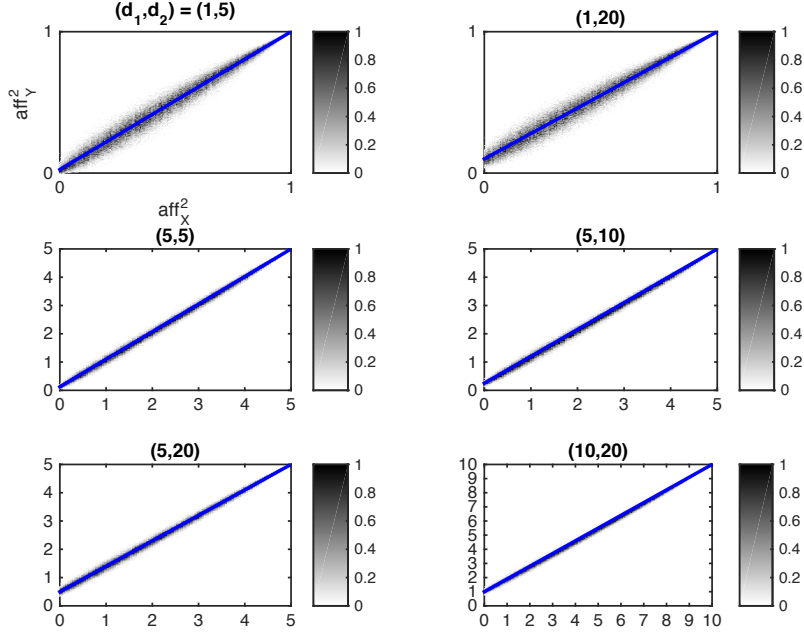}
\caption{This figure demonstrates the experimental compressed affinity (which frequency is denoted by gray area) and the theoretical estimate (denoted by blue line), where $(N,n)=(500,200)$ and $(d_1,d_2)$ are displayed on the title.}\label{simfigure2}
\end{center}
\end{figure}

The third experiment tests the effect of $N$ and $n$ in Theorem \ref{thm-sub}. 
By fixing $(d_1,d_2)=(5,10)$, the compressed affinity of two subspaces being projected from an $N$-dimension space to an $n$-dimension space, where $(N, n)$ is chosen from a candidate set, is shown.
The result is plotted in Fig. \ref{simfigure3} by using the same way as that in the second experiment. 
One may readily find that by increasing $n$, the compressed affinity demonstrates better concentration. 
Whereas the dimension of the original space, $N$, has no effect on the  concentration behavior.
The observation agrees with Theorem \ref{thm-sub}.

\begin{figure}[!t]
\begin{center}
\includegraphics[width=0.7\textwidth]{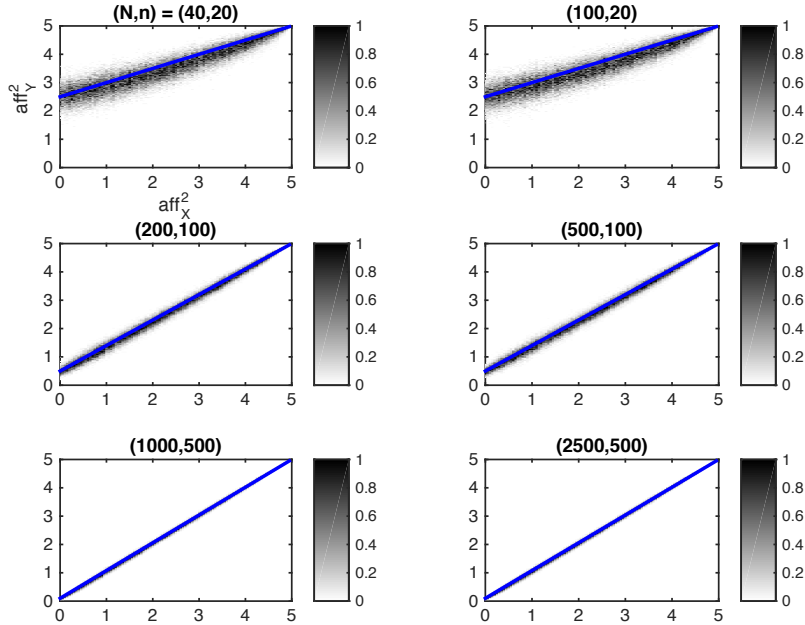}
\caption{This figure demonstrates the experimental compressed affinity (which frequency is denoted by gray area) and the theoretical estimate (denoted by blue line), where $(d_1,d_2)=(5, 10)$ and $(N,n)$ are displayed on the title. }\label{simfigure3}
\end{center}
\end{figure}

\subsection{Clustering performance of CSC}

In the forth experiment, we conduct CSC on synthetic data using several clustering algorithms.
Two low-dimensional subspaces of dimensions $d_1=d_2=7$ with $3$ intersect dimensions are selected in a $200$-dimensional ambient space. In each subspace $60$ samples are randomly generated.
The sampled points are then compressed by Gaussian random matrix to an $n$-dimensional ambient space with $n=200, 150, \ldots, 5$, respectively.
After compression, the $120$ $n$-dimensional data points are clustered by several algorithms including SSC \cite{Elhamifar2009Sparse}, square-root SSC (SR-SSC) \cite{SR-SSC}, and low-rank representation (LRR) \cite{LRR}.
The clustering error rate, which is averaged from $30$ independent compression tests, of the three algorithms are plotted in the top figure of Fig. \ref{simfigure5},
where the affinity after compression are plotted in the bottom figure.
One may read that when the compression ratio $n/N$ decreases, along with the increase of the compressed affinity, the clustering error increases synchronously.
This verifies our theory and the motivation of our work, i.e., data compression before subspace clustering can be adopted to save computation, but too much compression can cause high error rate.

\begin{figure}[!t]
\begin{center}
\includegraphics[width=0.7\textwidth]{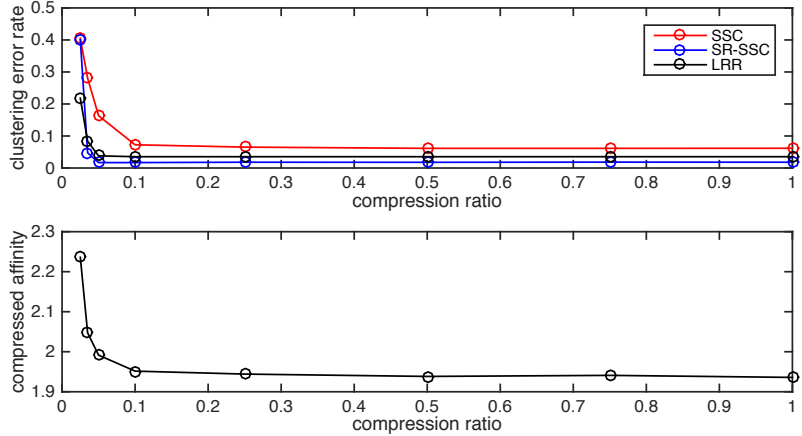}
\caption{Clustering error (top) and compressed affinity (bottom) v.s. compression ratio $(n/N)$ on synthetic data. \label{simfigure5}}
\end{center}
\end{figure}

\begin{figure}[!t]
  \centering
  \subfigure[YaleB]{
    \label{fig:realworlddata:a} 
	\includegraphics[width=0.25\textwidth]{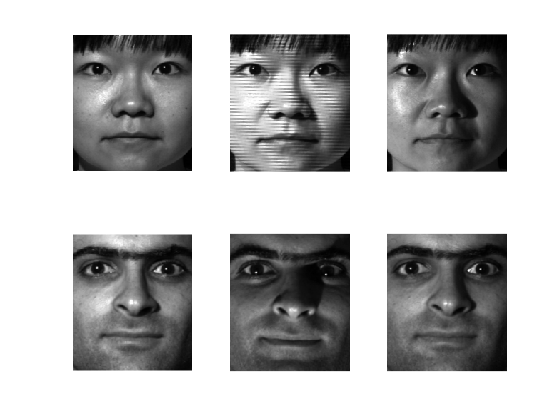}}
  \subfigure[MNIST]{
    \label{fig:realworlddata:b} 
\includegraphics[width=0.25\textwidth]{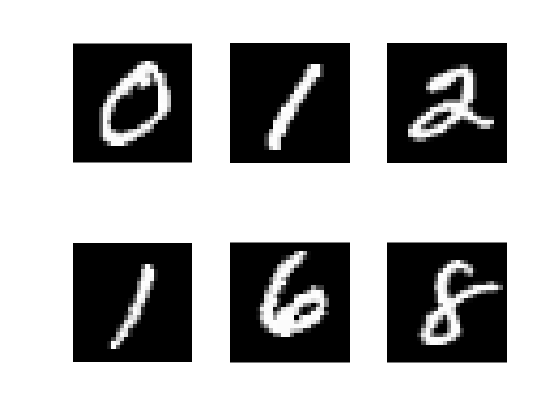}}
\caption{Examples of the two real-world databases.\label{fig:examples}}
\end{figure}

In the last experiment, we try CSC on two real-world databases, including YaleB \cite{EYaleB} and MNIST handwritten digits images \cite{MNIST} as exemplified in Fig. \ref{fig:examples}, by using the above three clustering algorithms.
\begin{itemize}
\item
For the face images, we choose the $1$st and the $5$th subjects, where each subject has $64$ face images of dimension $N=32256$. 
Each image is then compressed to $n=100, 75, \ldots, 5$ dimensions, respectively, 
by Gaussian random projection.
The clustering error averaged from $50$ independent tests and the compressed affinity are plotted in Fig. \ref{simfigure7}.
\item
For the handwritten images, we choose digits ``$1$'' and ``$2$'', where each digit has $300$ handwritten images of dimension $N=784$. 
Each image is then compressed to $n=500, 400, \ldots, 10$. 
The clustering error averaged from $20$ independent tests and the compressed affinity are plotted in Fig. \ref{simfigure8}.
\end{itemize}
The trend of curves in Fig. \ref{simfigure7} and Fig. \ref{simfigure8} are exactly the same as the ones in Fig. \ref{simfigure5}.
Therefore, the real-world data also verify the motivation and future potential applications of this work.
When the dimension of new ambient space is large enough, one may do random compression to reduce the data size while keeping the distance between latent subspaces, 
which plays an important role in solving subspace clustering problems.

\begin{figure}[!t]
\begin{center}
\includegraphics[width=0.7\textwidth]{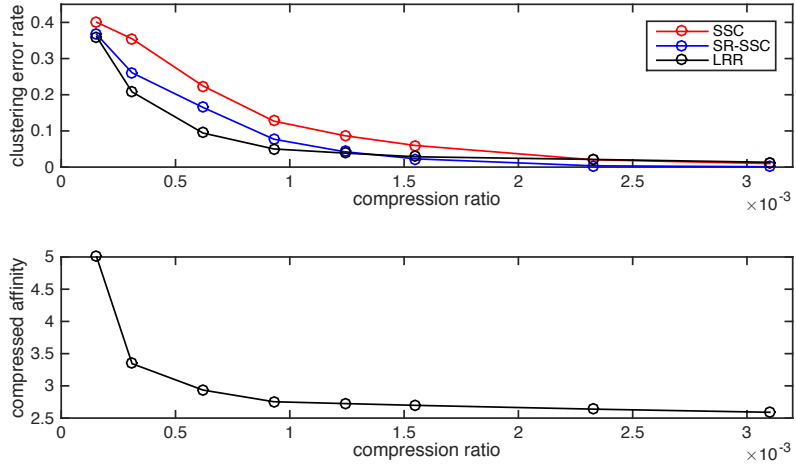}
\caption{Clustering error and compressed affinity v.s. compression ratio $(n/N)$ on YaleB face images.\label{simfigure7}}
\end{center}
\end{figure}

\begin{figure}[!t]
\begin{center}
\includegraphics[width=0.7\textwidth]{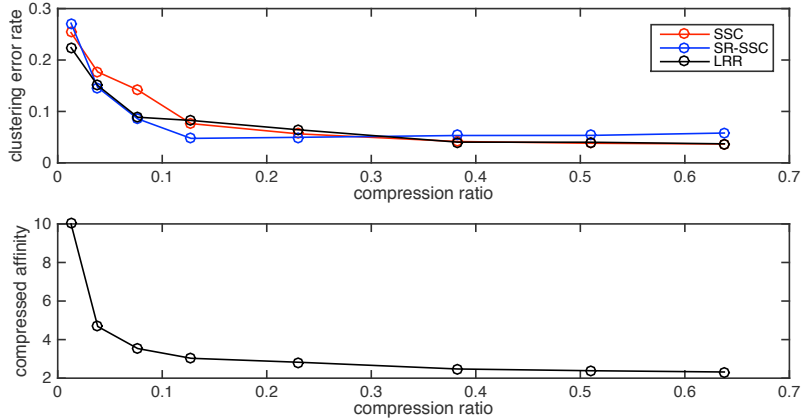}
\caption{Clustering error and compressed affinity v.s. compression ratio $(n/N)$ on MNIST handwritten digits images.\label{simfigure8}}
\end{center}
\end{figure}

\section{Conclusion}

This work generalizes the JL Lemma and the RIP from finite signal set and sparse signals, respectively, to subspaces. 
We study the distance-preserving property of Gaussian random projection for subspaces.
By introducing the projection $F$-norm distance and build a metric space, the connection between affinity and distance are revealed for the first time.
We then theoretically prove that with high probability the affinity or distance between two projected subspaces are concentrated on their estimates. When the new ambient dimension is sufficiently large, the affinity and distance between two subspaces almost remain unchanged after random projection.
Finally, the main contribution is generalized to a finite set of subspaces. In addition, we provided numerical simulations for validation. 

\section{Appendix}

\subsection{Proof of Lemma \ref{lem-aff-to-dist}}
\label{Appendixproof-lem-aff-to-dist}

Denote the orthonormal basis of subspace ${\set X}_k$ by ${\bf U}_k=\left[{\bf u}_{k,1}, ..., {\bf u}_{k, d_k}\right], k = 1, 2$. According to the definition of distance in \eqref{eq-define-distance},  we have
\begin{align}
        \quad D^2(\set{X}_1, \set{X}_2)
        &= \frac{1}{2}\|{\bf P}_{1}-{\bf P}_{2}\|_F^2 \nonumber\\
        &= \frac{1}{2}\left\|{\bf U}_1{\bf U}_1^{\rm T}-{\bf U}_2{\bf U}_2^{\rm T}\right\|_F^2 \nonumber\\
        &= \frac{1}{2}\bigg\|\sum_{i=1}^{d_1} {\bf u}_{1,i}{\bf u}_{1,i}^{\rm T} - \sum_{j=1}^{d_2} {\bf u}_{2,j}{\bf u}_{2,j}^{\rm T}\bigg\|_F^2.\label{eq-proof-lem-aff-to-dist-1}
\end{align}
Denote the $l$th entry of ${\bf u}_{k,i}$ by $u_{k, li}$ and expand the RHS of \eqref{eq-proof-lem-aff-to-dist-1}, we have
\begin{align}
        \quad D^2(\set{X}_1, \set{X}_2)
        &= \!\frac{1}{2}\sum_{l=1}^{n} \sum_{m=1}^{n} \!\!\bigg(\!\sum_{i=1}^{d_1} u_{1,li}u_{1,mi} - \!\sum_{j=1}^{d_2} u_{2,lj}u_{2,mj}\!\bigg)^2 \nonumber\\
        &=: \frac{1}{2}\sum_{l=1}^{n} \sum_{m=1}^{n} \left(A_1+A_2-2B\right),\label{eq-proof-lem-aff-to-dist-sum}
\end{align}
where
\begin{align}
A_k &= \bigg(\sum_{i=1}^{d_k} u_{k,li}u_{k,mi}\bigg)^2\nonumber\\
&= \sum_{i=1}^{d_k} u_{k,li}^2u_{k,mi}^2 + \!\!\sum_{1\le i\ne j\le d_k} \!\!\!\!u_{k,li}u_{k,mi}u_{k,lj}u_{k,mj},\label{eq-proof-lem-aff-to-dist-Ak}\\
B &= \sum_{i=1}^{d_1} u_{1,li}u_{1,mi} \sum_{j=1}^{d_2} u_{2,lj}u_{2,mj}. \label{eq-proof-lem-aff-to-dist-B}
\end{align}
We will study the three items separately. First, using \eqref{eq-proof-lem-aff-to-dist-Ak} in \eqref{eq-proof-lem-aff-to-dist-sum} and changing the order of summation, we have
\begin{align*}
	\sum_{l=1}^{n} \sum_{m=1}^{n}A_k 
	=&\sum_{i=1}^{d_k} \bigg(\sum_{l=1}^{n} u_{k,li}^2\sum_{m=1}^{n} u_{k,mi}^2\bigg) 
	+ \!\!\sum_{1\le i\ne j\le d_k} \!\!\!\bigg(\sum_{l=1}^{n} u_{k,li}u_{k,lj}\sum_{m=1}^{n} u_{k,mi}u_{k,mj}\bigg)\\
	=& \sum_{i=1}^{d_k} \|{\bf u}_{k, i}\|^2\|{\bf u}_{k, i}\|^2 + \!\!\sum_{1\le i\ne j\le d_k} \!\!\!{\bf u}_{k, i}^{\rm T}{\bf u}_{k, j}{\bf u}_{k, i}^{\rm T}{\bf u}_{k, j}.
\end{align*}
Considering that ${\bf u}_{k,i}, 1\le i\le d_k$ are columns drawn from an orthonormal matrix, we have
\begin{equation}\label{eq-proof-lem-aff-to-dist-Ak-simp} 
	\sum_{l=1}^{n} \sum_{m=1}^{n}A_k = \sum_{i=1}^{d_k} 1 + \!\!\sum_{1\le i\ne j\le d_k} \!\!\!\!0 
	= d_k.
\end{equation}
Now we check the last item in \eqref{eq-proof-lem-aff-to-dist-sum}. Using \eqref{eq-proof-lem-aff-to-dist-B} in \eqref{eq-proof-lem-aff-to-dist-sum} and changing the order of summation, we have
\begin{align*}
	\sum_{l=1}^{n} \sum_{m=1}^{n}B &= \sum_{i=1}^{d_1}\sum_{j=1}^{d_2} \left(\sum_{l=1}^{n} u_{1,li}u_{2,lj}\sum_{m=1}^{n} u_{1,mi}u_{2,mj}\right)\\
	&= \sum_{i=1}^{d_1}\sum_{j=1}^{d_2} \left|{\bf u}_{1, i}^{\rm T}{\bf u}_{2, j}\right|^2.
\end{align*}
Recalling the definition of affinity in \eqref{eq-define-affinity}, we have
\begin{equation}\label{eq-proof-lem-aff-to-dist-B-simp} 
	\sum_{l=1}^{n} \sum_{m=1}^{n}B = \aff ^2(\set{X}_1, \set{X}_2).
\end{equation}
We then complete the proof by inserting \eqref{eq-proof-lem-aff-to-dist-Ak-simp} and \eqref{eq-proof-lem-aff-to-dist-B-simp} in \eqref{eq-proof-lem-aff-to-dist-sum}.

\subsection{Proof of Lemma \ref{lemma-simp}}
\label{proof-lemma-simp}

By introducing $\varepsilon = \varepsilon_1 + c_2\varepsilon_2/c_1$, we have $\varepsilon_1 = (1 - {1}/{m})\varepsilon$, and $\varepsilon_2 = {c_1}\varepsilon/(mc_2)$ for all $m \in \mathbb{N}^+$. Using \eqref{Appendixequation1.5} in \eqref{Appendixequation1}, we have, for large $n$,
\begin{align*}
\mathbb{P}\left(X(n) > c_1\varepsilon \big| n\right) &\le p\left(\left(1 - 1/m\right)\varepsilon, n\right) + q\left(\frac{c_1}{mc_2}\varepsilon, n\right) \\
&\sim p\left(\varepsilon\left(1 - {1}/{m}\right), n\right).
\end{align*}
Let $m$ approach infinity, and then \eqref{Appendixequation2} is obtained.

\subsection{Proof of Lemma \ref{L4}}
\label{proof-L4}

First, we introduce Lemma \ref{L1}, from which Lemma \ref{L4} is extended.

\begin{lem}\label{L1}
Let ${\bf a}_1, {\bf a}_2 \in\mathbb{R}^n$ are two independent Gaussian random vectors. Since $\|{\bf a}_1\|^2 \sim \chi_n^2/n$ and $\|{\bf a}_2\|^2 \sim \chi_n^2/n$ are independent, 
$
F_{n,n}:={\|{\bf a}_1\|^2}/{\|{\bf a}_2\|^2}
$
follows an F-distribution with parameters $(n, n)$, so we have
\begin{equation}\label{eq-p1epsilon}
        \mathbb{P}(|F_{n,n}-1|>\varepsilon) 
        \lesssim \frac{4}{\varepsilon^2n} = P_1(\varepsilon,n),
\end{equation}
when $n$ is large enough.
\end{lem}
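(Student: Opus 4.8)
The plan is to exploit the explicit moments of the F-distribution together with Chebyshev's inequality. Since $F_{n,n}$ follows an F-distribution with parameters $(n,n)$, its mean and variance are available in closed form, namely
\[
\mathbb{E}F_{n,n} = \frac{n}{n-2}, \qquad \mathrm{Var}(F_{n,n}) = \frac{4n(n-1)}{(n-2)^2(n-4)},
\]
the latter being valid for $n>4$. The key observation is their asymptotic behaviour as $n\to\infty$: the mean tends to $1$ with $\mathbb{E}F_{n,n}-1 = \frac{2}{n-2} = O(1/n)$, while the variance satisfies $\mathrm{Var}(F_{n,n}) \sim 4/n$ to leading order. These two facts already expose where the target bound $4/(\varepsilon^2 n)$ comes from.

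First I would apply Chebyshev's inequality centred at the true mean $\mu_n := \mathbb{E}F_{n,n}$, giving $\mathbb{P}(|F_{n,n}-\mu_n|>t) \le \mathrm{Var}(F_{n,n})/t^2$ for every $t>0$. To recast this as a statement about deviations from $1$ rather than from $\mu_n$, I would invoke the triangle inequality: the event $\{|F_{n,n}-1|>\varepsilon\}$ is contained in $\{|F_{n,n}-\mu_n|>\varepsilon-|\mu_n-1|\}$. Because $|\mu_n-1| = 2/(n-2)$ vanishes while $\varepsilon$ remains fixed, for $n$ large enough the threshold $\varepsilon-|\mu_n-1|$ is positive and asymptotically equivalent to $\varepsilon$.

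Combining these ingredients yields, for large $n$,
\[
\mathbb{P}(|F_{n,n}-1|>\varepsilon) \le \frac{\mathrm{Var}(F_{n,n})}{\left(\varepsilon-\frac{2}{n-2}\right)^2} = \frac{4n(n-1)}{(n-2)^2(n-4)\left(\varepsilon-\frac{2}{n-2}\right)^2} \sim \frac{4}{\varepsilon^2 n},
\]
since every correction factor tends to $1$. This is precisely the claimed bound $P_1(\varepsilon,n)$ in the $\lesssim$ sense, completing the argument.

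The main obstacle, though a mild one, is the offset between the true mean $\mu_n = n/(n-2)$ and the value $1$ about which the probability is measured: a naive application of Chebyshev directly around $1$ would not align with the variance and would leak a spurious $O(1/n)$ term. Resolving this requires absorbing the $O(1/n)$ mean shift into the $\lesssim$ notation, which is exactly what the phrase \emph{``when $n$ is large enough''} licenses. Once this shift is accounted for, the remainder is routine asymptotic simplification of the closed-form moment expressions.
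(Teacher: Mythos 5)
Your proposal is correct and follows essentially the same route as the paper: both rest on the closed-form mean and variance of the $F(n,n)$ distribution, a Chebyshev-type second-moment bound, and asymptotic simplification under the paper's $\lesssim$ convention. The only (cosmetic) difference is where the bias $\mathbb{E}F_{n,n}-1 = \tfrac{2}{n-2}$ is absorbed: the paper bounds $\mathbb{P}(|F_{n,n}-1|>\varepsilon)\le \mathbb{E}|F_{n,n}-1|^2/\varepsilon^2$ so the bias enters the numerator as $(\mathbb{E}F_{n,n}-1)^2$ added to the variance, whereas you center Chebyshev at the mean and subtract the bias from the threshold in the denominator; both yield the same $\sim 4/(\varepsilon^2 n)$ bound.
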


\begin{proof}
According to the properties of F-distribution, we have
$        
	\mathbb{E}F_{n,n} = \frac{n}{n-2},
      {\rm Var}(F_{n,n}) = \frac{4n(n-1)}{(n-2)^2(n-4)}.
$
Then \eqref{eq-p1epsilon} is verified by using Chebyshev's inequality as
\begin{align*}
        \mathbb{P}(|F_{n,n}-1|>\varepsilon) 
        &\le \frac{\mathbb{E}|F_{n,n}-1|^2}{\varepsilon^2} \\
        &= \frac{{\rm Var}(F_{n,n})+(\mathbb{E}F_{n,n}-1)^2}{\varepsilon^2} \\
        &= \frac{4(n+2)}{\varepsilon^2(n-2)(n-4)}
        \sim \frac{4}{\varepsilon^2n}.
\end{align*}
\end{proof}

Now we begin the proof of Lemma \ref{L4} by introducing an assistant vector 
\begin{equation}\label{eq-L4-proof-a3}
{\bf w} = \frac{{\bf p}-\omega{\bf q}}{\sqrt{1-\omega^2}},
\end{equation}
which is orthogonal to ${\bf q}$. This can be verified by
\begin{align}
\mathbb{E}{\bf w}{\bf q}^{\rm T} &= \frac{1}{\sqrt{1-\omega^2}}\left(\mathbb{E}{\bf p}{\bf q}^{\rm T}-\omega\mathbb{E}{\bf q}{\bf q}^{\rm T}\right) \nonumber\\
&= \frac{1}{\sqrt{1-\omega^2}}\left(\frac{\omega}{n}{\bf I}_n-\frac{\omega}{n}{\bf I}_n\right) 
= {\bf 0}.\label{eq-L4-proof-Ea3Ta2}
\end{align}
Using the above orthogonality and the given condition, we further write
\begin{align*}
\mathbb{E}{\bf w}{\bf w}^{\rm T} &= \frac{1}{\sqrt{1-\omega^2}}\mathbb{E}{\bf w}{\bf p}^{\rm T} 
= \frac{1}{1-\omega^2}\left(\mathbb{E}{\bf p}{\bf p}^{\rm T}-\omega\mathbb{E}{\bf q}{\bf p}^{\rm T}\right) \\
&= \frac{1}{1-\omega^2}\left(\frac{1}{n}{\bf I}_n-\frac{\omega^2}{n}{\bf I}_n\right) 
= \frac{1}{n}{\bf I}_n.
\end{align*}
These show that ${\bf q}$ and ${\bf w}$ are independent Gaussian random vectors. Following Lemma \ref{L1}, we denote ${\|{\bf w}\|^2}/{\|{\bf q}\|^2}$ by $F_{n,n}$. 

By representing ${\bf p}$ with ${\bf w}$ using \eqref{eq-L4-proof-a3}, consequently, we have
\begin{align}
\frac{\|{\bf p}\|^2}{\|{\bf q}\|^2} &= \frac{\|\omega{\bf q}+\sqrt{1-\omega^2}{\bf w}\|^2}{\|{\bf q}\|^2}\nonumber \\
&= \omega^2+(1-\omega^2)\frac{\|{\bf w}\|^2}{\|{\bf q}\|^2}+2\omega\sqrt{1-\omega^2}\frac{\|{\bf w}\|}{\|{\bf q}\|}\cos\theta\nonumber\\
&= \omega^2+(1\!-\!\omega^2)F_{n,n}+2\omega\sqrt{1\!-\!\omega^2}\sqrt{F_{n,n}}\cos\theta, \label{eq-L4-proof-a1toa3}
\end{align}
where $\theta$ is the angle between ${\bf q}$ and ${\bf w}$.  
Recalling Lemma \ref{L1} and Lemma \ref{L3}, we have, with probability at least $1-P_1(\varepsilon_5,n) - P_3(\varepsilon_6,n)$,
\begin{align}
&\left|(1-\omega^2)(F_{n,n}-1)+2\omega\sqrt{1-\omega^2}\sqrt{F_{n,n}}\cos\theta\right| \nonumber \\
\le& (1-\omega^2)\varepsilon_5+2\omega\sqrt{1-\omega^2}\sqrt{1+\varepsilon_5}\varepsilon_6, \label{eq-L4-proof-asum}\\
\sim& (1-\omega^2)\varepsilon_5+2\omega \sqrt{1-\omega^2}\varepsilon_6, \label{eq-L4-proof-asum1}
\end{align}
where $\sqrt{1+\varepsilon_5}$ in \eqref{eq-L4-proof-asum} comes from \eqref{eq-p1epsilon} and it is then approximated by one in \eqref{eq-L4-proof-asum1} because $\varepsilon_5$ is a small quantity.

Now we will use Lemma \ref{lemma-simp} to simplify \eqref{eq-L4-proof-asum1}. Because $P_3(\varepsilon_6,n)$ decreases exponentially with respect to $n$ and the condition of Lemma \ref{lemma-simp} is satisfied, we have, when $n$ is large,
\begin{align}
        &\mathbb{P}\left(\left|(1\!-\!\omega^2)(F_{n,n}\!-\!1)\!+\!2\omega\sqrt{1\!-\!\omega^2}\sqrt{F_{n,n}}\cos\!\theta\right|\!>\!(1\!-\!\omega^2)\varepsilon\right) \nonumber\\
        &\lesssim  P_1(\varepsilon_5,n) + P_3(\varepsilon_6,n) 
        \sim P_1(\varepsilon,n) = \frac{4}{\varepsilon^2n}.\label{eq-L4-proof-epsilon12}
\end{align}

Combining \eqref{eq-L4-proof-a1toa3} and \eqref{eq-L4-proof-epsilon12}, the proof is completed.

\subsection{Proof of Lemma \ref{C1}}
\label{proof-C1}

Lemma \ref{C1} is a corollary of the following Lemma.

\begin{lem}\label{L2}
Let ${\bf w}={\bf a}/{\|{\bf a}\|}$, where ${\bf a} \in\mathbb{R}^n$ is a Gaussian random vector. For any support ${\set T}\subset [1:n]$ with cardinality $d := |{\set T}|$, we have
\begin{equation}\label{eq-p2epsilon}
    \mathbb{P}\left(\left|\|{\bf w}_{\set T}\|^2 - \frac{d}{n}\right|>\varepsilon\right) 
    < \frac{2d}{\varepsilon^2n^2} = P_2(\varepsilon,n),
\end{equation}
where ${\bf w}_{\set T}$ is composed by the entries of $\bf w$ supported on $\set T$.
\end{lem}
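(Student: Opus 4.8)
The plan is to follow the same template as the proof of Lemma~\ref{L1}: pin down the exact distribution of $\|{\bf w}_{\set T}\|^2$, extract its mean and variance, and then apply Chebyshev's inequality. First I would write the quantity as a ratio of sums of squared entries of ${\bf a}$,
\[
\|{\bf w}_{\set T}\|^2 = \frac{\|{\bf a}_{\set T}\|^2}{\|{\bf a}\|^2} = \frac{\|{\bf a}_{\set T}\|^2}{\|{\bf a}_{\set T}\|^2 + \|{\bf a}_{{\set T}^c}\|^2},
\]
where ${\set T}^c = [1:n]\setminus{\set T}$. Because the entries of ${\bf a}$ are i.i.d.\ zero-mean Gaussian, the numerator and the second term in the denominator are, up to the common scaling $1/n$, independent chi-squared variables with $d$ and $n-d$ degrees of freedom. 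Hence $\|{\bf w}_{\set T}\|^2$ follows a Beta distribution with parameters $(d/2,(n-d)/2)$, the normalization having cancelled the entrywise variance entirely. This is the direct analogue of the F-distribution that governs $F_{n,n}$ in Lemma~\ref{L1}.

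I would then read the first two moments off the standard Beta formulas. Writing $\alpha = d/2$ and $\beta = (n-d)/2$ so that $\alpha+\beta = n/2$, the mean equals $\alpha/(\alpha+\beta) = d/n$, which is exactly the centering constant in the statement, and the variance equals
\[
\frac{\alpha\beta}{(\alpha+\beta)^2(\alpha+\beta+1)} = \frac{2d(n-d)}{n^2(n+2)}.
\]
An alternative that sidesteps naming the Beta distribution is to use the rotational invariance of ${\bf w}$ on the unit sphere together with $\mathbb{E}w_i^2 = 1/n$, $\mathbb{E}w_i^4 = 3/(n(n+2))$, and $\mathbb{E}w_i^2 w_j^2 = 1/(n(n+2))$ for $i\ne j$, which reproduces the same two numbers.

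Applying Chebyshev's inequality then gives
\[
\mathbb{P}\left(\left|\|{\bf w}_{\set T}\|^2 - \frac{d}{n}\right| > \varepsilon\right) \le \frac{2d(n-d)}{n^2(n+2)\varepsilon^2},
\]
and I would close by noting the elementary bound $(n-d)/(n+2) < 1$, valid for every $d \ge 1$, which upgrades the right-hand side to the claimed $P_2(\varepsilon,n) = 2d/(\varepsilon^2 n^2)$ as a strict inequality holding for \emph{all} $n$ rather than only asymptotically. I do not expect a genuine obstacle here: the only delicate point is getting the variance right, and recognizing the Beta structure delivers it in closed form, so the argument is as routine as the companion Lemma~\ref{L1}.
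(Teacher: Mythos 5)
Your proposal is correct and follows essentially the same route as the paper: compute $\mathbb{E}\|{\bf w}_{\set T}\|^2 = d/n$ and ${\rm Var}(\|{\bf w}_{\set T}\|^2) = \frac{2d(n-d)}{n^2(n+2)}$, apply Chebyshev's inequality, and relax via $(n-d)/(n+2)<1$. The only difference is that you make explicit (via the Beta$(d/2,(n-d)/2)$ identification or sphere moments) the calculation that the paper summarizes as ``By calculating,'' and your observation that the bound holds for all $n$, not just asymptotically, is consistent with the paper's strict inequality.
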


\begin{proof}
By calculating, we have
$
        \mathbb{E}\|{\bf w}_{\set T}\|^2 = \frac{d}{n},
        {\rm Var}\left(\|{\bf w}_{\set T}\|^2\right) = \frac{2d(n-d)}{n^2(n+2)}.
$
Then \eqref{eq-p2epsilon} is verified by using Chebyshev's inequality as
\begin{align*}
    \mathbb{P}\left(\left|\|{\bf w}_{\set T}\|^2 - \frac{d}{n}\right|>\varepsilon\right)
    \le \frac{2d(n-d)}{\varepsilon^2n^2(n+2)}
    < \frac{2d}{\varepsilon^2n^2}.
\end{align*}
\end{proof}

Recalling the definition of $\bf w$ in Lemma \ref{L2}, we may let $w_i = \cos\phi_i$, where $\phi_i$ denotes the angle between $\bf w$ and the $i$th coordinate axis, ${\bf e}_i$. Because the relation of $\bf w$ with respect to ${\bf E} = [{\bf e}_1,\cdots,{\bf e}_d]$ and that of ${\bf u}$ with respect to ${\bf V}$ are exactly identical, \eqref{eq-p2epsilon1} can be readily verified.

\subsection{Proof of Lemma \ref{lem-math}}
\label{proof-lem-math}

Suppose that $\mathcal S$ is a $d$-dimensional subspace in $\mathbb{R}^n$.
The columns of ${\bf A}=\left[{\bf a}_1,{\bf a}_2,\dots,{\bf a}_d\right]\in\mathbb{R}^{n\times d}$
constitute a basis of $\mathcal S$.
We normalize the columns of ${\bf A}$ and obtain a normal basis of $\mathcal S$ as the following
\begin{equation}\label{eq_L1_0}
	\bar{\bf A} = \left[\bar{\bf a}_1,\bar{\bf a}_2,\dots,\bar{\bf a}_d\right] = \left[\frac{{\bf a}_1}{\|{\bf a}_1\|},\frac{{\bf a}_2}{\|{\bf a}_2\|},\dots,\frac{{\bf a}_d}{\|{\bf a}_d\|}\right].
\end{equation}
Furthermore, we can apply Gram–Schmidt process on $\bar{\bf A}$ to obtain an orthonormal basis of $\mathcal S$ as the following
\begin{equation}\label{eq_L1_1}
{\bf v}_i = \frac{\tilde{\bf v}_i}{\|\tilde{\bf v}_i\|}, \quad i = 1,2,\dots,d,
\end{equation}
where
\begin{equation}\label{eq_L1_2}
\tilde{\bf v}_i =\bar{\bf a}_i-\sum_{m=1}^{i-1} \left(\bar{\bf a}_i^{\rm T}{\bf v}_m\right){\bf v}_m, \quad i = 1,2,\dots,d.
\end{equation}
Notice that the index $i$ in \eqref{eq_L1_2} should start from $1$
and increase to $d$.
We denote the matrix $\left[{\bf v}_1,{\bf v}_2,\dots,{\bf v}_d\right]$ as $\bf V$.

We have ${\bf V}=\bar{\bf A}\bar{\bf G}$, where $\bar{\bf G}$ is an upper triangular matrix. Accordingly, $\bar{\bf U}=\bar{\bf G}-{\bf I}$ is also upper triangular and 
\begin{equation}\label{eq_L1_3}
{\bf v}_i = \bar{\bf a}_i+\sum_{j=1}^i \bar u_{ji}\bar{\bf a}_j.
\end{equation}
Using \eqref{eq_L1_3} and \eqref{eq_L1_2} in \eqref{eq_L1_1}, we have
\begin{equation}\label{eq_L1_4}
        {\bf v}_i = \frac{1}{\|\tilde{\bf v}_i\|} \bigg(\bar{\bf a}_i-\sum_{m=1}^{i-1} \bar{\bf a}_i^{\rm T}{\bf v}_m\bigg(\bar{\bf a}_m+\sum_{j=1}^m \bar u_{jm}\bar{\bf a}_j\bigg)\bigg).
\end{equation}
By switching the order of the summations, \eqref{eq_L1_4} can be reformulated as
\begin{align}
        {\bf v}_i &= \frac{1}{\|\tilde{\bf v}_i\|} \bigg(\bar{\bf a}_i 
        - \sum_{j=1}^{i-1}\bigg(\bar{\bf a}_i^{\rm T}{\bf v}_j+\sum_{m=j}^{i-1}\left(\bar{\bf a}_i^{\rm T}{\bf v}_m\right)\bar u_{jm}\bigg)
        \bar{\bf a}_j\bigg)\nonumber\\
        &= \frac{\bar{\bf a}_i}{\|\tilde{\bf v}_i\|} 
        - \sum_{j=1}^{i-1}\frac{\bar{\bf a}_i^{\rm T}{\bf v}_j+\sum_{m=j}^{i-1}\left(\bar{\bf a}_i^{\rm T}{\bf v}_m\right)\bar u_{jm}}{\|\tilde{\bf v}_i\|}\bar{\bf a}_j. \label{eq_L1_5}
\end{align}
Comparing \eqref{eq_L1_3} and \eqref{eq_L1_5}, we readily get
\begin{align}
	\bar u_{ii} &= \frac{1}{\|\tilde{\bf v}_i\|} - 1,\quad \forall i, \label{eq_L1_6}\\
	\bar u_{ji} &= - \frac{1}{\|\tilde{\bf v}_i\|}\bigg(\bar{\bf a}_i^{\rm T}{\bf v}_j+\sum_{m=j}^{i-1}\left(\bar{\bf a}_i^{\rm T}{\bf v}_m\right)\bar u_{jm}\bigg), \,\,\, \forall j < i. \label{eq_L1_7}
\end{align}
We will first study \eqref{eq_L1_6} and then turn to \eqref{eq_L1_7}. 
Plugging \eqref{eq_L1_2} into \eqref{eq_L1_6} and noticing that both $\bar{\bf a}_i$ and ${\bf v}_m$ have been normalized, we have 
\begin{align}\label{eq_L1_9}
	\bar u_{ii} &= \frac{1}{\|\bar{\bf a}_i-\sum_{m=1}^{i-1} \left(\bar{\bf a}_i^{\rm T}{\bf v}_m\right){\bf v}_m\|}-1\nonumber\\
	&= \frac{1}{\sqrt{1 - \sum_{m=1}^{i-1} \left(\bar{\bf a}_i^{\rm T}{\bf v}_m\right)^2}} - 1. 
\end{align}
According to the Taylor's series with Peano form of the remainder, i.e.,
$
f(x) = \frac{1}{\sqrt{1-x}} = 1 + \frac{x}{2} + h(x)x,
$
where $\lim_{x \to 0} h(x) = 0$, \eqref{eq_L1_9} is approximated by
\begin{equation}
	\bar u_{ii} = \Big(\frac12+h(\cdot)\Big) \sum_{m=1}^{i-1} \left(\bar{\bf a}_i^{\rm T}{\bf v}_m\right)^2, \label{eq_L1_10}
\end{equation}
where $h\left(\sum_{m=1}^{i-1} \left(\bar{\bf a}_i^{\rm T}{\bf v}_m\right)^2\right)$ is denoted by $h(\cdot)$ for short.
Following \eqref{eq_L1_3} and using the definition of $\bar{\bf R}$, for $m<i$ we have
\begin{equation}\label{eq_L1_11}
        \bar{\bf a}_i^{\rm T}{\bf v}_m = \bar{\bf a}_i^{\rm T}\bar{\bf a}_m + \sum_{k=1}^m \bar u_{km}\bar{\bf a}_i^{\rm T}\bar{\bf a}_k 
        = \bar r_{mi} + \sum_{k=1}^m \bar u_{km}\bar r_{ki}. 
\end{equation}
Plugging \eqref{eq_L1_11} into \eqref{eq_L1_10}, we have
\begin{align}
	\bar u_{ii} &= \left(\frac12+h(\cdot)\right) \sum_{m=1}^{i-1} \bigg(\bar r_{mi} + \sum_{k=1}^m \bar u_{km}\bar r_{ki}\bigg)^2\nonumber\\
	&=  \left(\frac12+h(\cdot)\right) \bigg(\sum_{m=1}^{i-1}\bar r_{mi}^2 + \sum_{m=1}^{i-1}\bigg(\bigg(\sum_{k=1}^m \bar u_{km}\bar r_{ki}\bigg)^2 
	+ 2\sum_{k=1}^m \bar u_{km}\bar r_{mi}\bar r_{ki}\bigg)\bigg). \label{eq_L1_12}
\end{align}
Because of the symmetry of $\bar{\bf R}$, the first summation in the RHS of \eqref{eq_L1_12} is bounded by $\frac12\|\bar{\bf R}\|_F^2$. Furthermore, the second summation, which is composed of squares and products of $\bar r_{pq}$, must be bounded by $\epsilon_1\|\bar{\bf R}\|_F^2$, where $\epsilon_1$ is a small quantity. 
Consequently, we have
\begin{equation}
	\bar u_{ii} = \bar g_{ii}(\bar{\bf R})\|\bar{\bf R}\|_F^2 \le  \Big(\frac12+h(\cdot)\Big) \Big(\frac12 + \epsilon_1\Big)\|\bar{\bf R}\|_F^2, \label{eq_L1_13}
\end{equation}
where
$
	\lim_{\bar{\bf R}\rightarrow {\bf 0}}\bar g_{ii}(\bar{\bf R}) \le \frac14
$.
Because $h(\cdot)$ tends to $0$ as $\bar{\bf R}$ approaches $\bf 0$. We then complete the first part of the lemma.

Next we will study \eqref{eq_L1_7}. Plugging \eqref{eq_L1_6} and \eqref{eq_L1_11} into \eqref{eq_L1_7}, we have
\begin{align}
	\bar u_{ji} &= -(1+\bar u_{ii})\bigg(\bar r_{ji} + \sum_{k=1}^j \bar u_{kj}\bar r_{ki}
	 +\sum_{m=j}^{i-1}\bigg(\bar r_{mi} + \sum_{l=1}^m \bar u_{lm}\bar r_{li}\bigg)\bar u_{jm}\bigg)\nonumber\\
	&= -(1+\bar u_{ii})\bigg(\bar r_{ji} + \sum_{k=1}^j \bar u_{kj}\bar r_{ki}+\sum_{m=j}^{i-1}\bar u_{jm}\bar r_{mi}
	+\sum_{m=j}^{i-1}\sum_{l=1}^m \bar u_{lm}\bar u_{jm}\bar r_{li}\bigg), \quad\forall j<i.\label{eq_L1_15}
\end{align}
Notice that the summations in \eqref{eq_L1_15}, which are composed of $\bar r_{pq}$, must be bounded by $\epsilon_2\|\bar{\bf R}\|_F$, where $\epsilon_2$ is a small quantity. 
Plugging \eqref{eq_L1_13} into \eqref{eq_L1_15}, we have
\begin{align}\label{eq_L1_16}
	\bar u_{ji} &= -\left(1+\bar g_{ii}(\bar{\bf R}\right)\|\bar{\bf R}\|_F^2)\left(\bar r_{ji} + \epsilon_2\|\bar{\bf R}\|_F\right)\nonumber\\
	&= -\bar r_{ji} + \bar g_{ji}(\bar{\bf R})\|\bar{\bf R}\|_F,
\end{align}
where
$
	\lim_{\bar{\bf R}\rightarrow{\bf 0}}\bar g_{ji}(\bar{\bf R}) = 0.
$
The second part of the lemma is proved.

\bibliographystyle{IEEEtran}
\bibliography{mybibfile}

\clearpage

\end{document}